\newcommand*\patchAmsMathEnvironmentForLineno[1]{%
\expandafter\let\csname old#1\expandafter\endcsname\csname #1\endcsname
\expandafter\let\csname oldend#1\expandafter\endcsname\csname end#1\endcsname
\renewenvironment{#1}%
{\linenomath\csname old#1\endcsname}%
{\csname oldend#1\endcsname\endlinenomath}%
}
\newcommand{\dd}{\mathop{}\!\mathrm{d}}
\DeclareMathOperator*{\e}{e}
\DeclareMathOperator{\diag}{diag}
\newcommand{\R}{\mathbb{R}}
\newcommand{\N}{\mathbb{N}}
\newcommand{\C}{\mathbb{C}}
\setlist[enumerate,1]{label=\textnormal{(\emph{\roman*})}}
\theoremstyle{thmstyleone}%
\newtheorem{theorem}{Theorem}
\newtheorem{proposition}[theorem]{Proposition}
\newtheorem{corollary}[theorem]{Corollary}
\theoremstyle{thmstyletwo}%
\newtheorem{remark}{Remark}%
\theoremstyle{thmstylethree}%
\theoremstyle{definition}
\newtheorem{assumption}{Assumption}
\begin{document}

\title[A model for mosquito-borne epidemic outbreaks with information-dependent protective behaviour]{A model for mosquito-borne epidemic outbreaks with information-dependent protective behaviour}

\author*[1]{\fnm{Simone} \sur{De Reggi}}\email{simone.dereggi@unitn.it}


\author[1]{\fnm{Andrea} \sur{Pugliese}}\email{andrea.pugliese@unitn.it}


\author[1]{\fnm{Mattia} \sur{Sensi}}\email{mattia.sensi@unitn.it}


\author[1]{\fnm{Cinzia} \sur{Soresina}}\email{cinzia.soresina@unitn.it}

\affil*[1]{\orgdiv{Department of Mathematics}, \orgname{University of Trento}, \orgaddress{\street{Via Sommarive 14}, \city{Povo}, \postcode{38123}, \state{Trento}, \country{Italy}}}

\abstract{We investigate a model for a mosquito-borne epidemic in which human hosts may adopt protective behaviour against vector bites in response to information on both past and current disease prevalence. Assuming that mosquitoes can also feed on non-competent hosts (i.e.\ hosts that do not contribute to disease transmission), we first revisit existing results and show that behaviour-driven protection may either decrease or increase the basic reproduction number, depending on the interaction between behavioural response, host composition, and transmission parameters. Assuming that opinion dynamics evolves on a much faster time scale than disease transmission, we then apply Geometric Singular Perturbation Theory to effectively reduce the original two-group model to a model for a homogeneous host population. The reduced system enables a detailed investigation of the impact of information-induced behavioural changes on the transient dynamics of the epidemic, including scenarios in which protective measures lead to outbreaks with low attack rates. Our analysis shows that behavioural responses may either facilitate epidemic control or prolong disease persistence, potentially generating recurrent damped epidemic waves. Numerical simulations are provided to illustrate and support the analytical findings.}

\keywords{behavioural epidemiology, vector-borne diseases, host heterogeneity, Geometric Singular Perturbation Theory, information index}


\pacs[MSC Classification]{34E13, 34E15, 34C60, 37N25, 92D30}

\maketitle

\section{Introduction}
In recent years, there has been growing concern over the rise in mosquito-borne diseases in both endemic and non-endemic areas~\citep{ECDCmosquitoborne, WHOVBD}. 
Malaria, dengue, Chikungunya, Japanese encephalitis, West Nile virus, yellow fever, and Zika are among the most prominent examples of mosquito-borne infections. These diseases are transmitted to humans through the bites of infected mosquitoes carrying pathogens (viruses or other parasites), and represent major threats to global public health~\citep{WHOVBD}. Factors such as climate change, globalisation, and urbanisation have strongly favoured the spread of many of these infections, such as dengue and Chikungunya, to non-tropical areas, creating numerous new habitats suitable for mosquito life~\citep{chala2021emerging}.

Mathematical models play a crucial role in understanding the spread of mosquito-borne infections and in assessing the potential impact of control strategies. The classical Ross--Macdonald framework provides the foundation for much of the mathematical modelling of such diseases; see, for instance, the recent review \cite{PuglieseReview}. Originally proposed by Ronald Ross between 1908 and 1911 to investigate the effect of control measures on malaria transmission~\citep{ross1911}, the model was later extended by George Macdonald to incorporate, among other features, incubation periods, i.e.\ time delays between infection and infectiousness \citep{macdonal1957} (the role of delays had already been considered by \cite{Sharpe1923}; however, they neglected mosquito mortality during the incubation period. Moreover, Macdonald acknowledged \cite{armitage1953} for mathematical ideas related to delays; see \cite{PuglieseReview, smith2012}).
In particular, the mathematical modelling of dengue, which strongly motivates the present work, poses significant epidemiological and methodological challenges; see the recent review~\cite{aguiar2022}.

A key feature distinguishing vector-borne infections from directly transmitted ones is that epidemic containment can be achieved through interventions targeting the vector population~\citep{ogunlade2023}. This insight was already recognised by Ross in his pioneering work on malaria, where he developed arguments that anticipated the modern concept of the \emph{Basic Reproduction Number} (BRN), $R_0$; see~\cite{bacaer2011short, Heesterbeek2002, PuglieseReview}. However, effective control of mosquito-borne epidemics generally requires combining vector-targeted interventions with preventive measures adopted by humans, such as the use of repellents or bed nets. 

Theoretical investigations of the effectiveness of such strategies require mathematical models incorporating behavioural components. As clearly illustrated by the COVID-19 pandemic, human behaviour can substantially influence epidemic spread, since individuals may dynamically adjust their actions in response to available information. For example, individuals may decide whether to be vaccinated or to adopt self-protective measures, either in accordance with public health guidelines or on their own initiative.
The \emph{behavioural epidemiology} of infectious diseases addresses these aspects by incorporating behavioural dynamics into mathematical models of disease transmission~\citep{manfredi2013book}. Its origins can be traced back to the seminal work of~\cite{capasso1978}, who first proposed an SIR (Susceptible--Infectious--Removed) model in which the transmission rate was assumed to decrease as a function of the current \emph{prevalence}, that is, the total number of infected individuals~\citep{dOnofrioManfredi2009}. Since then, numerous approaches have been developed to incorporate behavioural feedback into epidemic models. We refer to the monograph ~\cite{manfredi2013book} and to the recent reviews~\cite{BedsonIntegrated, FunkReview, WangStatistical} for comprehensive overviews of this field.

Although a substantial body of literature now addresses the interplay between the spread of directly transmitted infections 
and human behaviour, comparatively less attention has been devoted to information-induced behavioural effects in vector-borne diseases. In the following paragraphs, we briefly review selected contributions in this area.

For directly transmitted infections, the adoption of preventive behaviour by a subset of individuals generally reduces contact rates across the population, thereby producing an overall beneficial effect~\citep{PuglieseReview}. Mathematically, this is typically modelled by assuming that transmission rates decrease as functions of the level of awareness within the population. In contrast, for vector-borne infections, modelling partial protective behaviour among hosts is far from straightforward, as predictions regarding its effectiveness critically depend on the assumptions governing vector--host ecological interactions~\citep{mccallum2001should, thongsripong2021}. In particular, different assumptions on how mosquito biting rates depend on host availability may lead to substantially different predictions concerning the probability of epidemic invasion and the effectiveness of control strategies~\citep{demers2018dynamic}; see also~\cite{wonham2006transmission}.

As an example, for vector-borne infections, the adoption of self-protective measures by a subset of individuals may divert mosquito bites towards unprotected hosts~\citep{killeen2007exploring, moore2007mosquitoes}, thereby increasing the likelihood of mosquito--host--mosquito transmission cycles~\citep{demers2018dynamic, killeen2007exploring, miller2016risk}. More generally, partially adopted protective behaviour may enhance host heterogeneity, as individuals using repellents or bed nets become less attractive to mosquitoes than those who remain unprotected~\citep{killeen2007exploring, miller2016risk}.

\cite{dye1986} showed that host heterogeneity in vector-borne epidemics may, perhaps counter-intuitively, increase the value of $R_0$, thereby enhancing the invasion potential of the disease~\citep{PuglieseReview}. At the same time, they demonstrated that the final epidemic size may decrease under such heterogeneity~\citep{dye1988}.
This phenomenon was revisited by \cite{miller2013effects}, who considered a simple vector--host model with two host classes and investigated the effects of host diversity, transmission competence, and vector preference on~$R_0$. In a subsequent work, \cite{miller2016risk} analysed a one-vector--one-host model in which the host population is statically divided into two subgroups: individuals adopting protective measures and individuals who do not. 
They found that, depending on parameter values, the use of protective measures by only a fraction of the host population may increase $R_0$, thereby exacerbating the risk of epidemic invasion. On the other hand, if protective measures such as bed nets or repellents significantly increase the time mosquitoes spend attempting to bite protected individuals, then a beneficial effect may also extend to the unprotected subgroup.

Recently, \cite{demers2018dynamic} investigated, primarily through numerical simulations and analysis of the basic reproduction number, the potential effects of behavioural changes during the early phase of a mosquito-borne outbreak, including limiting cases of very small or very large behavioural switching rates.  
The authors showed numerically
that incorporating behavioural changes can lead to a lower $R_0$ compared with a model with static behaviour. 
Building on this framework, subsequent studies \citep{cruz2021,roosa2022} have proposed models of increasing complexity to explore more elaborate scenarios involving behavioural adaptation. 
However, most of these contributions rely predominantly on numerical simulations and provide limited analytical insight into the impact of (possibly information-induced) protective behaviour on epidemic spread, beyond considerations based on the basic reproduction number.

The idea of information-dependent behavioural switching in vector--host models has also been explored in~\cite{misra2013mathematical} and, more recently, in~\cite{hu2023stability},
where it is assumed that only the susceptible host subpopulation is divided into protected (with perfect protection) and unprotected individuals, with behavioural changes driven by exponentially waning information on infection prevalence. Incorporating host demography and thus focusing on an endemic setting, the authors showed that an endemic equilibrium exists whenever $R_0>1$ and may become unstable via a Hopf bifurcation under suitable conditions, leading to the emergence of periodic solutions. The possibility of oscillatory, or even more complex, dynamics in epidemic models with information-induced behavioural responses has been widely discussed in the literature; see, for instance,~\cite{dOnofrioManfredi2009, d2022behavioral, manfredi2013book, poletti2009spontaneous, zhang2023renewal}.

In the literature, it is usual to assume that epidemiological dynamics evolves either much more slowly or much more rapidly than behavioural changes in the host population; see, for example,~\cite{della2024geometric, poletti2009spontaneous}. When such models are formulated as systems of Ordinary Differential Equations (ODEs), a powerful framework for analysing the qualitative behaviour under time-scale separation is the so-called \emph{Geometric Singular Perturbation Theory} (GSPT), originating from the seminal work of Neil Fenichel~\citep{fenichel}. GSPT has been widely applied to the study of natural systems characterised by interacting mechanisms evolving on distinct time scales; see, in particular,~\cite{hek2010geometric,kuehn} for biologically motivated examples. Under the assumption of separated time scales, perturbed ODE systems can be analysed in suitable singular limits that describe the fast and slow dynamics separately, and the resulting reduced problems provide insight into the behaviour of the full system. This approach has been employed in several epidemic models~\citep{jardon2021geometric,jardon2021geometric2,kaklamanos2024geometric}, especially in settings where information (as well as misinformation and opinions) spreads faster than the disease itself~\citep{bulai2024geometric,della2024geometric,schecter2021geometric}. It is worth noting that behavioural and epidemiological processes are not the only mechanisms operating on different time scales. In mosquito-borne diseases, demographic processes in the vector population typically occur on the same time scale as epidemic transmission and therefore cannot be neglected, whereas demographic changes in the human population, as well as waning disease-induced immunity, often evolve much more slowly and may be disregarded in early outbreak scenarios.

Motivated by the recent rise of autochthonous dengue cases in Europe~\citep{ECDCdengue}, we propose a model for a mosquito-borne outbreak in which host individuals may adopt protective behaviour in response to the information available on the current state of the epidemic. In contrast to~\cite{misra2013mathematical, hu2023stability}, we focus on an outbreak scenario and therefore neglect host demography, waning immunity, and seasonal effects in the vector population. Following~\cite{miller2016risk}, we assume that the host population is divided into protected and unprotected subclasses, with individuals dynamically switching behaviour~\cite{demers2018dynamic} at rates depending on epidemic information. To model this mechanism, we adopt the \emph{information index} approach introduced by d'Onofrio, Manfredi, and collaborators, whereby the delayed influence of past infections on current perception is represented through a memory kernel, which may be either distributed or concentrated~\citep{ando2020}. This information may depend on prevalence~\citep{dOnofrioManfredi2009, d2007vaccinating} or on \emph{incidence}, namely the number of new cases per unit time~\citep{d2022behavioral}. Consistently, with the structure of the resulting ODE system, we assume that the human population is closed and constant in size, with no births or deaths. Furthermore, we allow vectors to feed on a non-human host population that does not contribute to transmission~\citep{esteva1998, nishiura2006}. For instance, it has been found that \emph{Aedes albopictus} (main vector of dengue and Chikungunya in Europe) feeds predominantly on humans in urban settings (79--96\% of blood meals), whereas in rural environments humans account for only 23--55\%, with horses and bovines representing common alternative hosts~\citep{valerio2010}.

The paper is organised as follows. In Section~\ref{sec:model}, we introduce the baseline vector-borne model without information and analyse its main properties and asymptotic behaviour. Section~\ref{sec:model+info} incorporates information-induced behavioural changes through an information index based on a memory kernel; specialising to Erlang-distributed kernels, we derive the corresponding reproduction number. In Section~\ref{sec:GSPT1}, we further assume a separation of time scales and apply GSPT to investigate the transient dynamics of the system and present numerical simulations illustrating the analytical results in epidemiologically relevant scenarios. In Section~\ref{Sec:lowattackratio}, we examine the model towards the end of a first outbreak, thus assuming a low attack rate. We investigate,  for selected memory kernels, the existence and stability of equilibria of the reduced system; these correspond to a slow decline of incidence in the original model. Finally, Section~\ref{sec:conclusions} summarises the main findings and outlines possible extensions of the model.

\section{Vector-host model with static protective behaviour}\label{sec:model}

In this section, we introduce the baseline model (a slight extension of the Ross--Macdonald framework incorporating non-human hosts). This model will later be extended to account for a division of the human population into individuals who adopt protective measures against mosquito bites and those who do not. At this stage, behaviour is assumed to be fixed. In the following section, we will allow individuals, irrespective of their epidemiological status, to switch dynamically between behavioural states.

Let $H$ denote the total human population size. We assume $H$ is constant over time (a closed population with no births or deaths). We also assume that the vector can take blood meals from a non-human host population whose individuals do not participate in transmission, i.e. a bite on a non-human host is ``wasted'' from the standpoint of infection spread. 
We will refer to $L$ as the ``effective'' population size of ``non-competent'' hosts; thus, $L/(H+L)$ is the probability that a mosquito bite falls on a ``non-competent'' host, and $H/(H+L)$ is the probability that it falls on a human host.

As for the epidemiological dynamics, we assume an SIR--SI framework. Humans follow an SIR model without demography, i.e. we assume no births or deaths, and permanent immunity after recovery. Mosquitoes follow an SI model with demography since their average lifespan is orders of magnitude shorter than that of humans.
For simplicity, we neglect the incubation periods in both mosquitoes and hosts, during which individuals are infected but not yet infectious. It is well-known that including the extrinsic incubation period (the one in mosquitoes) reduces the value of $R_0$, while having only minor effects on the qualitative dynamics of the system~\citep{macdonal1957,PuglieseReview}.

Thus, in the absence of protective behaviour, the dynamics is described by the following system of ODEs~\citep{esteva1998, nishiura2006}: 
\begin{equation}\label{SIR-SI}
	\left\{\setlength\arraycolsep{0.1em}
	\begin{array}{rl} 
		S_H' &= -bp_{H\leftarrow M}I_M \cfrac{S_H}{H+L},\\[3mm]
		I_H' &= bp_{H\leftarrow M}I_M \cfrac{S_{H} }{H+L} - \gamma  I_{H} ,\\[5mm]
		R_H'&=\gamma  I_H\\[3mm]
		S_M' &= \Lambda-bp_{M\leftarrow H}S_M \cfrac{I_{H} }{H+L}-\mu S_M,\\[3mm]
		I_M' &= bp_{M\leftarrow H}S_M \cfrac{I_{H}}{H+L} -\mu  I_M,
	\end{array} 
	\right.
\end{equation}
where we denote with $S_H(t),\, I_H(t),\, R_H(t)$ the number of human individuals who are susceptible, infectious, and removed, respectively, at time $t\ge 0$, and with $S_M(t),\, I_M(t)$ the number of mosquitoes who are susceptible and infectious, respectively, at time $t\ge 0$. 
The parameter $b$ is the constant per-capita mosquito biting rate, assumed to be independent of the mosquito’s infection status. The quantities~$p_{H\leftarrow M},\, p_{M \leftarrow H}\in [0, 1]$ are the probabilities of transmission per bite from mosquitoes to humans and from humans to mosquitoes, respectively.
The parameter $\gamma>0$ is the per-capita human recovery rate (so that the average infectious period is $1/\gamma$). The parameters $\Lambda$ and $\mu$ are the mosquito recruitment rate and per-capita death rate, respectively (hence the average mosquito lifespan is $1/\mu$). 

For the mosquito population dynamics, we assume for simplicity that $\Lambda = \mu M$. This implies~$M'=S_M'+I_M'=0$, so that the total mosquito population $M$ remains constant.
Similarly,~$H'=S'_H+I'_H+R'_H=0$. Hence, the equation for $R_H$ can be omitted since $R_H(t)\equiv H-S_H(t)-I_H(t)$, for all $t\ge 0$.

\begin{remark}\label{rem:other_L}
	Recall that $L$ represents non-human targets for mosquito bites.
	An alternative interpretation of the constant $L$ is a ``penalty term'' in the expression for the mean time mosquitoes spend searching for human hosts; for instance, it may reflect the spatial separation among humans. Then, by defining $\zeta\coloneqq L^{-1}$ and 
	$$\hat b\coloneqq \cfrac{b H}{H+L}=\cfrac{b\zeta H}{1+\zeta H},$$ 
	we can interpret $\hat b$ as the effective mosquito biting rate, determined by a Holling-type II functional response, as in~\cite{demers2018dynamic, miller2016risk, yakob2016biting}. 
	The force of infection acting on humans can then be written as
	\begin{equation*}
		\lambda_{H\leftarrow M}\coloneqq  \hat b\ \cfrac{I_M}{H}.
	\end{equation*}
	In summary, $L$ can be viewed as accounting for the effect on the average mosquito searching time of both the presence of non-competent hosts and the spatial distribution (or mutual distance) of humans. 
\end{remark}

\subsection{Protective behaviour}
To incorporate protective behaviour into model \eqref{SIR-SI}, following recent works in the literature~\citep{demers2018dynamic, miller2016risk, roosa2022}, we assume that the host population $H$ can be partitioned as $H=H_{P}+H_{NP}$. Here, $H_{P}=pH$, with $p\in [0,1]$, represents the subpopulation of \emph{protected} individuals, i.e. those adopting protective behaviour, while $H_{NP}=(1-p)H$ represents those who do not (the subpopulation of \emph{non-protected} individuals). 

To model the efficacy of protective behaviour, we assume that protected individuals are relatively less exposed to mosquito bites compared to non-protected individuals. Let $q\in [0, 1]$ denote the probability of protection failure (see, for instance,~\cite{roosa2022}). The probability that a mosquito bites a protected individual in the group $H_{P}$ is
\begin{equation}\label{VBsystem4}
	P_b^{H_P}(p,q)=\frac{q H_{P}}{q H_{P}+H_{NP}+L}=\frac{pqH}{c(p, q)H+L},
\end{equation}
where 
\begin{equation}\label{c(p,q)}
	c(p,q)\coloneqq 1-p(1-q).    
\end{equation}
Note that 
$P_b^{H_P}(p,q)< P_b^{H_P}(p,1)$, for $p\in (0,1),\ q\in[0,1)$,
i.e. the probability that an individual is bitten by a mosquito is effectively smaller if they adopt effective protection measures. Conversely, for the probability that a mosquito bites a non-protected individual in the group $H_{NP}$ or a non-human target $L$, we have
\begin{equation*}
	P_b^{H_{NP}}(p,q)\coloneqq \frac{(1-p)H}{c(p,q)H+L}>P_b^{H_{NP}}(p,1),\quad\text{for}\quad p\in (0,1),\ q\in[0,1),
\end{equation*}
and
\begin{equation*}
	\frac{L}{c(p,q)H+L}> \frac{L}{H+L}\quad\text{for}\quad p\in (0,1),\ q\in[0,1),
\end{equation*}
respectively. Then, taking human protective behaviour into account, the model~\eqref{SIR-SI} becomes
\begin{equation}\label{VHBnumbers}
	\left\{\setlength\arraycolsep{0.1em}
	\begin{array}{rl} 
		S_{P}' &= -\beta_{H\leftarrow M}I_M \cfrac{q S_{P} }{c(p,q)H+L},\\[3mm]
		I_{P}' &= \beta_{H\leftarrow M}I_M \cfrac{q S_{P} }{c(p,q)H+L} - \gamma  I_{P} ,\\[3mm]
		S_{NP}' &=  -\beta_{H\leftarrow M}I_M \cfrac{S_{NP} }{c(p,q)H+L},\\[3mm]
		I_{NP}' &= \beta_{H\leftarrow M}I_M \cfrac{S_{NP} }{c(p,q)H+L} - \gamma  I_{NP},\\[3mm]
		I_M' &= \beta_{M\leftarrow H}(M-I_M) \cfrac{q I_{P} +I_{NP} }{c(p,q)H+L} -\mu I_M,
	\end{array} 
	\right.
\end{equation}
where now $S_P, I_P$ denote the susceptible and infectious individuals who adopt protective behaviour, while $S_{NP}, I_{NP}$ denote the susceptible and infectious individuals who do not. Moreover, 
$$\beta_{H\leftarrow M}\coloneqq b p_{H\leftarrow M}, \qquad \beta_{M\leftarrow H}\coloneqq b p_{M\leftarrow H},$$
where $b$ is the biting rate and $p_{H\leftarrow M},\,p_{M\leftarrow H}$ the transmission probabilities per bite.

\begin{remark}\label{remark_no_handling}
	In system~\eqref{VHBnumbers}, 
	we assume that protective behaviour does not substantially alter the mosquito handling time; that is, the time required for a mosquito to attempt or complete a bite is assumed to be the same for protected and non-protected individuals. Consequently, protection affects only the probability of successful biting (through the factor $q$), and not the biting rate itself.
\end{remark}
\begin{remark}
	Consider the case $L=0$.  
	We note that 
	neither of the limits
	$$\lim_{(p, q)\to (1^-, 0^+)}P_b^{H_P}(p, q) \;\;\hbox{ and } \;\;\lim_{(p, q)\to (1^-, 0^+)}P_b^{H_{NP}}(p, q)$$
	exists.
	This ``inconsistency'' in the model is expected, as we are assuming that the number of available hosts is zero while the mosquito biting rate remains fixed; that is, mosquitoes are assumed to bite a fixed number of hosts per unit time regardless of host availability. Clearly, this situation cannot occur when~$L>0$.
\end{remark}

In the following, we argue in terms of the fractions $S_P/H$, $I_P/H$, $S_{NP}/H$, $I_{NP}/H$, and $I_M/M$, which, for later convenience, will still be denoted by the same variables that have been used so far.
The resulting model reads
\begin{equation}\label{VBH}
	\left\{\setlength\arraycolsep{0.1em}
	\begin{array}{rl} 
		S_{P}' &= -\beta_{H\leftarrow M}\rho I_M \cfrac{q S_{P} }{c(p,q)+l},\\[3mm]
		I_{P}' &= \beta_{H\leftarrow M}\rho I_M \cfrac{q S_{P} }{c(p,q)+l} - \gamma  I_{P} ,\\[3mm]
		S_{NP}' &=  -\beta_{H\leftarrow M}\rho I_M \cfrac{S_{NP} }{c(p,q)+l},\\[3mm]
		I_{NP}' &= \beta_{H\leftarrow M}\rho I_M \cfrac{S_{NP} }{c(p,q)+l} - \gamma  I_{NP},\\[3mm]
		I_M' &= \beta_{M\leftarrow H}(1-I_M) \cfrac{q I_{P} +I_{NP} }{c(p,q)+l} -\mu I_M,
	\end{array} 
	\right.
\end{equation}
where now
\begin{equation}\label{ratiosLM}
	l\coloneqq \frac{L}{H},\quad\text{and}\quad \rho\coloneqq \frac{M}{H}.    
\end{equation}
For the reader's convenience, we report the model parameters and their interpretation in \Cref{TableSIR-SI}, while $c(p,q)$ is given by \eqref{c(p,q)}. In \Cref{splitSIRSI}, we represent the dynamics of model \eqref{VBH}.

\begin{table}
	\caption{Parameters appearing in model \eqref{VBH} and their interpretations.}
	\label{TableSIR-SI}
	\begin{tabular}{c  c}
		\toprule
		Parameter & Interpretation \\[0.5mm]
		\midrule
		$l$ & ratio between non-competent hosts and human population sizes \\[0.5mm]
		$\rho$ & ratio between mosquito and human population sizes\\[0.5mm]
		$p$ & fraction of protected humans \\[0.5mm]
		$q$ & probability of protection failure \\[0.5mm]
		$1/\gamma$ & average human infectious period \\[0.5mm]
		$1/\mu$ & average mosquito life span \\[0.5mm]
		\bottomrule
	\end{tabular}
\end{table}
\begin{figure}
	\centering
	\begin{tikzpicture}[draw, roundnode/.style={circle, draw=black, thin,
			minimum size=1.2cm},align=center, node distance = 2cm and 8cm, > = Stealth, accepting/.style = {accepting by arrow}, accepting distance = 2em, initial distance = 1.5em,   >=latex,shorten >=1pt, shorten <=1pt]
		\node[roundnode
		] (Sp) {$S_{P}$};
		
		\node[roundnode, right = of Sp, accepting right, accepting text = $\gamma I_P$
		] (Ip) {$I_{P}$};
		
		\path[->] 
		(Sp) edge              node[anchor=center, above]                 {$ \frac{q\beta_{H\leftarrow M}\rho I_MS_{P}}{c(p,q)+l}$} (Ip);              
		\node[roundnode, below = of Sp
		] (Snp) {$S_{NP}$};
		
		\node[roundnode,  below = of Ip, accepting right, accepting text = $\gamma I_{NP}$
		] (Inp) {$I_{NP}$};
		
		\coordinate(a) at ($(Sp)!0.5!(Ip)$);
		
		\coordinate(b) at ($(Sp)!0.5!(Snp)$);
		
		\node[roundnode,
		accepting left, initial distance = 2.5em, accepting text = $\mu  I_M$,   
		] (Iv) at (a|-b) {$I_M$};

		\path[->] 
		(Snp) edge              node[anchor=center, below]                 {$\frac{\beta_{H\leftarrow M} \rho I_M S_{NP}}{c(p,q)+l}$} (Inp);   
		
		\coordinate(c) at ($(Ip)!0.5!(Inp)$);
		
		\path[->, dashed] 
		(Ip) edge              node[anchor=center, left]                 {} (c);              
		\path[->, dashed] 
		(Inp) edge              node[anchor=center, left]                 {} (c);    
		
		\path[->] 
		(c)+(0.25,0) edge              node[anchor=center, above]{\textcolor{black}{$\frac{\beta_{M\leftarrow H}(1-I_M)(qI_{P}+I_{NP})}{c(p,q)+l}$}}                  (Iv);  
		
		\coordinate(d) at ($(Sp)!0.5!(Ip)$);
		\coordinate(e) at ($(Snp)!0.5!(Inp)$);
		
		\path[->, dashed] 
		(Iv) edge              node[anchor=center, above]{} (d);
		
		\path[->, dashed] 
		(Iv) edge              node[anchor=center, above]{} (e);

	\end{tikzpicture}
	\caption{Flow chart for system \eqref{VBH}. Straight lines: compartmental movements within each population; dashed lines: infections between populations (mosquitoes infecting humans and vice versa).\label{splitSIRSI}}
\end{figure}

As we are dealing with fractions, the biologically meaningful state space for the solutions of model \eqref{VBH} is the set
\begin{align*}
	\Omega:=\{(&S_P, I_P, S_{NP}, I_{NP}, I_M) \in \R^5_{\ge 0}\ :\\
	&0\le S_P, I_P, S_{NP}, I_{NP}, I_M;\ S_P+ S_{NP},\ I_P+ I_{NP},\ I_M\le 1\}, 
\end{align*}
for which the following classical result holds.

\begin{proposition}
	For every initial condition $X(0)\coloneqq(S_P(0), I_P(0), S_{NP}(0), I_{NP}(0), I_M(0)) \in \Omega$, the system \eqref{VBH} admits a unique solution $X(t)\coloneqq(S_P(t), I_P(t), S_{NP}(t), I_{NP}(t), I_M(t)) \in \Omega$ that is globally defined in the future. 
	Furthermore, the following inequalities hold for every $t\ge 0$: 
	\begin{equation}\label{ineqSPSNP}
		S_P(t)+I_P(t)\leq S_P(0)+I_P(0),\quad\text{and}\quad S_{NP}(t)+I_{NP}(t)\leq S_{NP}(0)+I_{NP}(0).
	\end{equation}
\end{proposition} 
\begin{proof}
	Let $X(0)\in \Omega$. The local existence and uniqueness of the solution $X(t)$ to the Cauchy problem associated with \eqref{VBH} follow from the Cauchy--Lipschitz theorem. 
	Now, for each $Y\in\{ S_P, I_P, S_{NP}, I_{NP}, I_M\}$, one has
	$Y'|_{Y=0}\geq 0$, which implies that $X(t)\in \R^5_{\ge 0}$ for all $t\ge 0$ for which the solution is defined. Furthermore, the inequalities 
	$(S_j+I_j)'=-\gamma I_j\leq0,\quad j\in\{P,\ NP\}$,
	and $I'_M|_{I_M\ge 1}  \le -\mu I_M\leq 0$ hold. Hence, using standard arguments, one obtains that $X(t)\in \Omega$ is globally defined in the future, and 
	the inequalities in \eqref{ineqSPSNP} hold.
\end{proof}

\subsection{Control reproduction number}
In epidemiological models, the BRN $R_0$ is the expected number of secondary cases produced by a typical infected individual in a completely susceptible population during its entire infectious period~\citep{diekmann1990}. 
When intervention or control strategies are implemented in an otherwise fully susceptible population, the relevant threshold quantity is referred to as the \emph{Control Reproduction Number} (CRN), denoted by $R_c$. This quantity shares the same threshold property as $R_0$~\citep{pellis2022}.

In this section, we compute the CRN $R_c\coloneqq R_c(p,q)$ for the model with protective behaviour \eqref{VBH}, following the general framework introduced in~\cite{diekmann1990}. In particular, we apply the Next Generation Matrix (NGM) method described in
\cite{diekmann2010construction, van2002}. 
To this end, we first observe that system~\eqref{VBH} admits the DFE
\begin{equation*}
	(S_{P}^*,\ I_{P}^*,\ S_{NP}^*,\ I_{NP}^*,\ I_M^*)\coloneqq (p,\ 0,\ 1-p,\ 0,\ 0),
\end{equation*}
We then focus on the early stage of the epidemic. In particular, we assume that the total human susceptible fraction satisfies~$S_{P}+S_{NP}\approx 1$ and that the mosquito population is almost entirely susceptible, i.e. $S_M=1-I_M\approx 1$. We therefore consider the linearised system for the infected compartments around the DFE
\begin{equation}\label{VBlin}
	\left\{\setlength\arraycolsep{0.1em}
	\begin{array}{rl} 
		I_{P}' &= \beta_{H\leftarrow M}\rho I_M\cfrac{q p}{c(p,q)+l}   - \gamma  I_{P} ,\\[3mm]
		I_{NP}' &= \beta_{H\leftarrow M} \rho I_M\cfrac{1-p}{c(p,q)+l}  - \gamma  I_{NP} ,\\[3mm]
		I_M' &= \beta_{M\leftarrow H}\cfrac{q I_{P} +I_{NP} }{c(p,q)+l} -\mu I_M,
	\end{array} 
	\right.
\end{equation}
where $\rho$ and $l$ are defined as in \eqref{ratiosLM}. Following~\cite{diekmann2010construction, van2002}, we define a matrix accounting for \emph{infection} 
\begin{equation}\label{infection:matrix}
	B(p, q)\coloneqq 
	\begin{pmatrix} 
		0 & 0 & \dfrac{\rho qp\beta_{H\leftarrow M }}{c(p,q)+l}\\[2mm]
		0 & 0 & \dfrac{\rho(1- p)\beta_{H\leftarrow M}}{c(p,q)+l}\\[2mm]
		\dfrac{q\beta_{M\leftarrow H}}{c(p,q)+l} &
		\dfrac{\beta_{M\leftarrow H}}{c(p,q)+l} & 0
	\end{pmatrix},\quad p,q\in[0,1], 
\end{equation}
and a matrix accounting for \emph{transition} $\Sigma\coloneqq \diag(\gamma, \gamma, \mu)$,
so that the Jacobian matrix of system \eqref{VBlin} can be easily written as $B(p,q)-\Sigma$.
Then, according to~\cite{diekmann1990}, the CRN $R_{c}(p, q)$ is obtained as the spectral radius of the NGM 
$K(p, q)\coloneqq B(p, q)\Sigma^{-1}$,
and explicitly reads
\begin{equation}\label{R0het}
	R_c=R_0\;
	\cfrac{1+l}{c(p,q)+l}\;\sqrt{1-p(1-q^2)}\,,
\end{equation}
where
\begin{equation}\label{R0}
	R_{0}\coloneqq \frac{1}{1+l}\; 
	\sqrt{\cfrac{\beta_{H\leftarrow M}\beta_{M\leftarrow H}}{\gamma\mu } \rho}
\end{equation}
is the BRN (namely, the reproduction number in the absence of protective behaviour). 

Note that, if $p=0$ (i.e. no individuals adopt protective behaviour) or $q=1$ (i.e. protection is completely ineffective), then $R_c=R_0$. 
Moreover, it is interesting to observe that if $p=1$ and $l=0$, then $R_c=R_0$.
This is a direct consequence of the Ross--Macdonald-like assumptions underlying the model. Indeed, in this case, all individuals adopt protective behaviour. However, the mosquito biting rate is assumed to be fixed, meaning that mosquitoes feed on a given number of hosts per unit time regardless of host behaviour. Since $l=0$, mosquitoes have no alternative hosts and therefore must feed on humans. Consequently, whenever protection fails, all mosquito bites are effectively concentrated on those individuals for whom protection is unsuccessful. As a result, the overall transmission potential remains unchanged.

It is therefore natural to ask whether the presence of awareness in the human population reduces the reproduction number. In particular, we seek conditions ensuring that
\begin{equation}\label{rless}
	R_c(p, q)< R_{0}.
\end{equation}
In this regard, we have the following result.
\begin{proposition}\label{propineqRatio}
	Let  $p\in (0, 1)$, $q\in [0, 1)$ and $l\ge 0$. Then condition \eqref{rless} holds if and only if
	\begin{equation}\label{threshold}
		l>F(p,q),\qquad\text{with}\qquad F(p,q)\coloneqq \frac{(1-p)(1-q)}{\sqrt{1-p(1-q^2)}+q}\,.
	\end{equation}
\end{proposition}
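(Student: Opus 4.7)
The plan is to work directly from the closed-form expression \eqref{R0het} for $R_c$, since $R_0$ appears as an explicit factor and the rest of the formula depends only on $p$, $q$, and $l$. Dividing both sides of \eqref{rless} by $R_0 > 0$, the inequality is equivalent to
\begin{equation*}
(1+l)\sqrt{1-p(1-q^2)} < c(p,q) + l.
\end{equation*}
Introducing the shorthand $A \coloneqq \sqrt{1-p(1-q^2)}$ and $c \coloneqq c(p,q) = 1-p(1-q)$, I would first observe that for $p \in (0,1)$ and $q \in [0,1)$ one has $p(1-q^2) > 0$, hence $A^2 < 1$, so $A < 1$ and $1-A > 0$. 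Thus the inequality rearranges linearly in $l$ to
\begin{equation*}
l (1-A) > A - c, \qquad \text{equivalently} \qquad l > \frac{A-c}{1-A}.
\end{equation*}

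The remaining task is purely algebraic: I would show that $\dfrac{A-c}{1-A} = F(p,q)$. The natural trick is to clear the square root from the denominator of $F$ by multiplying numerator and denominator of $\dfrac{A-c}{1-A}$ by $A+q$; this suggests proving the identity
\begin{equation*}
(A-c)(A+q) = (1-A)(1-p)(1-q).
\end{equation*}
Expanding the left-hand side using $A^2 = 1 - p + pq^2$ and $c = 1-p+pq$, the terms not involving $A$ combine to $(1-p)(1-q)$, and the terms involving $A$ combine to $A(q-1+p-pq) = -A(1-p)(1-q)$, which yields the claimed identity.

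Combining the identity with the rearrangement above gives
\begin{equation*}
\frac{A-c}{1-A} = \frac{(A-c)(A+q)}{(1-A)(A+q)} = \frac{(1-A)(1-p)(1-q)}{(1-A)(A+q)} = \frac{(1-p)(1-q)}{A+q} = F(p,q),
\end{equation*}
so \eqref{rless} holds if and only if $l > F(p,q)$, completing the proof. There is no real obstacle here beyond the algebraic identity; the only point requiring care is the sign of $1-A$, which is why the statement restricts to $p \in (0,1)$ and $q \in [0,1)$ (the boundary cases $p=0$ or $q=1$ giving $R_c = R_0$ as already noted after \eqref{R0}).
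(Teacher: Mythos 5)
Your proof is correct and follows essentially the same route as the paper: divide by $R_0$, use $\sqrt{1-p(1-q^2)}<1$ to solve the inequality linearly for $l$, and then verify algebraically that the resulting ratio equals $F(p,q)$. Your explicit rationalization identity $(A-c)(A+q)=(1-A)(1-p)(1-q)$ is simply the worked-out version of the ``lengthy computations'' the paper leaves implicit.
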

\begin{proof} 
	From \eqref{R0het}, the inequality $R_c(p, q)< R_{0}$ is equivalent to \begin{equation*}
		\frac{1+l}{(1-p+q p) + l}\sqrt{1-p+q^2 p}<1\iff (1+l)\sqrt{1-p+q^2 p}<(1-p+qp) + l\,.
	\end{equation*}
	Rearranging terms yields
	\begin{equation}\label{ineqHL}
		\left[\sqrt{1-p+q^2 p}-(1-p+q p)\right]+l\left[\sqrt{1-p+q^2 p}-1\right]< 0\,.
	\end{equation}
	We consider the first term on the LHS of \eqref{ineqHL}. 
	Note that 
	$\sqrt{1-p+q^2 p}-(1-p+q p)= \sqrt{1-p(1-q^2)}-[1-p(1-q)]$
	and, since  $1-p(1-q)\ge 0$ for all $p, q\in [0, 1]$, we get 
	$\sqrt{1-p(1-q^2)}-[1-p(1-q)]>0\iff  (1-q)^2p(1-p)>0$, which is obvious if $q\in [0, 1)$ and $p\in (0, 1)$.
	Hence
	\begin{equation*}
		\sqrt{1-p+q^2 p}-(1-p+q p)> 0,\quad p\in (0,1),\quad q\in [0, 1).
	\end{equation*}
	As for the second term of inequality~\eqref{ineqHL}, it is clear that $\sqrt{1-p+q^2 p} = \sqrt{1-p(1-q^2)}<1$, for $p\in (0, 1]$ and $q\in [0, 1).$
	Hence
	\begin{equation}\label{Lless0}
		l\left[\sqrt{1-p+q^2 p}-1\right]< 0,\qquad p\in (0, 1],\quad q\in [0, 1).
	\end{equation}
	This shows that the presence of alternative hosts $l$ always reduces $R_c$ when protective behaviour is present.
	Using \eqref{Lless0}, inequality \eqref{ineqHL} can be rewritten as
	$l[1-\sqrt{1-p+q^2 p}]> \sqrt{1-p+q^2 p}-(1-p+q p)$
	which is equivalent to
	\begin{equation}\label{almostRHS}
		l>\frac{\sqrt{1-p+q^2 p}-(1-p+q p)}{1-\sqrt{1-p+q^2 p}}.
	\end{equation}
	Finally, after algebraic manipulations, the right-hand side of \eqref{almostRHS} can be rewritten, for $p,q\in (0,1)$, as 
	\begin{align*}
		\frac{\sqrt{1-p+q^2 p}-(1-p+q p)}{1-\sqrt{1-p+q^2 p}}
		&=\cfrac{(1-p)(1-q)}{\sqrt{1-p(1-q^2)}+q}\,.
	\end{align*}
\end{proof}

\begin{corollary}
	\label{prop_parabola}
	Let  $q\in [0, 1)$ and $l>0$. Then there exists $\bar p < 1$ such that \eqref{rless} holds if and only if $\bar p < p < 1$. 
	If $l \ge \bar l$, where $\bar l=(1-q)/(1+q)$ is the positive root of $$l^2(1+q)+2ql-(1-q)=0,$$ then $\bar p \le 0$, so that \eqref{rless} holds for all $p \in (0,1)$. On the other hand, if $0 < l < \bar l$, $\bar p \in (0,1)$.
\end{corollary}
\begin{proof}
	Rewrite \eqref{threshold} as $Q(p):=Ap^2+Bp+C<0$, 
	where
	$A\coloneqq (1-q)^2$, $B\coloneqq (1-q)[l^2(1+q)+2ql-2(1-q)]$, and
	$C\coloneqq (1-q)[1-q(1+2l)-l^2(1+q)]$.
	It is easy to see that $Q(1)=0$, hence one root of the quadratic equation is $p=1$. Moreover, $Q'(1) > 0$ for $l > 0$, which implies that the other root $\bar p$ satisfies $\bar p<1$, and therefore $Ap^2+Bp+C<0$ for $(\bar p,1)$. Finally, $\bar p \le 0$ if and only if $Q(0) = C \le 0$, which holds precisely when $l \ge \bar l$.   
\end{proof}

\begin{figure}[t]
	\begin{center}
		\includegraphics[width=0.6\linewidth]{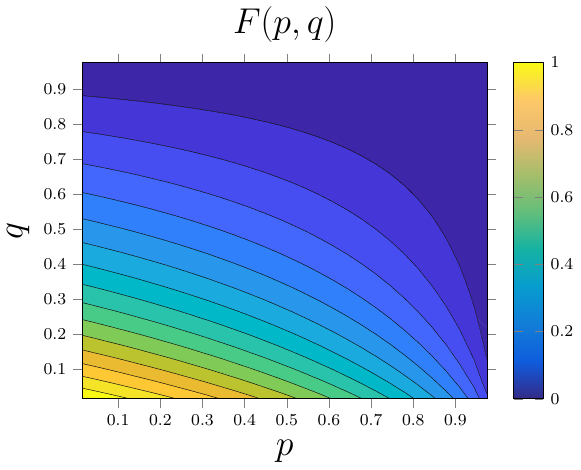}
		\caption{The behaviour of the function $F(p,q)$ defined in \eqref{threshold}. Note that it is independent of all the model parameters except $p$ and $q$. \label{fig:F(p, q)}}
	\end{center}
\end{figure}

The behaviour of the function $F(p,q)$ is illustrated in Figure~\ref{fig:F(p, q)}. Note that it depends only on $p$ and $q$, and is therefore independent of all other model parameters. Essentially, \Cref{propineqRatio} shows that if there are not enough ``other available hosts'', or, under the other interpretation of $L$ (recall Remark \ref{rem:other_L}), if human hosts are not sufficiently distanced from each other, then protective behaviour adopted by only part of the population may actually lead to an increase in the reproduction number $R_0$. 

From \Cref{propineqRatio}, we also immediately obtain the following corollary.
\begin{corollary}\label{corollaryRc}
	Let $l=0$. Then,
	$R_c(p, q)\ge R_{0},\ \forall p\in [0,1),\ q\in (0,1].$
	In particular, defining $r(p, q):=R_c(p, q)/R_0$, we have $r(1, q)=1$,\ $r(p, q)>1$, with
	$\partial_q r(p,q) < 0$ for all $q \in [0,1)$ and $p\in(0,1)$. 
\end{corollary}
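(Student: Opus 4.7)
The plan is to derive everything from Proposition \ref{propineqRatio} together with a single explicit computation of the derivative $\partial_q r$. First I would substitute $l=0$ into \eqref{R0het} to obtain the clean expression
\begin{equation*}
r(p,q)=\frac{R_c(p,q)}{R_0}=\frac{\sqrt{1-p(1-q^2)}}{1-p(1-q)},
\end{equation*}
which is well-defined for $(p,q)\in[0,1]\times(0,1]$ (the denominator $c(p,q)=1-p(1-q)$ vanishes only when $p=1,q=0$).

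For the inequality $R_c(p,q)\ge R_0$, the plan is to invoke Proposition \ref{propineqRatio}: by its ``if and only if'' statement, $R_c(p,q)<R_0$ is equivalent to $l>F(p,q)$. Since $F(p,q)\ge 0$ on $[0,1]\times[0,1]$ and $l=0$, the strict inequality $l>F(p,q)$ never holds, yielding $R_c(p,q)\ge R_0$. Moreover, on $(0,1)\times(0,1)$ we have $F(p,q)>0$, so by the reverse direction of the same equivalence (tracing back the chain of equivalent inequalities in the proof of Proposition \ref{propineqRatio}) the strict inequality $r(p,q)>1$ holds on $(0,1)\times(0,1)$. The boundary case $r(1,q)=1$ for $q\in(0,1]$ is then a one-line substitution, since $\sqrt{1-(1-q^2)}=q$ and $1-(1-q)=q$.

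The only genuine computation is the sign of $\partial_q r$. Setting $u(p,q):=1-p(1-q^2)$ and $v(p,q):=1-p(1-q)$, so that $r=\sqrt{u}/v$, I would differentiate to get
\begin{equation*}
\partial_q r(p,q)=\frac{p\bigl(qv-u\bigr)}{v^{2}\sqrt{u}}.
\end{equation*}
A short algebraic simplification shows $qv-u=-(1-p)(1-q)$, so
\begin{equation*}
\partial_q r(p,q)=-\frac{p(1-p)(1-q)}{v^{2}\sqrt{u}},
\end{equation*}
which is strictly negative for every $p\in(0,1)$ and $q\in[0,1)$, as desired.

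I do not expect any real obstacle: the structural statements $R_c\ge R_0$, $r(1,q)=1$, and $r(p,q)>1$ are immediate corollaries of Proposition \ref{propineqRatio} (plus a trivial substitution), and the monotonicity in $q$ reduces to the identity $qv-u=-(1-p)(1-q)$, which is the only calculation that requires care. The footnote suggests an alternative derivation via Corollary \ref{prop_parabola}, but routing the argument through $F(p,q)\ge 0$ seems cleaner and gives the strict inequality on $(0,1)\times(0,1)$ for free.
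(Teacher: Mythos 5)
Your proposal is correct and follows essentially the same route as the paper: the inequality $R_c\ge R_0$ and the strict version on the interior are read off from the equivalence $R_c<R_0\iff l>F(p,q)$ of \Cref{propineqRatio} together with $F\ge 0$, which is precisely how the paper deduces the corollary (the footnote's parabola argument being the stated alternative). The only addition is your explicit computation of $\partial_q r$ via the identity $qv-u=-(1-p)(1-q)$, which the paper asserts without showing; your calculation is accurate and fills that small gap.
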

Thus, either in the absence of other hosts or under conditions of high population density, static protective behaviour in the human population always increases the reproduction number $R_0$ when it is only partially adopted ($p<1$) or partially effective ($q>0$). 

This is consistent with the results of~\cite{dye1986, dye1988} for vector-host models with heterogeneous host populations. In particular, \cite{dye1986} show that, in the absence of non-competent hosts, static heterogeneity in the host population always leads to an increased value of $R_0$. \cite{dye1988} further discuss the effect of such heterogeneity on the final epidemic size. See also the more recent work by~\cite{bolzoni2015}, where, building on the results of~\cite{dye1986, dye1988}, the authors investigate the role of heterogeneity on the invasion probability of a vector-borne disease in multi-host models.

Finally, it is worth observing from \eqref{R0het} that the presence of other hosts, i.e. $l>0$, always decreases the value of $R_0$ regardless of the level of awareness or intervention, and even in their absence (i.e. when $q=1$ or $p=0$). This phenomenon is commonly known (especially in the context of tick-borne infections) as the \textit{dilution effect}~\citep{rosa2007effects}; see also~\cite{PuglieseReview} for further discussion.
Finally, considering $R_c(p, q)$ as a function of $q$, it is straightforward to derive the following result.
\medskip
\begin{proposition}\label{minimum}
	Let $l > 0$. Then, for every $p\in [0, 1]$, the function $R_c(p, q)$, viewed as a function of $q$, attains a minimum value, smaller than $R_{0}$, at 
	$q= (1-p)/(1-p+l)\,.$
\end{proposition} 
\Cref{minimum} shows that when protective behaviour is static, the largest reduction of $R_0$ is achieved at an intermediate level of protection effectiveness, $1-q$. \Cref{fig:R0qpl} illustrates the behaviour of $R_c$ as a function of $q$ for different values of $p$ and $l$.
\begin{figure}
	\begin{center}
		\includegraphics[width=1\linewidth]{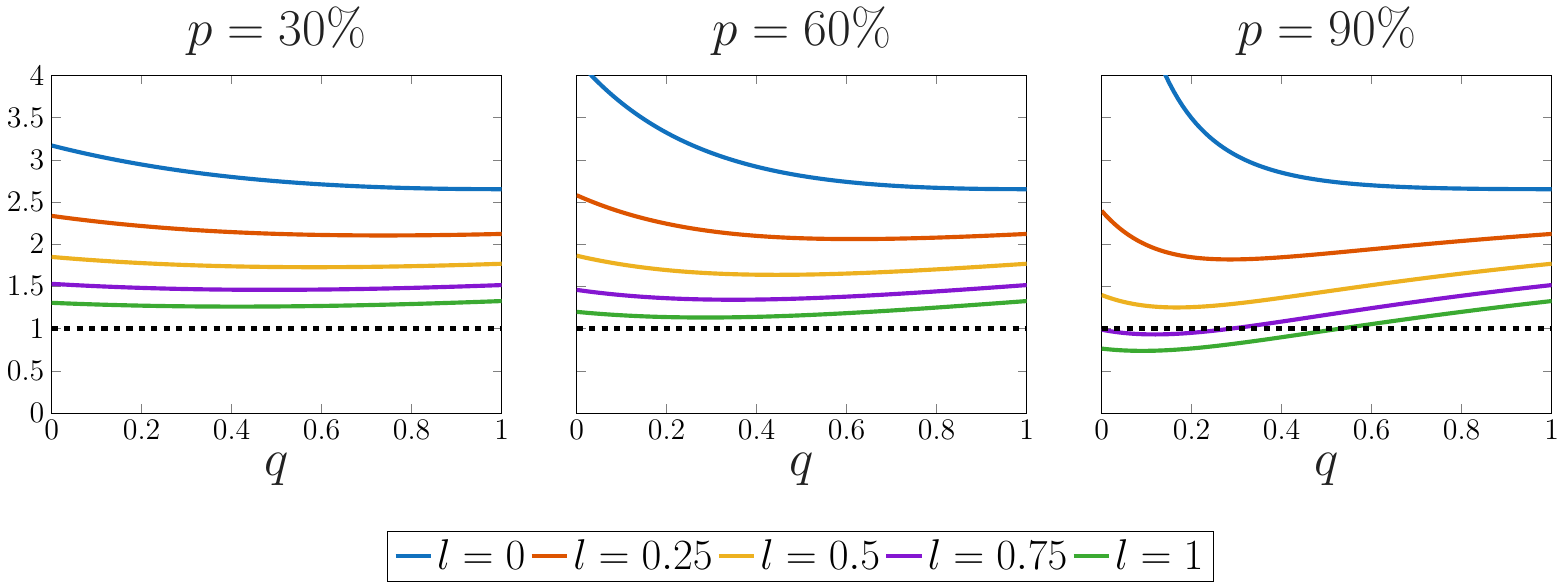}
		\caption{Behaviour of $R_c$ as a function of $q$ for different values of $p$ and $l$, with $\rho=2$ and epidemiological parameters as in \Cref{TableValues}. }\label{fig:R0qpl}
	\end{center}
\end{figure}

\subsection{Asymptotic dynamics and final size}
In this section, we study the long-term dynamics of the solutions of system \eqref{VBH} and we numerically investigate the effect of static protective behaviour in the host population on the final epidemic size~\citep{brauer2017, brauer2019, diekmann2013mathematical}. 
We begin by observing that, since the model does not include human demography or loss of immunity, the only non-trivial equilibria of \eqref{VBH} are those satisfying $I_P=I_{NP}=I_M=0$. Consequently, after an outbreak, the disease is expected to die out in the long run. This result is established in the following proposition. 

\begin{proposition}\label{prop:extinction}
	For system \eqref{VBH},  
	$\lim_{t\to+\infty} I_{P}(t) = \lim_{t\to+\infty} I_{NP}(t) =\lim_{t\to+\infty} I_M(t) =0.$
\end{proposition}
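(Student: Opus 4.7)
The plan is to first establish monotonicity of the susceptible and ``susceptible plus infected'' host compartments, which will simultaneously yield integrability and convergence of the host infective fractions and hence force them to vanish at infinity; the convergence $I_M(t)\to 0$ will then follow from a comparison argument on the (essentially linear) equation for $I_M$.

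Concretely, I first read off from \eqref{VBH} that $S_P' \le 0$ and $(S_P+I_P)' = -\gamma I_P \le 0$. Both quantities are nonnegative by \Cref{minimum} and monotone non-increasing, hence they admit limits $S_P^\infty$ and $L_P \ge S_P^\infty$ as $t\to+\infty$. Subtracting yields that $I_P(t)$ itself admits a limit $I_P^\infty := L_P - S_P^\infty \ge 0$. Integrating the identity $(S_P+I_P)' = -\gamma I_P$ on $[0,+\infty)$ then gives
\[
\gamma \int_0^{+\infty} I_P(s)\,\dd s \;=\; S_P(0)+I_P(0) - S_P^\infty - I_P^\infty \;<\; +\infty,
\]
so $I_P$ is integrable; combined with the existence of the limit $I_P^\infty$, this forces $I_P^\infty=0$. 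The argument applied verbatim to the pair $(S_{NP},I_{NP})$ yields $I_{NP}(t)\to 0$.

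For the mosquito class, bounding $1-I_M\le 1$ in the last equation of \eqref{VBH} produces the scalar linear differential inequality
\[
I_M'(t) \;\le\; g(t) - \mu\, I_M(t),\qquad g(t) \;:=\; \frac{\beta_{M\leftarrow H}}{c(p,q)+l}\bigl(q I_P(t) + I_{NP}(t)\bigr),
\]
where $g(t)\to 0$ by the previous step. A standard comparison argument (or the variation-of-constants representation $I_M(t) \le I_M(0)\ee^{-\mu t} + \int_0^t \ee^{-\mu(t-s)} g(s)\,\dd s$) then yields $\limsup_{t\to+\infty} I_M(t) \le \limsup_{t\to+\infty} g(t)/\mu = 0$, completing the proof.

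I do not expect any real obstacle: the system has enough naturally monotone quantities that neither a Lyapunov function nor Barbalat's lemma is needed, and the final step reduces to a scalar linear ODE driven by a vanishing forcing term. The only mildly delicate point is combining integrability of $I_P$ with the existence of its limit to conclude $I_P^\infty=0$, but this is a routine observation.
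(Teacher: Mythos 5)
Your proof is correct and follows essentially the same route as the paper: monotonicity of $S_j+I_j$ gives integrability of $I_j$, hence $I_j\to 0$, and then $I_M\to 0$ follows from its own equation. In fact your version is slightly more complete than the paper's, since you justify why integrability of $I_P$ forces $I_P(t)\to 0$ (via the existence of the limit of $I_P$, obtained from the additional monotonicity of $S_P$) and you make the final step for $I_M$ explicit through the variation-of-constants bound, both of which the paper leaves implicit.
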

\begin{proof}
	For $S_{P},\, S_{NP},\, I_{P},\, I_{NP}\, \in \R_{> 0}$, the inequalities in \eqref{ineqSPSNP} hold. 
	Hence, both $X_P\coloneqq S_{P}+I_{P}$ and $X_{NP}\coloneqq S_{NP}+I_{NP}$ are decreasing functions of $t$, thus they admit limits $X_{P}^ \infty,\, X_{NP}^\infty\ge 0$, respectively, for $t\to +\infty$. 
	Then, we observe that
	\begin{equation*}
		-\infty< X_P^\infty-X_P(0) =-\gamma \int_0^{+\infty} I_{P}(t)\dd t,
	\end{equation*}
	which implies that $I_{P}(t)\to 0$ as $t\to+\infty$. The claim follows by applying the same reasoning to $I_{NP}$, and by noticing that $I_{P}(t),I_{NP}(t)\to 0$ implies $I_M(t)\to 0$.
\end{proof}

In the remainder of this section, we investigate how partial protective behaviour affects the final epidemic size. Note that, in general, deriving final size relations even for simple vector-borne epidemic models is not straightforward. Indeed, to the best of our knowledge, these aspects have only been investigated in a few recent papers by~\cite{brauer2017, brauer2019} (see also~\cite{gimenez2023final, tsubouchi2019calculation}), where the author derives approximate formulas that provide lower and upper bounds. See also~\cite{PuglieseReview} for further discussion. 

In this section, building on the discussion above, we use numerical simulations to explore the effect of protective behaviour on the final epidemic size.
In particular, we plot $Y_P,\, Y_{NP}$, together with their sum $Y_H:=Y_P+Y_{NP}$, as well as the fractions within each subgroup, $Y_P/p$ and $Y_{NP}/(1-p)$, for $Y\in \{I, R\}$. We then consider the (numerically approximated) limits 
$$R^\infty_H:=\lim_{t\to+\infty} R_H(t),\quad \hat R^\infty_P:=\lim_{t\to+\infty} R_P(t)/p\quad \hbox{ and }\quad \hat R^\infty_{NP}:=\lim_{t\to+\infty} R_{NP}(t)/(1-p),$$
which represent the total infected fraction ($R^\infty_H$) and the infected fractions within each subgroup ($\hat R^\infty_{NP}$ and $\hat R^\infty_{NP}$), respectively.

\begin{table} 
	\caption{Parameter values inspired by dengue fever and taken from~\cite{haonan2025e}. Recall that $\beta_{H\leftarrow M}=b\ p_{H\leftarrow M}$ and $\beta_{M\leftarrow H}=b\ p_{M\leftarrow H}$ \eqref{VBH}.}
	\label{TableValues}
	\begin{tabular}{c c c c}
		\toprule
		Parameter &  Interpretation & Value & Reference \\[0.5mm]
		\midrule
		$b$ & mosquito biting rate & 0.5 days$^{-1}$ &~\cite{aguiar2022}\\[0.5mm]
		$p_{H\leftarrow M}$ & probability of transmission per bite from mosquito to human & 0.75 &~\cite{newton1992model} \\[0.5mm]
		$p_{M\leftarrow H}$ & probability of transmission per bite from human to mosquito & 0.375 &~\cite{newton1992model} \\[0.5mm]
		$1/\gamma$ & average human infectious period & 5 days & ~\cite{newton1992model} \\[0.5mm]
		$1/\mu$ & average mosquito life span & 10 days &~\cite{alphey2011model}\\[0.5mm]
		\bottomrule
	\end{tabular}
\end{table}

We use model parameters inspired by dengue transmission, taken from~\cite{haonan2025e} and reported in \Cref{TableValues} with their original interpretation. From these parameters, we compute $\beta_{H\leftarrow M} = b p_{H\leftarrow M},$ $\beta_{M\leftarrow H} = b p_{M\leftarrow H}$, together with the values of
$\gamma$ and $\mu$. As for $l$, in Figure \ref{fig:static1} 
we take $l=0.25$, so that $H$ represents the $80\%$ of the total host population $H+L$, while in Figure \ref{fig:static3}, we set $l=1$ so that $H$ represents the 50\% of $H+L$. In both cases, for the sake of comparison, $\rho$ is chosen so that the value $R_0\approx 2.12$ is preserved. 
Finally, we assume that the epidemic starts with a very small fraction of infected mosquitoes and no infected humans. In particular, we consider as initial conditions $S_P(0)=p$, $S_{NP}(0)=1-p$, $I_P(0)=I_{NP}(0)=0$, and $I_M(0)=10^{-4}$.
\begin{figure} 
	\begin{center}
		\includegraphics[width=1.\linewidth]{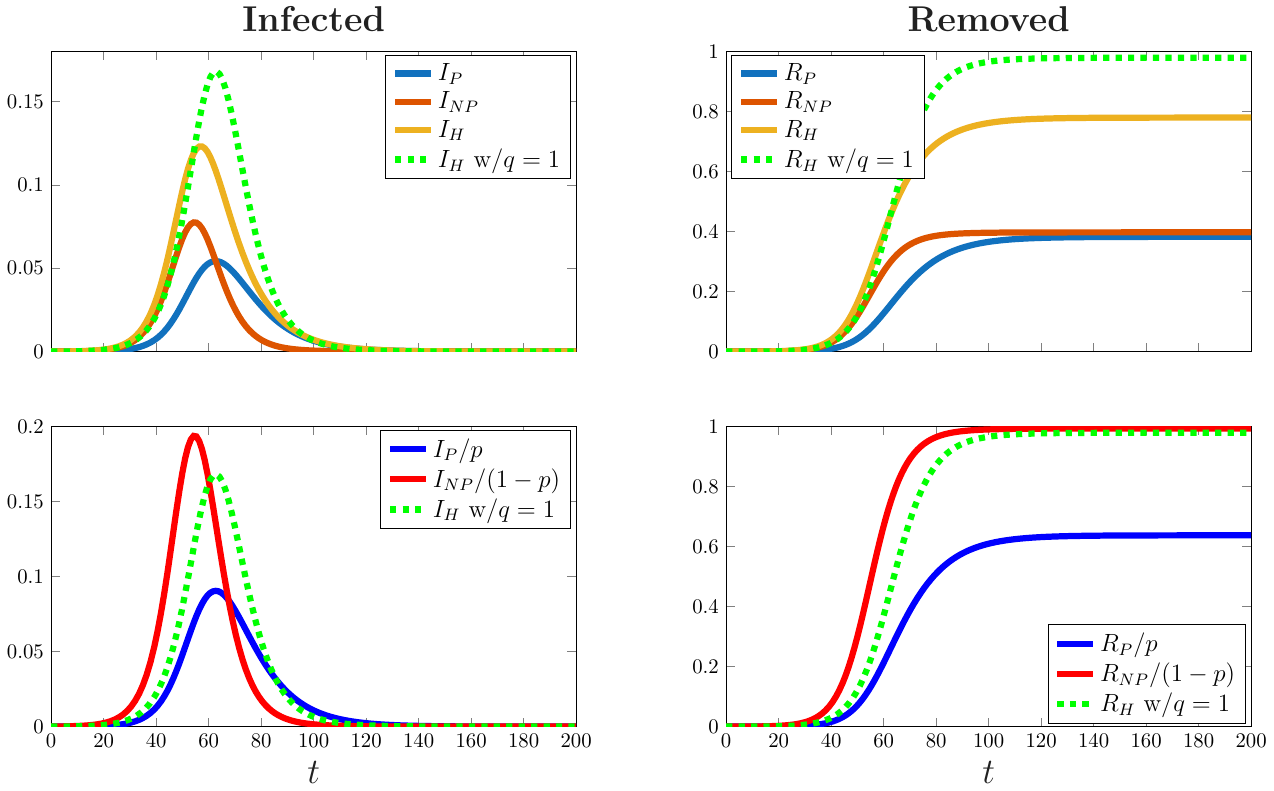}
		\caption{Simulation of model \eqref{VBH} with $l= 0.25$, $\rho=2$, epidemiological parameters as in \Cref{TableValues}, $q = 0.2$ and $p=0.6$, yielding $R_0\approx 2.12$ and $R_c\approx 2.24$. 
			Top row: fraction of protected individuals (solid blue), fraction of non-protected individuals (solid red), total fraction of individuals (solid yellow), and the corresponding trajectory for the model without protection (namely, with $q=1$, dashed green), shown for the infected class (left) and the removed class (right).
			Bottom row: same as fractions of individuals within each subgroup. Left: infected fractions. Right: removed fractions.
		}\label{fig:static1}
	\end{center}
\end{figure}

\begin{figure} 
	\begin{center}
		\includegraphics[width=1.\linewidth]{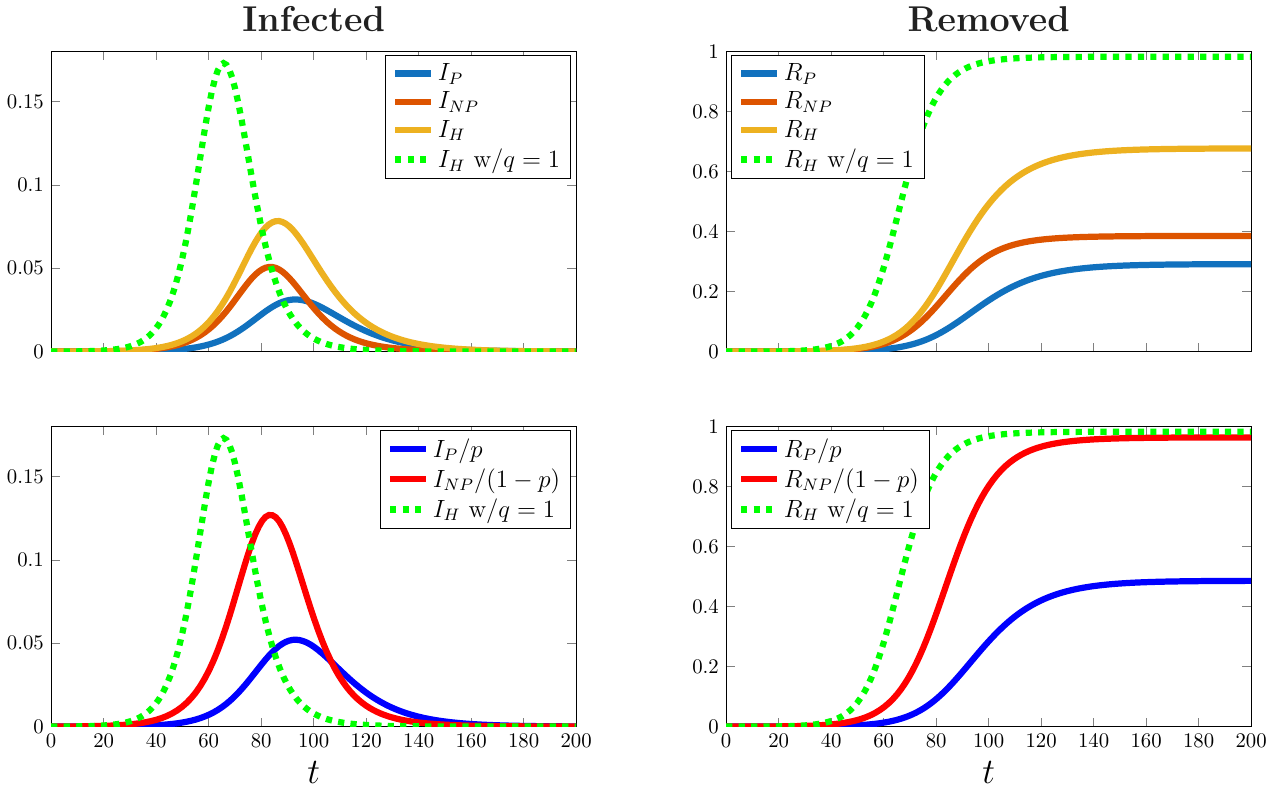}
		\caption{Simulation of model \eqref{VBH} with parameters and initial conditions as in Figure \ref{fig:static1}, except $l= 1$ and $\rho=5.12$, yielding $R_0\approx 2.12$ and $R_c\approx 1.82$. 
			Top row: fraction of protected individuals (solid blue), fraction of non-protected individuals (solid red), total fraction of individuals (solid yellow), and the corresponding trajectory for the model without protection (with $q=1$, dashed green), shown for the infected class (left) and the removed class (right).
			Bottom row: same as fractions of individuals within each subgroup. Left: infected fractions. Right: removed fractions. \label{fig:static3}}
	\end{center}
\end{figure}

The top row of \Cref{fig:static1} shows a simulation for \eqref{VBH} with $q = 0.2$ and $p=0.6$, which yields $R_c\approx 2.24>R_0$. Observe that, although protective behaviour leads to an increased value of the reproduction number, the final epidemic size $R^\infty_H$ is smaller than for the model without protective behaviour. This indicates that personal protection against mosquito bites can help reduce the total number of infections in the long run. Yet, the bottom row of \Cref{fig:static1} shows that, while protective behaviour decreases the proportion of individuals infected in the protected subpopulation ($\hat R_P^\infty$), the proportion infected in the non-protected subpopulation ($\hat R_{NP}^\infty$) is larger than in the model without protection. A similar phenomenon was also observed in~\cite{dye1988} in the context of a vector-borne epidemic model with heterogeneity in mosquito preferences.

In contrast, in \Cref{fig:static3}, where we take $p=0.6$ but $l=1$, protective behaviour not only effectively reduces the value of the reproduction number $R_0$, with $R_c\approx 1.82<R_0$, but also decreases $\hat R_{NP}^\infty$. Moreover, the final epidemic size $R_H^\infty$ is even slightly smaller than that observed in \Cref{fig:static1}. A possible explanation lies in the larger value of $l$ assumed here (recall that $\rho$ is adjusted accordingly to preserve $R_0\approx 2.12)$. Indeed, although the value of $R_0$ remains unchanged compared with the simulations in \Cref{fig:static1}, the larger fraction of non-competent hosts (or, under the alternative interpretation of $L$, a greater spatial separation between human hosts) acts as a ``shield'', mitigating the tendency of protective behaviour to concentrate mosquito bites in the non-protected group. 

\section{Information-dependent behavioural changes}\label{sec:model+info}
In this section, we modify model \eqref{VBH} to allow individuals to dynamically adjust their behaviour based on the information currently available about an ongoing epidemic. To model this scenario, we follow the approach of~\cite{dOnofrioManfredi2009, d2022behavioral, d2007vaccinating} by introducing the \emph{information index} $J(t)$, 
which provides a summary of the publicly available information on the infection at time $t\ge 0$.
In particular, we assume that $J(t)$ depends on both the present and past prevalence~\citep{dOnofrioManfredi2009} of the vector-borne disease in the human population. More precisely, we define $J\colon [0, \infty)\to \R_{\ge 0}$ as

\begin{align}\label{infindex}
	J(t) :&=\int_{-\infty}^t (I_{P}(\theta) + I_{NP}(\theta))K(t-\theta)\dd \theta,
\end{align}
where 
$K$ is a non-negative measurable memory kernel satisfying
$\int_0^{\infty}K(\tau)\dd \tau = 1$, which weights the contribution of current and past infections to the information available at present. 
We note that, in principle, $J(t)$ may depend on a variety of (epidemiological) variables. For instance, it may depend on the current and past values of the human incidence, as in~\cite{d2022behavioral}; see also~\cite{buonomo2025integral}. Moreover, present and past prevalence values might be translated into information through a nonlinear increasing function $g$.
For the sake of simplicity, however, we restrict our attention to the prevalence-dependent and linear case.

We consider model \eqref{VBH} again. We assume that individuals can adjust their behaviour based on information about present and past prevalence, that is, on $J(t)$. This implies that individuals may move between group $H_P$ and group $H_{NP}$, according to the following system of ODEs:
\begin{equation}\label{modelbehH}
	\left\{\setlength\arraycolsep{0.1em}
	\begin{array}{rl} 
		H_{P}' &= a(J)H_{NP}  - w(J) H_{P}  ,\\[3mm]
		H_{NP}' &=  -a(J)H_{NP}  + w(J) H_{P},
	\end{array} 
	\right.
\end{equation}
which is obtained from the model proposed in~\cite{demers2018dynamic} by allowing the rates 
$a$ and $w$ 
to depend on $J$. 
We interpret variations in $a$ and $w$ as resulting from public health campaigns encouraging individuals to adopt protective measures against mosquito bites, as well as from the willingness of host individuals to comply with such recommendations.
In particular, we assume that $a$ and $w$ satisfy the following requirements:
\begin{assumption}\label{ass:aw} 
	We assume that $a$ is a positive and increasing function, while $w$ is a positive and non-increasing function; that is, $a(x),\, w(x)>0$, $a'(x)>0$ and $w'(x)\le 0$ for all $x\in [0, +\infty)$. 
\end{assumption}
\Cref{ass:aw} models a scenario in which public health campaigns and increasing information about the current epidemic have a beneficial effect on the host population, enhancing its awareness of the disease and, consequently, its willingness to adopt self-protective measures. Such assumptions are reasonable in the case of a first large outbreak of an infectious disease for which the host population has no prior experience with past epidemics, as in the case of vector-borne diseases in temperate European countries such as Italy.

Figure \ref{Hdyn} provides a schematic representation of the flow of \eqref{modelbehH}.
\begin{figure}
	\centering
	\begin{tikzpicture}[draw, roundnode/.style={circle, draw=black, thin,
			minimum size=1.2cm},
		align=center, node distance = 1cm and 2cm, > = Stealth, accepting/.style = accepting by arrow,   >=latex,shorten >=1pt, shorten <=1pt]
		\node[roundnode] (Hp) {$H_P$};
		\node[roundnode,  right = of Hp%
		] (Hnp) {$H_{NP}$};
		\path[->] 
		(Hp) edge[bend left= 35]              node[bend right, anchor=center, above=0.3em]            {$w(J)$} (Hnp);
		\path[->] 
		(Hnp) edge[bend left= 35]              node[bend right, anchor=center, below=0.3em]            {$a(J)$} (Hp);
		
	\end{tikzpicture}
	\caption{Flow chart for model \eqref{modelbehH}.\label{Hdyn}}
\end{figure}
Let $p(t) \coloneqq H_P(t)/H$ denote the fraction of individuals in the population adopting protective behaviour at time $t$ (so that the fraction of those not adopting it is $1 - p$ at all times). Using $(H_P + H_{NP})' \equiv 0$, it follows that \eqref{modelbehH} can be conveniently rewritten as
\begin{equation}\label{eqdiffp}
	p'=a(J)(1-p)-w(J)p
	=a(J)-[a(J)+w(J)]p.    
\end{equation}
Then, taking into account \eqref{eqdiffp}, we modify system~\eqref{VBH} as follows: 
\begin{equation}\label{VBsystemAP}
	\left\{\setlength\arraycolsep{0.1em}
	\begin{array}{rl} 
		S_{P}' &= -\beta_{H\leftarrow M} \rho I_M \cfrac{q S_{P} }{c(p,q)+l} + a(J)S_{NP}  - w(J) S_{P}  ,\\[3mm]
		S_{NP}' &=  -\beta_{H\leftarrow M} \rho I_M \cfrac{S_{NP} }{c(p,q)+l}-a(J)S_{NP}  + w(J) S_{P}  ,\\[3mm]
		I_{P}' &= \beta_{H\leftarrow M} \rho I_M \cfrac{q S_{P} }{c(p,q)+l} - \gamma  I_{P}  + a(J)I_{NP}  - w(J) I_{P}  ,\\[3mm]
		I_{NP}' &= \beta_{H\leftarrow M} \rho I_M \cfrac{S_{NP} }{c(p,q)+l} - \gamma  I_{NP}  - a(J)I_{NP}  + w(J)  I_{P}  ,\\[5mm]
		p' &= a(J)-\left[a(J)+w(J)\right]p,\\[3mm] 
		I_M' &= \beta_{M\leftarrow H}(1-I_M) \cfrac{q I_{P} +I_{NP} }{c(p,q)+l} -\mu I_M ,\\[3mm]
		J(t) &=\displaystyle\int_{-\infty}^t (I_{P}(\theta) + I_{NP}(\theta))K(t-\theta)\dd \theta.
	\end{array} 
	\right.
\end{equation}
where $c(p,q)=1-p(1-q)$, recalling \eqref{c(p,q)}. 

\begin{remark}\label{remark_roosa}
	A model similar to \eqref{VBsystemAP} was proposed in~\cite{roosa2022}, where the authors considered a much more complex system (including, among other features, human demography, stages of the mosquito life cycle, and additional interventions targeting the mosquito population) and allowed only susceptible individuals to change their behaviour with rates dependent on the current prevalence (i.e. in our notation, $J\coloneqq I_H$). In contrast, protective behaviour in the infected and removed classes was assumed to be static. The model proposed here extends that of~\cite{roosa2022} by allowing all human individuals to adjust their behaviour according to possibly delayed information on the prevalence.
\end{remark}
In the following, for the sake of simplicity, 
we assume, as is common in the literature~\cite{d2022behavioral, d2007vaccinating}, that the memory kernel is given by
\begin{equation}\label{erlang}
	K(\theta)=\cfrac{k^{n}\theta^{n-1}e^{-\kappa \theta}}{(n-1)!},\qquad \theta\in \R_{\ge0},\quad n\in \N,\quad k\in (0,+\infty),
\end{equation}
that is, $K$ is the density function of an Erlang distribution with shape parameter $n$, mean $\varphi = n/k$ (where $k$ is the rate parameter), and standard deviation $\sigma = \sqrt{n}/k$.
This ``simplifying'' assumption on $K$ allows us to use the Linear Chain Trick~\citep{macdonald_time_1978, macdonald1989} to reduce the equation for $J$ to a system of ODEs \citep{diekmann2018finite}. In particular, for $n=1$, one can rewrite the equation for $J$ as
\begin{align}\label{expM}
	J' =k\left(I_{P}  + I_{NP} \right)-k J,
\end{align}
while for $n\ge 1$, it can be reduced to the following system of ODEs:
\begin{equation}\label{ErlangM}
	\left\{\setlength\arraycolsep{0.1em}
	\begin{array}{rl} 
		Z_1' &=k\left(I_{P}  + I_{NP} \right)-k Z_1, \\[2mm]
		Z_i' &=kZ_{i-1} -kZ_i ,\qquad \qquad \qquad i=2,\dots, n,
	\end{array} 
	\right.
\end{equation}
with $Z_n =J$.

In the following, for the sake of generality, we avoid using \eqref{expM} and refer only to \eqref{ErlangM}, with the convention that, for $n=1$, system \eqref{ErlangM} implicitly coincides with \eqref{expM} and $Z_1=Z_{n-1}=J$.
Moreover, we refer to $\varphi$ as the ``delay'', as it qualitatively represents the lag between the current time and the peak of the memory kernel.
Note that, as $k\to +\infty$ and $n$ is bounded, one has $\varphi, \sigma \to 0^+$; that is, $K$ converges in distribution to a Dirac delta concentrated at $0$. In this case, the information becomes instantaneous.
On the other hand, if both $n$ and $k$ increase in such a way that $\varphi$ remains constant, then the Erlang distribution becomes increasingly concentrated around its mean $\varphi$.  
In particular, if $n,\, k\to +\infty$ with $n/k=\varphi$, then the $n$-Erlang distribution approaches a discrete delay at a linear rate~\citep{ando2020}, i.e. it converges to a Dirac delta concentrated at $\varphi$. In this case, the model reduces to one with memory concentrated at $t-\varphi$.

For later use, we define
\begin{equation*}
	S_H\coloneqq S_P+S_{NP},\quad\text{and}\quad I_H\coloneqq I_P+I_{NP}.  
\end{equation*}
Since we are dealing with fractions, the biologically meaningful state space for the solutions of model \eqref{VBsystemAP} turns out to be the set
\begin{equation*}
	\begin{split}
		\Omega:=\{ (&S_P, I_P, S_{NP}, I_{NP}, p, I_M, Z_1, \dots, Z_{n-1}, J) \in \R^{6+n}_{\ge 0}\ :\\
		\label{Omega_static_2} &0\le S_P, I_P, S_{NP}, I_{NP}, p, I_M, Z_i, J\le 1,\quad i=1,\dots, n-1 ,\ S_H+I_H\le 1\},    
	\end{split}
\end{equation*}
for which the following result holds.
\medskip
\begin{proposition}\label{prop_ex_uniq2}
	For every initial condition
	$$X(0)\coloneqq (S_P(0), I_P(0), S_{NP}(0), I_{NP}(0), p(0), I_M(0), Z_1(0),\dots, Z_{n-1}(0), J(0)) \in \Omega,$$ system \eqref{VBsystemAP} admits a unique solution $$X(t)\coloneqq (S_P(t), I_P(t), S_{NP}(t), I_{NP}(t), p(t), I_M(t), Z_1(t),\dots Z_{n-1}(t), J(t)) \in \Omega,$$ which is globally defined in the future. 
	Furthermore, the inequality
	$S_H(t)+I_H(t)\leq  S_H(0)+I_H(0)$
	holds for every $t\ge 0$. 
\end{proposition} 
\begin{proof}
	Let $X(0)\in \Omega$. The local existence and uniqueness of the solution $X(t)$ of the Cauchy problem relevant to \eqref{VBsystemAP} follows from the Cauchy--Lipschitz theorem. 
	Now, for each $$Z\in\{ S_P, I_P, S_{NP}, I_{NP}, p, I_M, Z_1, \dots, Z_{n-1}, J\},$$ one has
	$Z'|_{Z=0}\geq 0$, which implies that $X(t)\in \R^{6+n}_{\ge 0}$ for all $t\ge 0$ for which the solution is defined. Furthermore, the inequalities 
	$(S_H+I_H)'=-\gamma I_H\leq0$, 
$p|_{p\ge 1}'\le -\omega(J) p\le 0$ and $I'_M|_{I_M\ge 1} \le -\mu I_M\leq 0$ imply that $S_P, I_P, S_{NP}, I_{NP}, p, I_M\le 1$. 
From this, one also gets $Z_i|_{Z_i\ge 1}\le 0$ for $i=1,\dots, n$ (recall $Z_n=J$), which gives $Z_i\le 1$. Hence, standard arguments give that $X(t)\in \Omega$ is globally defined in the future and $S_H(t)+I_H(t)\leq  S_H(0)+I_H(0)$ for all $t\ge 0$.
\end{proof}

\subsection{Control reproduction number}\label{R0DynamicChanges}
As for the model with static behaviour, in this section we compute the CRN $\hat R_c$ for model \eqref{VBsystemAP}. 
For any $a_0\coloneqq a(0)>0$ and $w_0\coloneqq w(0)>0$, the model \eqref{VBsystemAP} admits the DFE
\begin{equation}\label{DFEdyn}
\left(S_{P}^*, S_{NP}^*, I_{P}^*, I_{NP}^*, p^*, I_M^*, J^*\right) \coloneqq \left(p_0, 1-p_0, 0, 0, p_0, 0, 0\right),
\end{equation}
where
\begin{equation}\label{eqp}
p_0\coloneqq \cfrac{a_0}{a_0+w_0}.
\end{equation}
Linearising the equations for the infected individuals around the DFE~\eqref{DFEdyn}, we obtain
\begin{equation*}
\left\{\setlength\arraycolsep{0.1em}
\begin{array}{rl} 
	I_{P}' &= \beta_{H\leftarrow M}\rho I_M \cfrac{p_0q}{c(p_0,q)+l} - \gamma I_{P} + a_0I_{NP} - w_0 I_{P}  ,\\[3mm]
	I_{NP}' &= \beta_{H\leftarrow M} \rho I_M \cfrac{1-p_0}{c(p_0,q)+l} - \gamma  I_{NP}  - a_0I_{NP}  + w_0  I_{P}  ,\\[3mm]
	I_M' &= \beta_{M\leftarrow H} \cfrac{q I_{P} +I_{NP} }{c(p_0,q)+l} -\mu I_M.
\end{array} 
\right.
\end{equation*}
Then, considering an infection matrix $B$ defined as in \eqref{infection:matrix} and the transition matrix
\begin{equation}\label{tranMat}
\hat \Sigma\coloneqq 
\begin{pmatrix} 
	\gamma + w_0  & -a_0 & 0\\
	-w_0 & \gamma + a_0 & 0\\
	0 & 0 & \mu 
\end{pmatrix},
\end{equation}
the CRN $\hat R_c=\hat {R}_c(p_0, q)$ is obtained as the spectral radius of $\hat K\coloneqq B\hat \Sigma^{-1}$, and, for $p_0$ as in~\eqref{eqp}, it reads
\begin{align}
\label{hatR0beh}\hat R_c(p_0, q)
= R_0\, \frac{1+l}{c(p_0,q) + l}\, \sqrt{
	\Delta
},
\end{align}
where
$$
\Delta = \frac{qp_0\left(a_0 q+\gamma q+w_0\right)+(1-p_0)\left(a_0q+w_0+\gamma\right)}{a_0+w_0+\gamma}.
$$
Note that 
$\Delta$
is non-negative for $a_0,\, w_0\in (0,+\infty)$; hence, $\hat R_c(p_0, q)\ge 0$ is well-defined. 
In particular, the following result holds.
\begin{proposition}\label{demers}
Let $R_c$ be defined as in \eqref{R0het} and $a_0,\, w_0\in (0, +\infty)$. Then 
\begin{equation}\label{RcminR0}
	\hat R_c(p_0, q)<R_c(p_0, q),\quad \text{for all}\quad p_0\in(0, 1],\ q\in [0,1),
\end{equation}
Moreover, if $a_0,\, w_0\to 0^+$ in such a way that $\underline{p}\coloneqq \lim_{a_0, w_0\to 0^+} p_0$ exists finite, then  
\begin{equation}\label{lim_to_stat}
	\lim_{a_0,\, w_0\to 0^+}\hat R_c(p_0, q)= R_c(\underline{p}, q).
\end{equation}
If instead $a_0,w_0\to +\infty$ in such a way that $\bar{p}\coloneqq \lim_{a_0, w_0\to +\infty} p_0$ exists finite, then 
\begin{equation}\label{lim_to_1_group}
	\lim_{a_0, w_0\to +\infty} \hat R_c(p_0, q) = R_0\, \cfrac{c(\bar p,q)(1+l)}{c(\bar p,q)+l}<R_0,\quad \text{for all}\quad\bar p\in (0, 1],\ q\in [0, 1)\,.
\end{equation}
\end{proposition}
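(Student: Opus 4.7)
The plan is to observe that $R_c(p_0,q)$ and $\hat R_c(p_0,q)$ share the common prefactor $R_0(1+l)/(c(p_0,q)+l)$, so all three claims reduce to a comparison of the radicands. Introducing
$$\Psi(p,q) \coloneqq 1-p(1-q^2), \qquad \Phi(p_0,q) \coloneqq \frac{qp_0(a_0 q+\gamma q+w_0)+(1-p_0)(a_0 q+w_0+\gamma)}{a_0+w_0+\gamma},$$
so that $R_c(p_0,q)$ contains $\sqrt{\Psi(p_0,q)}$ and $\hat R_c(p_0,q)$ contains $\sqrt{\Phi(p_0,q)}$, I would first regroup the numerator of $\Phi$ by $a_0$, $w_0$, $\gamma$. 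Using the identity $qp_0+(1-p_0)=c(p_0,q)$, the $a_0$ and $w_0$ blocks combine into $(a_0 q+w_0)\,c(p_0,q)$, while the $\gamma$ block collapses to $\gamma\,\Psi(p_0,q)$. This yields the key representation
$$\Phi(p_0,q) = \frac{(a_0 q+w_0)\,c(p_0,q)+\gamma\,\Psi(p_0,q)}{a_0+w_0+\gamma},$$
which will drive the entire argument.

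For \eqref{RcminR0}, subtracting $\Psi(p_0,q)$ from the representation above cancels the $\gamma$-terms. Expanding $(a_0q+w_0)c(p_0,q)-(a_0+w_0)\Psi(p_0,q)$ in powers of $q$ and invoking the DFE identity $p_0=a_0/(a_0+w_0)$ (so that $(1-p_0)a_0=p_0 w_0=a_0 w_0/(a_0+w_0)$) collapses the result to
$$\Phi(p_0,q)-\Psi(p_0,q) = -\frac{(1-q)^2\,a_0 w_0}{(a_0+w_0)(a_0+w_0+\gamma)},$$
which is strictly negative for $p_0\in(0,1]$, $q\in[0,1)$, $a_0,w_0>0$. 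Taking square roots and multiplying by the common prefactor yields \eqref{RcminR0}.

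For the two limit statements, I would simply read off the limit of $\Phi$ from the key representation and use continuity of $c(\cdot,q)$. As $a_0,w_0\to 0^+$ with $p_0\to\underline{p}$, the $\gamma$-terms dominate both numerator and denominator, so $\Phi\to\Psi(\underline{p},q)$ and $c(p_0,q)\to c(\underline{p},q)$; substitution into $\hat R_c$ recovers $R_c(\underline{p},q)$, giving \eqref{lim_to_stat}. As $a_0,w_0\to+\infty$ with $p_0\to\bar{p}$, the $\gamma$ contribution becomes negligible and $(a_0q+w_0)/(a_0+w_0)=qp_0+(1-p_0)=c(p_0,q)\to c(\bar p,q)$, so $\Phi\to c(\bar p,q)^2$ and hence $\sqrt{\Phi}\to c(\bar p,q)$; substituting produces the formula in \eqref{lim_to_1_group}. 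For the strict inequality with $R_0$ (assuming $l>0$), it suffices to notice that $c(\bar p,q)=1-\bar p(1-q)<1$ whenever $\bar p\in(0,1]$ and $q\in[0,1)$, which is immediate.

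The main obstacle is the algebraic simplification producing the clean factor $-(1-q)^2 a_0 w_0/[(a_0+w_0)(a_0+w_0+\gamma)]$; once this is in hand, the two limit claims are routine inspection of dominant terms in the key representation of $\Phi$.
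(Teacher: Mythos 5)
Your proposal is correct and follows essentially the same route as the paper: both reduce the comparison to the radicands and establish that the radicand of $\hat R_c$ equals $1-p_0(1-q^2)$ minus the positive term $p_0(1-q)^2w_0/(a_0+w_0+\gamma)$ (your factor $(1-q)^2a_0w_0/[(a_0+w_0)(a_0+w_0+\gamma)]$ is the same quantity after substituting $p_0w_0=a_0w_0/(a_0+w_0)$), from which the two limits follow by inspection. Your intermediate representation of the radicand as $\bigl[(a_0q+w_0)c(p_0,q)+\gamma\Psi(p_0,q)\bigr]/(a_0+w_0+\gamma)$ is a mildly cleaner bookkeeping device that makes the $a_0,w_0\to+\infty$ limit read off as $c(\bar p,q)^2$ directly, and your remark that the strict inequality in \eqref{lim_to_1_group} requires $l>0$ is a legitimate observation the paper leaves implicit.
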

\begin{proof}
Observe that
\begin{align}
	\label{firstR0beh}
	\Delta
	=1-
	p_0(1-q)-\frac{(1-q)(a_0 c(p_0,q)+\gamma p_0 q)}{a_0+w_0+\gamma}.
\end{align}
Hence, relation \eqref{RcminR0} immediately follows from \eqref{hatR0beh} and \eqref{firstR0beh}, since \eqref{firstR0beh} is smaller than 1 for $a_0,\, w_0\in (0, +\infty)$ and $q\in[0, 1)$. As for the limits in \eqref{lim_to_stat} and \eqref{lim_to_1_group}, it suffices to observe that
$$\lim\limits_{a_0,w_0\to 0^+}\frac{(1-q)(a_0 c(p_0,q)+\gamma p_0 q)}{a_0+w_0+\gamma}=(1-q)p_0q,$$
and
\begin{equation*}
	\lim_{a_0,w_0\to +\infty}\frac{p_0(1-q)^2w_0}{a_0+w_0+\gamma}=\lim_{a_0,w_0\to +\infty}\frac{(1-q)(a_0 c(p_0,q)+\gamma p_0 q)}{a_0+w_0+\gamma}=c(\bar p,q)(1-q)\bar p,
\end{equation*}
from which the claim follows.
\end{proof}

As an application of \Cref{demers}, let us consider epidemiological parameters as in \Cref{TableValues}, with $l=0.25$ and $\rho=2$, $p=0.6$ and $q=0.2$ as in \Cref{fig:static1}. Then formula \eqref{R0het} gives $R_c\approx 2.24>2.12\approx R_0$.
Let us take $w_0=0.1,\ 1,\ 10,\ 100$ and let $a_0=w_0p/(1-p)$, 
so that $p_0\coloneqq p$.
Then \eqref{hatR0beh} gives $\hat R_c\approx 2.00$ for $w_0=0.1$, $\hat R_c\approx 1.83$ for $w_0=1$, $\hat R_c\approx 1.79$ for $w_0=10,\ 100$, while the limit in \eqref{lim_to_1_group} gives $\hat R_c\approx 1.79$. Note that for all these choices of $w_0$, one has $\hat R_c< R_0<R_c$.

\section{Separation of time scales}\label{sec:GSPT1}

In this section, we assume that the dynamics of model \eqref{VBsystemAP} evolves on two distinct time scales. In particular, as is common in the literature, we assume that behavioural changes occur on a faster time scale than the epidemiological dynamics. On the one hand, as observed in~\cite{demers2018dynamic}, if the epidemiological dynamics evolves much faster than the behavioural ones, then model~\eqref{VBsystemAP} effectively reduces, on the fast time scale, to the model with static protective behaviour in \eqref{VBH}. In this case, the dynamics induced by protective behaviour is entirely determined by the initial fraction of individuals adopting protection, with no information-driven effects.
On the other hand, it is more interesting, from both a biological and a mathematical perspective, to investigate the role of information-induced behavioural changes when these occur on a much faster time scale than the epidemiological dynamics. This assumption is also common in the literature; see, for instance,~\cite{bulai2024geometric, della2024geometric, poletti2009spontaneous}. We also note that in~\cite{demers2018dynamic}, the authors heuristically discussed the case of infinitely large (and information-independent) switching rates, relating a two-group model with dynamical behavioural changes to a homogeneous-population model.
In contrast to the previous discussion, in this section, we provide a rigorous derivation of this reduction to a homogeneous population model using an approach based on GSPT. 

To this aim, we assume that $\beta_{H\leftarrow M},\,\beta_{M\leftarrow H},\,\gamma,\,\mu \in \mathcal{O}(\varepsilon)$, for some $0<\varepsilon\ll 1$, while all remaining parameters are $\mathcal{O}(1)$. This qualitatively corresponds to assuming that opinion spreads much faster than the disease and the relevant information, i.e.~the system evolves on two distinct time scales. Under these assumptions, with a slight abuse of notation to avoid the introduction of four new parameters, model \eqref{VBsystemAP} may be rewritten as
\begin{equation}\label{epsilonmodel}
\left\{\setlength\arraycolsep{0.1em}
\begin{array}{rl} 
	S_{P}' &= -\varepsilon\beta_{H\leftarrow M}\rho I_M \cfrac{q S_{P} }{c(p,q)+l} + a(J)S_{NP}  - w(J) S_{P}  ,\\[3mm]
	S_{NP}' &=  -\varepsilon\beta_{H\leftarrow M}\rho I_M \cfrac{S_{NP} }{c(p,q)+l}-a(J)S_{NP}  + w(J) S_{P}  ,\\[3mm]
	I_{P}' &= \varepsilon\beta_{H\leftarrow M} \rho I_M \cfrac{q S_{P} }{c(p,q)+l} - \varepsilon\gamma  I_{P}  + a(J)I_{NP}  - w(J) I_{P}  ,\\[3mm]
	I_{NP}' &= \varepsilon\beta_{H\leftarrow M} \rho I_M \cfrac{S_{NP} }{c(p,q)+l} - \varepsilon\gamma  I_{NP}  - a(J)I_{NP}  + w(J) I_{P}  ,\\[6mm]
	p' &= a(J)-\left[a(J)+w(J)\right]p,\\[3mm] 
	I_M' &= \varepsilon\beta_{M\leftarrow H}(1-I_M) \cfrac{q I_{P} +I_{NP} }{c(p,q)h+l} -\varepsilon\mu I_M ,\\[5mm]
	Z_1' &=\varepsilon k\left(I_{P}  + I_{NP} \right)-\varepsilon k Z_1 \\[5mm]
	Z_i' &=\varepsilon kZ_{i-1} -\varepsilon kZ_i , \qquad i=2,\dots, n,
\end{array} 
\right.
\end{equation}
with $Z_n =J$. Letting $\varepsilon\to 0^+$ in system \eqref{epsilonmodel}, we obtain the corresponding so-called \emph{layer system}
\begin{equation}\label{fastvar}
\left\{\setlength\arraycolsep{0.1em}
\begin{array}{rl} 
	X_{P}' &= a(J)X_{NP}  - w(J) X_{P}  ,\\[3mm]
	X_{NP}' &= -a(J)X_{NP}  + w(J) X_{P},\quad \text{for}\quad X\in \{S, I\},\\[3mm]
	p' &= a(J)-\left[a(J)+w(J)\right]p,\\[3mm]
	J'&=0.
\end{array} 
\right.
\end{equation}
Note that we omit the equations for $I_M$ and $Z_1,\dots, Z_{n-1}$, since they neither evolve on this time scale, nor appear in the remaining equations.
Moreover, we observe that 
\begin{equation}\label{def_X_H}
X_H\coloneqq X_{P}+X_{NP},\quad \text{for}\quad X\in \{S, I\},
\end{equation}
satisfies $X_H'=0$. Therefore, the layer system~\eqref{fastvar} can be rewritten as 
\begin{equation}\label{fastvarreduced}
\left\{\setlength\arraycolsep{0.1em}
\begin{array}{rl} 
	S_{P}' &= a(J)S_H  -[a(J)+ w(J)] S_{P}  ,\\[3mm]
	I_{P}' &= a(J)I_H  -[a(J)+ w(J)] I_{P} ,\\[3mm]
	p' &= a(J)-\left[a(J)+w(J)\right]p,\\[3mm]
	S_H'&=I_H'=J'=0.
\end{array} 
\right.
\end{equation}
Let $\tilde\Omega:=[0,1]^6$. The set of equilibria of system \eqref{fastvarreduced} in $\tilde\Omega$  is given by
\begin{align}\label{eq:crit_man_3}
\mathcal{C}_0\coloneqq \left\{ (S_{P},I_{P},p, S_H, I_H, J)\in \tilde \Omega \ \bigg| \  S_{P}=pS_H, \ I_{P}=pS_H,\ p=\frac{a(J)}{a(J)+w(J)}\right\}.
\end{align}
In GSPT, this set is known as the \emph{critical manifold} of system \eqref{epsilonmodel}. The following result holds.

\begin{proposition}\label{propAT}
The set $\mathcal C_0$ defined in \eqref{eq:crit_man_3} is globally attracting for system \eqref{fastvarreduced} on $\tilde\Omega$.
\end{proposition}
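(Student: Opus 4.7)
The plan is to observe that the layer system \eqref{fastvarreduced} is essentially trivial once one notices that $S_H$, $I_H$, and $J$ are constants of motion (their derivatives vanish identically). Consequently, the three nontrivial equations, namely those for $S_P$, $I_P$ and $p$, all have the common scalar linear form
\begin{equation*}
y' \;=\; \alpha(J) - \beta(J)\, y,
\end{equation*}
where $\alpha(J) \in \{a(J)S_H,\, a(J)I_H,\, a(J)\}$ in the three cases and $\beta(J) := a(J)+w(J)$ in all three. By \Cref{ass:aw}, $\beta(J) > 0$ for every $J \ge 0$, so each of these scalar ODEs has constant strictly positive exponential decay rate $\beta(J)$ along any given orbit of the layer system.

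Because $J$ is frozen along orbits, $\alpha$ and $\beta$ are constants and the equations can be integrated explicitly as
\begin{equation*}
y(t) = y^{*} + \bigl(y(0)-y^{*}\bigr)\, e^{-\beta(J)\,t}, \qquad y^{*}=\frac{\alpha(J)}{\beta(J)}.
\end{equation*}
Substituting back the three choices of $\alpha$ yields
\begin{equation*}
S_{P}(t)\longrightarrow \frac{a(J)}{a(J)+w(J)}\,S_H, \quad
I_{P}(t)\longrightarrow \frac{a(J)}{a(J)+w(J)}\,I_H, \quad
p(t)\longrightarrow \frac{a(J)}{a(J)+w(J)},
\end{equation*}
each at exponential rate $\beta(J)$, uniformly for initial data in $\Omega$. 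This is precisely convergence to the critical manifold $\mathcal{C}_0$ defined in \eqref{eq:crit_man_3}.

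It remains only to verify that the solution does not leave $\Omega = [0,1]^6$ along the way, which is needed for the statement to make sense. For each of the three scalar equations the vector field is nonnegative at $y=0$ (giving invariance of nonnegativity), while at $y = 1$ (for $p$), $y = S_H$ (for $S_P$) and $y = I_H$ (for $I_P$) it equals $-w(J)$, $-w(J)S_H$ and $-w(J)I_H$ respectively, all non-positive; hence $p \in [0,1]$, $S_P \in [0,S_H] \subseteq [0,1]$ and $I_P \in [0,I_H] \subseteq [0,1]$ are forward-invariant, and likewise $S_{NP} = S_H - S_P$, $I_{NP} = I_H - I_P$ remain in $[0,1]$. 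I do not expect any significant obstacle: the entire argument reduces to the explicit integration of three decoupled scalar linear ODEs with constant coefficients, the "constancy" coming from the trivial conservation laws $S_H' = I_H' = J' = 0$ satisfied by the layer problem.
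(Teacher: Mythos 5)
Your proposal is correct and follows essentially the same route as the paper: since $J$ (and hence $S_H$, $I_H$) is frozen in the layer system, the equations for $S_P$, $I_P$, $p$ are scalar linear ODEs with constant coefficients whose explicit solutions converge exponentially to $\tfrac{a(J)}{a(J)+w(J)}X_H$, i.e.\ to $\mathcal C_0$. The additional check of forward invariance of $\Omega$ is a harmless (and correct) supplement that the paper omits.
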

\begin{proof}
For any $(S_{P}(0),I_{P}(0), p(0),S_H(0), I_H(0), J(0))\in \tilde \Omega$, the solutions of system~\eqref{fastvarreduced} are explicitly given by
\begin{equation*}
	X_{P}(t) = e^{-[a(J)+w(J)] t}X_{P}(0)+ \cfrac{a(J)}{a(J)+w(J)}X_H(1-e^{-[a(J)+w(J)]t}),
\end{equation*}
for $(X_P, X_H)\in \{(S_P, S_H),\  (I_P, I_H),\ (p,1)\}$, which gives $(S_{P}(t), I_{P}(t), p(t))\to \mathcal{C}_0$ as $t\to +\infty$.
\end{proof}

\Cref{propAT} ensures that the dynamics of system \eqref{epsilonmodel} is attracted to the critical manifold \eqref{eq:crit_man_3}, where the small parameters begin to play a major role. 
Now we study the slow dynamics of the system by considering the slow time scale $\tau =\varepsilon t$.
In this setting, we have 
\begin{equation*}
S_{P}=p(J)S_H,\quad I_{P}=p(J)I_H,\quad     p(J)=\cfrac{a(J)}{a(J)+w(J)}. 
\end{equation*}
Thus, system \eqref{epsilonmodel} reduces to the \emph{one-group model}
\begin{equation}\label{slowINFsystem}
\left\{\setlength\arraycolsep{0.1em}
\begin{array}{rl} 
	\dot{S}_H &= -\beta_{H\leftarrow M}\rho I_M h(p(J), q) S_H,\\[3mm]
	\dot{I}_H &= \beta_{H\leftarrow M} \rho I_M h(p(J), q) S_H - \gamma  I_H  ,\\[3mm]
	\dot{I}_M &= \beta_{M\leftarrow H}(1-I_M) h(p(J), q) I_H  -\mu I_M,\\[3mm]
	\dot Z_1 &=k I_H- k Z_1 \\[3mm]
	\dot Z_i &= kZ_{i-1} - kZ_i , \qquad i=2,\dots, n,
\end{array} 
\right.
\end{equation}
with $Z_n =J$, where $\dot Y \coloneqq \frac{\text{d}}{\text{d}\tau}Y$, and \begin{equation}
h(p,q)\coloneqq \frac{c(p, q)}{c(p,q)+l},\qquad p\in[0, 1],\quad q\in [0, 1).
\label{def_H}
\end{equation}
Note that $h(p, q)\ge 0$ for $p\in[0, 1]$ and $q\in [0, 1)$. Moreover, since \Cref{ass:aw} ensures that $w(x)>0$ for $x\geq 0$, we obtain
\begin{equation*}
h(p(x), q)=\frac{w(x)+qa(x)}{[w(x)+qa(x)]+l[w(x)+a(x)]}=\frac{1}{1+l\frac{w(x)+a(x)}{w(x)+qa(x)}}\,.  
\end{equation*}
\Cref{ass:aw} also ensures that 
\begin{equation*}
p'(x)=\cfrac{a'(x)w(x)-a(x)w'(x)}{[a(x)+w(x)]^2}>0,\qquad x\in [0, +\infty),   
\end{equation*}
thus
\begin{equation}\label{der:h}
\partial_p h(p(x), q)\cdot  p'(x)=-\cfrac{l(1-q)p'(x)}{[1-(1-q)p(x)+l]^2}<0,\qquad x\in [0,+\infty).
\end{equation}

\begin{remark}
We have $h(p(x), q)>0$ for $q\in [0, 1)$ and $x\in [0, +\infty)$. 
Thus, equation~\eqref{der:h} ensures that $h(p(x), q)$ admits limit for $x\to+\infty$, and the largest reduction for the transmission rates $\beta_{H\leftarrow M},\,\beta_{M\leftarrow H}$ is obtained at $\bar h\coloneqq \lim_{x\to+\infty} h(p(x), q)$.
\end{remark}

Lastly, we establish the following result on the asymptotic behaviour of the solutions of system~\eqref{slowINFsystem}, which follows directly by arguing as in the proof of \Cref{prop:extinction} and using \Cref{prop_ex_uniq2}.
\begin{proposition}\label{extinction_1group}
For system \eqref{slowINFsystem}, one has $\lim_{\tau\to+\infty} I_H(\tau) =\lim_{\tau\to+\infty} I_M(\tau) =0$.
\end{proposition}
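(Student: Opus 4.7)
The plan is to mirror the proof of \Cref{prop:extinction}, using the boundedness of solutions provided by (the analogue of) \Cref{minimum2} for system \eqref{slowINFsystem}. First, I would observe that from the $S_H$ and $I_H$ equations in \eqref{slowINFsystem} one has
\begin{equation*}
\frac{\dd}{\dd \tau}(S_H+I_H) = -\gamma I_H \le 0,
\end{equation*}
so that $S_H+I_H$ is non-increasing and bounded below by $0$. Hence it admits a finite limit $(S_H+I_H)^\infty\ge 0$ as $\tau\to+\infty$, and integrating the identity above on $[0,+\infty)$ yields
\begin{equation*}
-\infty < (S_H+I_H)^\infty - (S_H(0)+I_H(0)) = -\gamma\int_0^{+\infty} I_H(\tau)\dd \tau,
\end{equation*}
i.e. $I_H\in L^1([0,+\infty))$.

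Next, to conclude that $I_H(\tau)\to 0$, I would invoke a Barb\u{a}lat-type argument: since all state variables are confined to a bounded set and $h(p(J),q)\in [0,1]$, the right-hand side of the $I_H$-equation is uniformly bounded, so $I_H$ is Lipschitz, hence uniformly continuous on $[0,+\infty)$. Combined with integrability of $I_H$, this gives $\lim_{\tau\to+\infty}I_H(\tau)=0$. (This is the step implicitly used also in the proof of \Cref{prop:extinction}.)

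Finally, for $I_M$, I would use the $I_M$-equation in the form
\begin{equation*}
\dot I_M \le \beta_{M\leftarrow H}\, h(p(J),q)\, I_H - \mu I_M \le \beta_{M\leftarrow H}\, I_H - \mu I_M.
\end{equation*}
Fixing $\varepsilon>0$ and choosing $T>0$ such that $I_H(\tau)<\varepsilon$ for every $\tau\ge T$, a standard comparison with the linear ODE $\dot y = \beta_{M\leftarrow H}\varepsilon -\mu y$ yields
\begin{equation*}
\limsup_{\tau\to+\infty} I_M(\tau) \le \frac{\beta_{M\leftarrow H}\,\varepsilon}{\mu}.
\end{equation*}
Letting $\varepsilon\to 0^+$ and recalling $I_M\ge 0$ gives $\lim_{\tau\to+\infty} I_M(\tau)=0$.

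The only genuinely delicate point is the passage from $I_H\in L^1$ to $I_H\to 0$, which requires the uniform continuity argument above rather than pure integrability; everything else is a direct transcription of the argument used in the proof of \Cref{prop:extinction}.
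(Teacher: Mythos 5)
Your proposal is correct and follows essentially the same route as the paper, which simply states that the result "follows by proceeding as in the proof of Proposition~\ref{prop:extinction}" (monotonicity of $S_H+I_H$, integrability of $I_H$, and then the vanishing of $I_M$). You in fact supply more detail than the paper does, by making explicit the Barb\u{a}lat-type step (uniform continuity of $I_H$) needed to pass from $I_H\in L^1$ to $I_H\to 0$, and the comparison argument for $I_M$, both of which the paper leaves implicit.
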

Consequently, as expected, the dynamics of \eqref{slowINFsystem} always ends with epidemic extinction, independently of the hosts' protective behaviour. However, we are interested in the transient behaviour of the solutions of system~\eqref{slowINFsystem}, which is far from trivial due to the absence of non-trivial equilibria (i.e. with $I_H\ne 0$ or $I_M\ne 0$).

\subsection{Numerical simulations}\label{sec:numerics}

In this section, we present numerical experiments for the proposed model with behavioural changes, comparing the outputs of model \eqref{VBsystemAP} with those of the \emph{fast} system \eqref{fastvarreduced}.
For the simulations, we use epidemiological parameters appropriate for a dengue outbreak, listed in \Cref{TableValues}.
In addition, we consider $l=0.25$ and $l=1$, and we vary $\rho$ to preserve the value of $R_0\approx 2.12$, recalling \eqref{R0}. More precisely, we take $\rho=2$ for $l=0.25$ and $\rho= 5.12$ for $l=1$.

We assume that, at the beginning of the outbreak, the human population is at the opinion equilibrium, i.e. $p(0)=p_0$, for $p_0$ defined as in \eqref{eqp}.
As for the rates $a$ and $w$, we consider the following forms:
\begin{equation*}
a(x)=a_0+\alpha x,\quad a_0,\ \alpha\ge0,\quad\text{and}\quad     w(x)\equiv w_0,\quad w_0>0.
\end{equation*}
We then vary $w_0>0$, while $a_0=w_0$, so that, according to \eqref{eqp}, $p_0=0.5$, i.e.\ at the beginning of the epidemic, $50\%$ of the host population adopts personal protective measures against mosquito bites, possibly an optimistic assumption. We set $\alpha =  \chi\, w_0$, where $\chi>0$ accounts for both the effect of PHS campaigns and the human reaction to information.
Finally, we vary $q\in [0, 1)$, the shape parameter $n\in \N$, and the delay $\varphi$ of the Erlang distribution in \eqref{erlang}. Since we observe the outbreak from its very beginning, we consider the initial conditions  \begin{gather*}
S_P(0)=p_0,\quad S_{NP}(0)=1-S_P(0),\quad I_P(0)=I_{NP}(0)=0,\quad I_M(0)=10^{-4}.
\end{gather*}
The numerical results are obtained using the MATLAB built-in ODE solvers \texttt{ode45} and \texttt{ode23s} with standard tolerances (the latter is employed when $w_0\ge 10$). 
\begin{remark}
The term $w_0$ in $\alpha$ is used to normalise the human response to information with respect to the magnitude of the reaction term $w_0$. 
To clarify the choice of $\alpha$, let us consider the case $J=I_H$, where $I_H$ is defined as in \eqref{def_X_H}. In this case, taking $\chi=10^4$ yields, for the one-group model \eqref{fastvarreduced}, that $a(x)=2w(x)$ exactly when $x=J=I_H=10^{-4}$, since $a_0=w_0$. Hence, for instance, if $H=10^5$, then $a(x)=2w(x)$ when the total number of infected humans at $t\ge 0$ is equal to $10$. 
Moreover, for these choices of $a$ and $w$, it is straightforward to verify that
\begin{equation*}
	h(p(x), q)=\frac{1}{2}h(0, q)\iff x =\frac{a_0q(1+2l)+l(w_0-a_0)+w_0}{\alpha[l-q(1+2l)]},\quad l\ne q(1+2l).
\end{equation*}
In particular, in the case of perfect protection, $q=0$, if $l=0.25$, the ``contact rates'' between mosquitoes and humans are halved exactly when $I_H=4\cdot 10^{-4}$ (i.e. when the total number of infected individuals is $40$), while for $q=0.1$ ($90\%$ of efficacy), one has $I_H=1.2\cdot10^{-3}$.  
\end{remark}

\begin{figure}[ht]
\begin{center}
	\includegraphics[width=.8\linewidth]{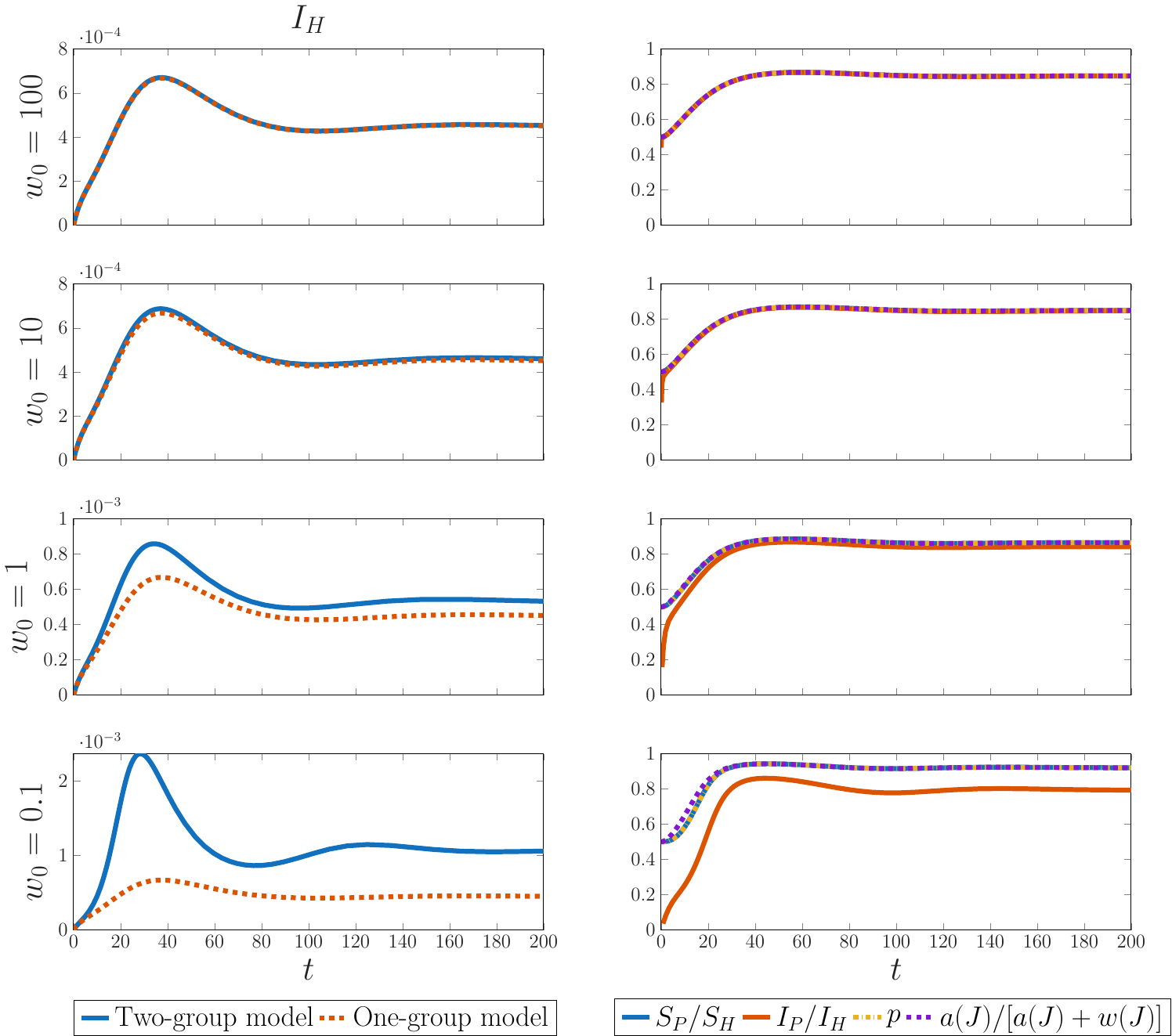}
	\caption{Comparison between \eqref{VBsystemAP} (Two-group model) and \eqref{slowINFsystem} (One-group model) with model parameters and initial conditions as in \Cref{sec:numerics} with $l=0.25$, $\rho =2$, $q=0$ (perfect protection), $\chi=10^4$, $n=1$, $\varphi=20$ days, and from bottom to top, $w_0=0.1, 1, 10, 100$ (which give $\hat R_c\approx 2.17,\  1.85,\ 1.78,\ 1.77,$ respectively). Left: $I_H$ (defined as in \eqref{def_X_H}) as a function of time for model \eqref{VBsystemAP} and \eqref{slowINFsystem}. Right: proportions $S_P/S_H$, $I_P/I_H$, $p$ and $a(J)/[a(J)+w(J)]$ as functions of time. Note that in the plots for $w_0=10, 100$, the line corresponding to $S_P/S_H$ is perfectly overlapped with those relevant to $I_P/I_H, p$ and $a(J)/[a(J)+w(J)]$, which indicates a fast convergence to the manifold $\mathcal C_0$ in \eqref{eq:crit_man_3}.
		\label{fig:2vs1group}}
\end{center}
\end{figure}

We start by comparing the outputs of \eqref{VBsystemAP} with those of \eqref{slowINFsystem} for different orders of magnitude of $w_0$. In \Cref{fig:2vs1group},
we take $l=0.25$ and $\rho=2$, and we observe that, when $w_0=10$ days$^{-1}$ or $w_0=100$ days$^{-1}$ (i.e.\ it is from two to three orders of magnitude larger than $\gamma$), the results for the two models are (almost) indistinguishable. In particular, note that in the plots for $w_0=10, 100$ in the right column, the line corresponding to $S_P/S_H$ perfectly overlaps with those corresponding to $I_P/I_H, p$ and $a(J)/[a(J)+w(J)]$, as the solution of \eqref{VBsystem4} almost immediately converges to the critical manifold $\mathcal C_0$ in \eqref{eq:crit_man_3}.  Interestingly, even for $w_0=1$, the simulation of system \eqref{slowINFsystem} provides a good approximation of model \eqref{VBsystemAP}, although the convergence to the manifold $\mathcal C_0$ is much slower in this case, leading to significantly different timing and magnitudes of the (first) epidemic peak.

Finally, we can observe that for $w_0=0.1$ (i.e.\ of the same order of magnitude as $\gamma$), the dynamics of the two models differ substantially, as the solution is still far from converging to the manifold \eqref{eq:crit_man_3}. This is also the only case, among those considered here, for which $\hat R_c\approx 2.17>2.12 \approx R_0 $; see the caption of \Cref{fig:2vs1group}. Hence, in this case, $w_0$ is not large enough to prevent protective behaviour from increasing the probability of outbreak.

Motivated by the results shown in Figure \ref{fig:2vs1group}, in the following, we consider the outputs of model \eqref{VBsystemAP} with $w_0=0.1$ and $w_0=1$, and compare them with those of \eqref{slowINFsystem}. In particular, we take $\chi=10^4$, $\varphi=30$ days, and vary the shape parameter $n$ of the Erlang distribution in \eqref{erlang}, considering $n=1,~2,~5$.

\begin{figure}
\begin{center}
	\includegraphics[width=1.\linewidth]{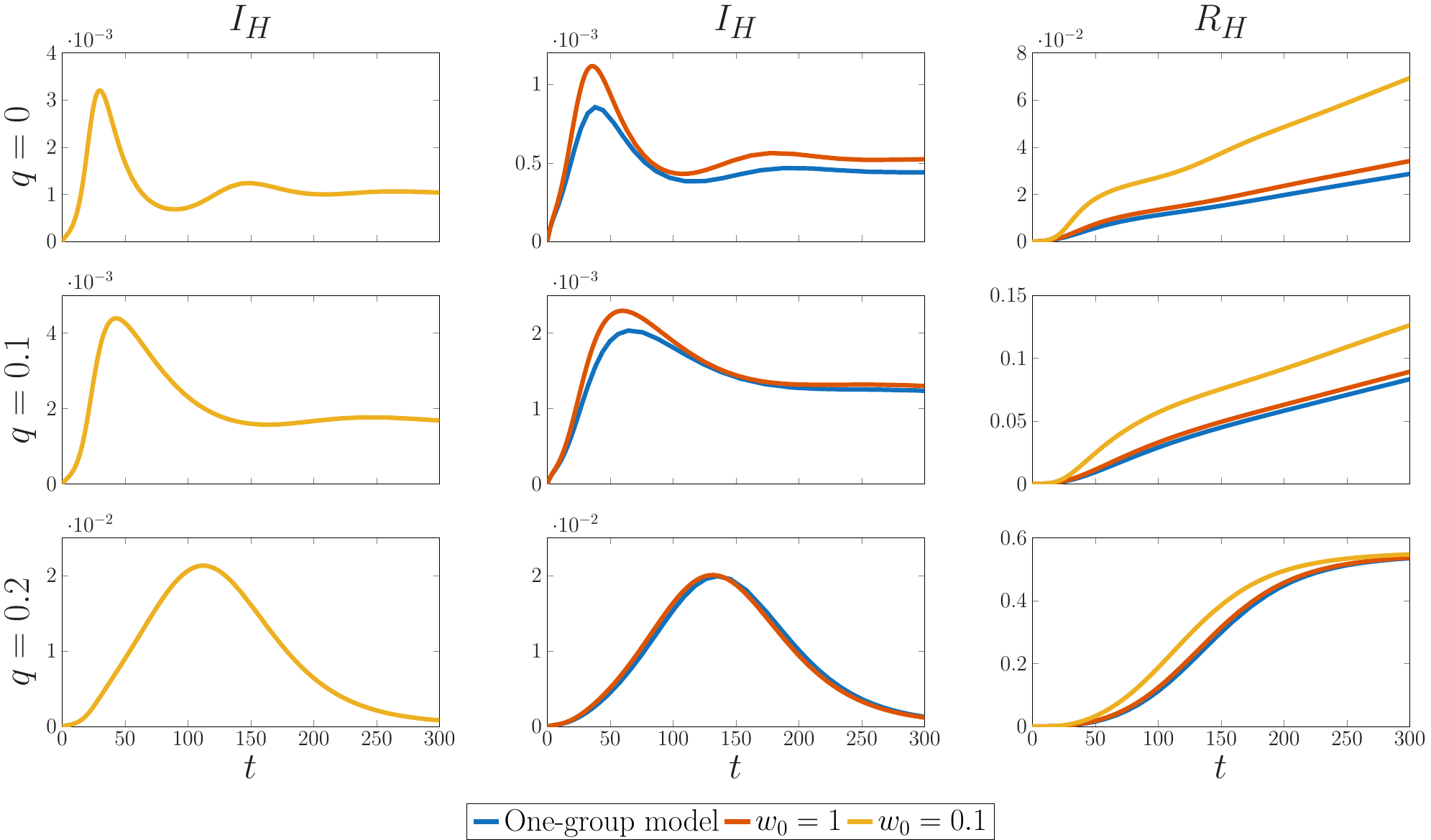}
	\caption{Plots for \eqref{VBsystemAP} (red for $w_0=1$, yellow for $w_0=0.1$) and \eqref{slowINFsystem} (blue, one-group model) with model parameters and initial conditions as in \Cref{sec:numerics}, $l=0.25$, $\rho =2$,  and $\chi =10^4$ for, from top to bottom, $q=0, 0.1,\ 0.2$, with  $n=1$ and $\varphi=30$ days in \eqref{erlang}. Left and centre: $I_H$ as a function of time. Right: $R_H$  as a function of time. \label{fig:sim_exp_phi30}}
	\vspace{3mm}
	\includegraphics[width=1.\linewidth]{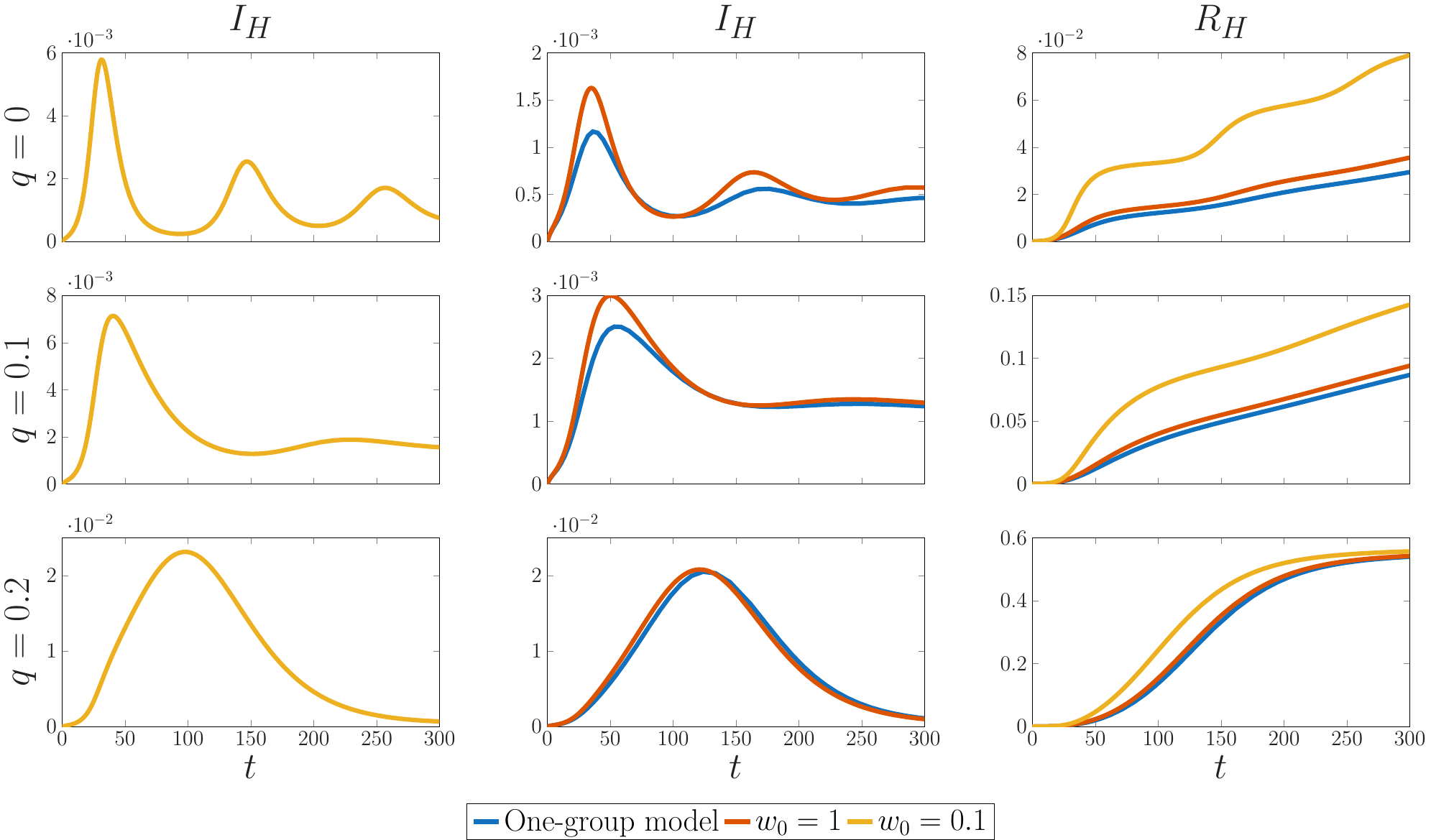}
	\caption{Plots for \eqref{VBsystemAP} (red for $w_0=1$, yellow for $w_0=0.1$) and \eqref{slowINFsystem} (blue, one-group model) with model parameters and initial conditions as in \Cref{sec:numerics}, $l=0.25$, $\rho =2$,  and $\chi =10^4$ for, from top to bottom, $q=0,\ 0.1,\ 0.2$, with  $n=2$ and $\varphi=30$ days in \eqref{erlang}. Left and centre: $I_H$ as a function of time. Right: $R_H$  as a function of time. \label{fig:sim_erl2_phi30}}
\end{center}
\end{figure}

\begin{figure}
\begin{center}
	\includegraphics[width=1.\linewidth]{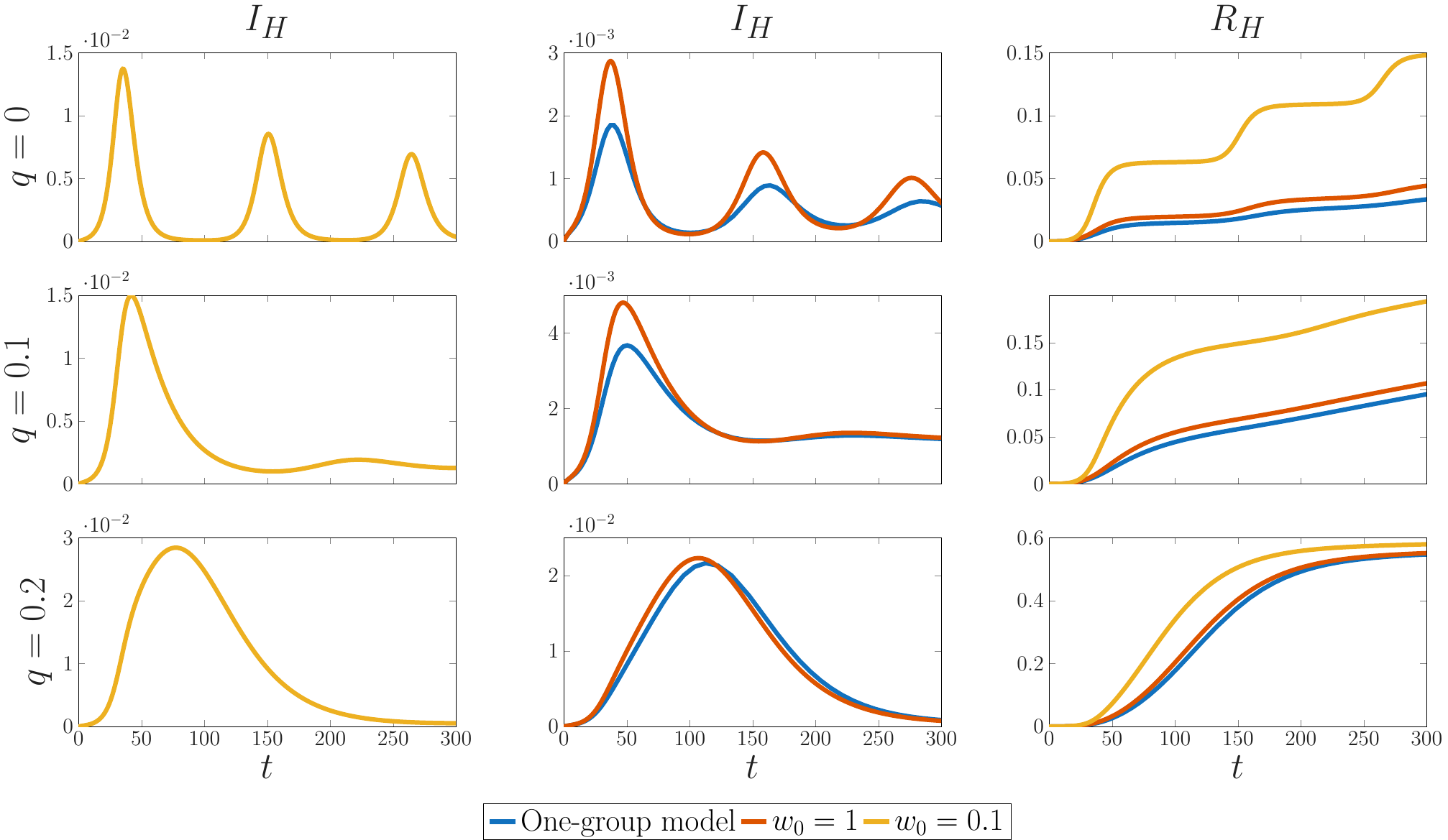}
	\caption{Plots for \eqref{VBsystemAP} (red for $w_0=1$, yellow for $w_0=0.1$) and \eqref{slowINFsystem} (blue, one-group model) with model parameters and initial conditions as in \Cref{sec:numerics}, $l=0.25$, $\rho =2$,  and $\chi =10^4$ for, from top to bottom, $q=0, 0.1,\ 0.2$, with  $n=5$ and $\varphi=30$ days in \eqref{erlang}. Left: $I_H$ as a function of time. Right: $R_H$  as a function of time. \label{fig:sim_erl5_phi30}}
	\vspace{3mm}
\end{center}
\end{figure}

\begin{figure}
\begin{center}
	\includegraphics[width=1.\linewidth]{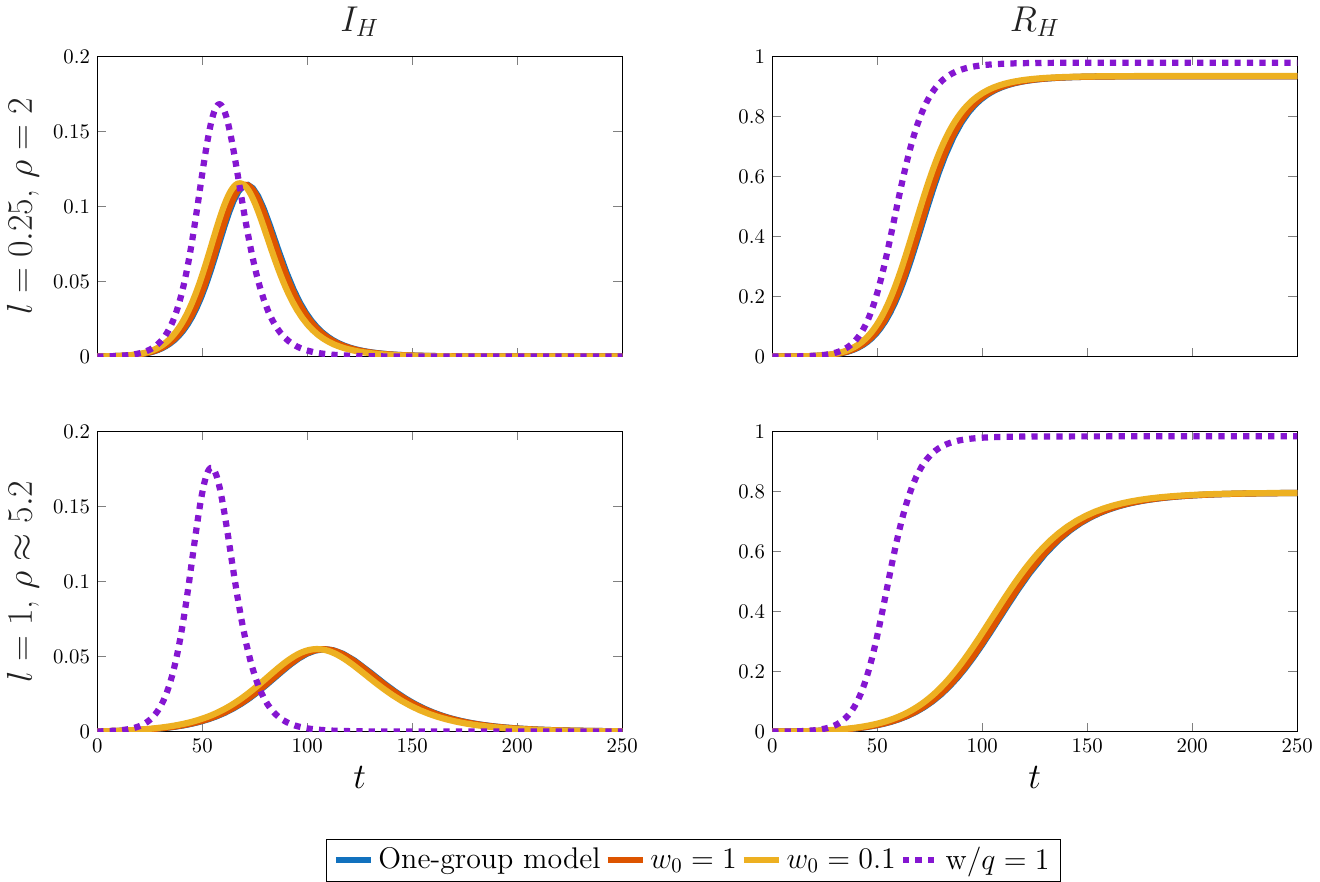}
	\caption{Plots for \eqref{VBsystemAP} (red for $w_0=1$, yellow for $w_0=0.1$) and \eqref{slowINFsystem} (blue, one-group model) with model parameters and initial conditions as in \Cref{sec:numerics}, $\chi =10^4$, $q=0.5$, $n=1$, and $\varphi=30$ days in \eqref{erlang}. Left: $I_H$ as a function of time. Right: $R_H$ as a function of time. The dashed lines represent the same simulations for the model without protective behaviour ($q=1$).  Upper row: $l=0.25$ and $\rho =2$. Lower row: $l=1$ and $\rho = 5.12$.  \label{fig:protection_05}}
\end{center}
\end{figure}

In Figures \ref{fig:sim_exp_phi30},  \ref{fig:sim_erl2_phi30},  and \ref{fig:sim_erl5_phi30}, we plot $I_H$ (left and centre) and $R_H$ (right) as functions of time for $n=1, 2, 5$, respectively, and for three different values of $q$, namely $q=0$ (upper row), $q=0.1$ (middle row), and $q=0.2$ (lower row). We observe that for $q=0$ and $q=0.1$, information-induced behavioural changes effectively prevent a large portion of infections for all choices of $n$ (note the different scales on the vertical axes for the case $q=0$ and $q=0.1$).  Interestingly, in the first and second rows, after the first epidemic peak, the solutions for $I_H$ appear to enter a quasi-stationary state, although the model does not admit any non-trivial equilibria (i.e.\ with $I_H, I_M\ne0$). Notably, this phenomenon is entirely due to the behavioural adaptation in response to the epidemic and does not correspond to convergence of 
the solutions of \eqref{VBsystemAP} and \eqref{fastvarreduced} to true equilibria of their respective systems. 
Indeed, observe that $R_H$ increases in all cases, indicating that no stationary situation has been reached by the solution, and $I_H$ will inevitably go extinct in the long run, as shown in \Cref{extinction_1group}. Furthermore, when the memory distribution is more concentrated around its mean (and hence further in the past), as in Figures \ref{fig:sim_erl2_phi30} and \ref{fig:sim_erl5_phi30}, we observe the possible emergence of multiple epidemic waves. Notably, this occurs precisely in the cases where behavioural changes effectively prevent a large proportion of infections, namely for $q=0$ and $q=0.1$. On the other hand, for $q=0.2$, although the adoption of protective measures still contributes to containing the epidemic, a larger outbreak is observed compared to the cases $q=0$ and $q=0.1$, and no oscillatory behaviour or quasi-stationary regimes arise.

Finally, in \Cref{fig:protection_05} we compare the outputs of models \eqref{VBsystemAP} and \eqref{fastvarreduced} for $l=0.25$ and $l=1$ (hence with $\rho =2$ ad $\rho =5.12$, respectively) and $q=0.5$ (i.e.\ a 50\% probability of protection failure). In this case, we observe, in both the upper and lower rows, that the rates at which individuals change their behaviour in response to information do not significantly affect the overall dynamics. Nevertheless, both scenarios show that the model with protective behaviour leads to a smaller final size compared to the model without protection ($q=1$). In particular, the model with $l=1$ predicts a smaller final size than the model with $l=0.25$.

\section{Behaviour-induced dynamics in the low attack rate regime}\label{Sec:lowattackratio}

In this section, we investigate the dynamics of model \eqref{slowINFsystem} in a scenario where interventions and behavioural changes have successfully prevented a large proportion of infections before herd immunity is achieved and seasonality becomes dominant. In this setting, the susceptible populations are still sufficiently large, with $I_H\ll S_H$ and $I_M\ll S_M$, and only a small amount of information $J$ is left in the system (for instance, after the first epidemic peak has occurred)~\citep{zhang2023renewal}.
To model this \emph{low attack rate} scenario (see~\cite{dOnofrioLAR2021}), we assume that there exists $\bar t\ge 0$ such that
$I_H(t), I_M(t), J(t)\in \mathcal{O}(\varepsilon)$, for $0<\varepsilon\ll 1$ for all $t\ge \bar t$, while $S(t)\in\mathcal{O}(1)$, and the functions
$a,\ w$ satisfy $a(x)=\bar a(\varepsilon^{-1} x)$ and $w(x)=\bar w(\varepsilon^{-1} x)$. 
Note that the latter condition implies that information indicating that a very small fraction of the population has been infected is sufficient to induce a highly protective behaviour. 
We then introduce rescaled variables $\bar I_H = \varepsilon^{-1} I_H $ and so on.
With a slight abuse of notation, we retain the same symbols to avoid introducing additional variables or parameters (for instance, $\alpha$ will denote $\alpha/\varepsilon$). The resulting model reads
\begin{equation*}
\left\{\setlength\arraycolsep{0.1em}
\begin{array}{rl} 
	\dot{S}_H &= -\beta_{H\leftarrow M}\rho\varepsilon I_M h(p(J), q)S_H ,\\[3mm]
	\varepsilon\dot{I}_H &= \beta_{H\leftarrow M} \rho \varepsilon I_M h(p(J), q)S_H - \gamma  \varepsilon I_H  ,\\[3mm]
	\varepsilon\dot{I}_M &= \beta_{M\leftarrow H}(1- \varepsilon I_M) h(p(J), q) \varepsilon I_H  -\mu \varepsilon I_M,\\[3mm]
	\varepsilon\dot Z_1 &=k \varepsilon I_H -k \varepsilon Z_1, \\[3mm]
	\varepsilon\dot Z_i &= k \varepsilon Z_{i-1} - k \varepsilon Z_i, \qquad i=2,\dots, n-1,\\[3mm]
	\varepsilon \dot J &=k\varepsilon Z_{n-1} -k \varepsilon J,
\end{array} 
\right.
\end{equation*}
Then, by simplifying $\varepsilon$ in every equation except the first one and taking the limit $\varepsilon\to 0^+$, we obtain the corresponding layer system
\begin{equation}\label{outbreak}
\left\{\setlength\arraycolsep{0.1em}
\begin{array}{rl}
	\dot{I}_H &= \beta_{H\leftarrow M}\rho I_M h(p(J), q) S_H - \gamma  I_H  ,\\[3mm]
	\dot{I}_M &= \beta_{M\leftarrow H}h(p(J),q) I_H -\mu I_M,\\[3mm]
	\dot Z_1 &=k I_H -k Z_1 \\[3mm]
	\dot Z_i &= k Z_{i-1} - k  Z_i , \qquad i=2,\dots, n-1,\\[3mm]
	\dot J &=kZ_{n-1} -k  J,
\end{array} 
\right.
\end{equation}
with $\dot S_H=0$. Observe that the resulting systems resemble the model studied in~\cite{zhang2023renewal}, and subsequently extended in~\cite{ando2025}, for outbreaks of a directly transmitted infection under a low-attack-rate assumption.
In the following section, we will investigate the existence and stability of the equilibria of \eqref{outbreak} to gain insights into the transient dynamics of models \eqref{VBsystemAP} and \eqref{slowINFsystem}.

\subsection{Equilibria}
In this section, we investigate stationary solutions of system \eqref{outbreak}. Convergence to these equilibria corresponds to a rapid ``collapsing'' of the fast--slow system onto its critical manifold, in whose neighbourhood the slow dynamics play a central role. Characterising the existence of such equilibria and the associated convergence allows us to describe the transient behaviour of the system, even though the asymptotic dynamics inevitably leads to disease extinction, as shown in \Cref{extinction_1group}. 

We observe that the equilibria of system \eqref{outbreak} are all points $(I_H^*, I_M^*, J^*)\in \R^3_{\geq0}$ that satisfy the system of nonlinear equations  
\begin{equation}\label{equilibria}
\left\{\setlength\arraycolsep{0.1em}
\begin{array}{rl} 
	I_H^*&=\cfrac{\beta_{H\leftarrow M}}{\gamma}\rho S_H h(p(J^*), q)I_M^*,\\[3mm]
	I_M^* &= \cfrac{\beta_{M\leftarrow H}}{\mu} h(p(J^*),q) I_H^*,\\[5mm]
	Z_i^*&=J^* =I_H^*,\qquad i=1,\dots, n-1.
\end{array} 
\right.
\end{equation}
Substituting the second and third equations of \eqref{equilibria} into the one for $I_H^*$, we obtain the nonlinear equation
\begin{equation} \label{nonlineareq}
I_H^* \left(\hat R_e^2 h(p(I_H^*),q)^2 -1\right)=0,
\end{equation}
where 
\begin{equation}\label{Re}
\hat R_e\coloneqq  \sqrt{\cfrac{\beta_{H\leftarrow M}\beta_{M\leftarrow H}}{\gamma\mu}\rho S_H}
\end{equation}
is the \emph{Effective Reproduction Number} (ERN, i.e.\ the reproduction number when the population is not fully susceptible or partially immune~\citep{pellis2022}) in the absence of protective behaviour ($p=0$ or $q=1$) and non-competent hosts ($l=0$).
From \eqref{nonlineareq}, it follows that \eqref{outbreak} admits the DFE $I_H^*=I_M^*=J^*=0$. All other equilibria are determined by the solutions of the nonlinear equation
\begin{equation}\label{squared:nonlineq} 
\hat R_e^2h(p(I_H^*),q)^2=1\,.
\end{equation}
Note that, since $\hat R_e>0$ and $h(\cdot, \cdot)\ge 0$, the analysis reduces to studying the equation
\begin{equation}\label{condh} 
h(p(I_H^*),q)=\cfrac{1}{\hat R_e}\,.
\end{equation}
We have the following result:
\begin{theorem}\label{EEteo}
Let \Cref{ass:aw} hold and $q\in [0, 1)$. There exists a positive equilibrium of \eqref{outbreak} if and only if
\begin{equation}\label{condEE}
	1+l<\hat R_e<1+\frac{l}{q},
\end{equation}
and
\begin{equation}
	p(0) < p^* < \lim_{x \to +\infty}p(x),
	\label{cond2_EE}
\end{equation}
where 
\begin{equation}\label{eta}
	p^*\coloneqq \cfrac{\hat R_e-(1+l)
	}{\hat R_e-1}\ \cfrac{1}{1-q}\ .
\end{equation}
Furthermore, if such an equilibrium exists, then it is unique  and is given by
\begin{equation}\label{EE} 
	I_H^*= p^{-1}(p^*), \quad I_M^* = \cfrac{\beta_{M\leftarrow H}}{\mu}\ h(p(I_H^*),q) I_H^*,\quad J^* =I_H^*,
\end{equation}
recalling that $p(x)=a(x)/\left(a(x)+w(x)\right)$.
\end{theorem}
\begin{proof}
Note that $0<p^*<1$ by condition \eqref{condEE}. Furthermore, from \Cref{ass:aw} it follows that $p(\cdot)$ is strictly increasing, and hence \eqref{cond2_EE} implies the existence of a unique $I_H^*$ such that $p(I_H^*)=p^*$.
A straightforward computation, recalling the definition \eqref{def_H}, shows that $p^*$ satisfies $h(p^*,q)=1/\hat R_e$, so that \eqref{condh} holds. Uniqueness then follows from the fact that $p^*$ is the only solution of $h(p,q)=1/\hat R_e$, together with the injectivity of $p(\cdot)$.
\end{proof}
\begin{remark}
Since $f(p):=h(p,q)$ is a decreasing function, \eqref{cond2_EE} is equivalent to
\begin{equation}
	\lim_{x \to +\infty}h(p(x),q)< h(p^*,q)=\cfrac{1}{\hat R_e}< h(p(0),q) \iff  \hat R_e \lim_{x \to +\infty}h(p(x),q) < 1 < \hat R_e h(p(0),q).
	\label{cond_syntEE}
\end{equation}
Since 
$ q/(q+l)=h(1,q) \le \lim_{x \to +\infty}h(p(x),q) $ and $h(p(0),q) < h(0,q) = 1/(1+l)$, it follows that \eqref{cond_syntEE} implies \eqref{condEE}. Therefore, Theorem \ref{EEteo} can equivalently be restated by saying that an endemic equilibrium of \eqref{outbreak} exists if and only if \eqref{cond_syntEE} holds.

Note that, while $J(t)$ defined in \eqref{VBsystemAP} satisfies $0\le J(t) \le 1$, as proved in Proposition \ref{prop_ex_uniq2}, $J(t)$ solution of \eqref{outbreak} is a rescaling of the former, and we can only assume $J(t) \ge 0$; this explains why we consider $\lim_{x \to +\infty}p(x)$ instead of $p(1)$ in \eqref{cond2_EE} and \eqref{cond_syntEE}.
\end{remark}
In the following, in accordance with~\cite{ando2025}, we refer to the non-trivial equilibrium in \eqref{EE} as the ``\emph{Established Equilibrium}'' (EE) rather than the ``\emph{Endemic Equilibrium}'', as it represents a stationary situation reached by the system only due to behavioural changes during an outbreak, rather than in a truly endemic setting. 

The two conditions of Theorem \ref{EEteo} show that the existence of the EE depends, on the one hand, on the relation between $\hat R_e$, $l$, and $q$, on the other hand, on the function $p(\cdot)$, which determines how humans react to information on prevalence. In particular, the more effective the protection, the less restrictive the upper bound in \eqref{condEE} for the existence of an EE. On the contrary, the larger $l$, the more restrictive the lower bound in \eqref{condEE}.
From this observation, for small $l\ne0$ and large $l/q$, we can expect the outbreak to persist for a longer time.
Note that this situation actually corresponds to a mosquito-borne epidemic in a region where high human density and low availability of alternative blood sources do not limit transmission. In particular, as some individuals adopt highly effective protective measures, mosquitoes concentrate their bites on unprotected individuals, thereby increasing the likelihood of a full transmission cycle(vector$\rightarrow$human$\rightarrow$vector).

Finally, we observe that, for $l=0$, one has $h\equiv 1$. Hence, we have the following:
\begin{corollary}
If $l=0$, then system \eqref{outbreak} admits the DFE only.   
\end{corollary}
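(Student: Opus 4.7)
The plan is to derive the corollary as a direct consequence of either \Cref{EEteo} or the equilibrium equation \eqref{squared:nonlineq}. The cleanest route goes through the structure of the function $h(p,q)=c(p,q)/(c(p,q)+l)$: setting $l=0$ annihilates the denominator correction, so that $h\equiv 1$ wherever $c(p,q)>0$. Everything then collapses to a one-line contradiction with the hypothesis $\hat R_e h(p(0),q)>1$ underlying \Cref{EEteo}.

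First, I would verify that $c(p(x),q)>0$ for every $x\ge 0$. Under \Cref{ass:aw}, both $a(x)$ and $w(x)$ are strictly positive, hence $p(x)=a(x)/(a(x)+w(x))\in(0,1)$, and since $q\in[0,1)$ we have $1-p(x)(1-q)>0$. With $l=0$ this gives $h(p(x),q)\equiv 1$ on $[0,+\infty)$. Substituting into the non-trivial branch of \eqref{squared:nonlineq} reduces it to $\hat R_e^{2}=1$. Since \Cref{EEteo} requires $\hat R_e h(p(0),q)>1$ for a positive equilibrium to exist, and this condition now reads $\hat R_e>1$, we reach a contradiction. Hence only the trivial solution $I_H^{*}=0$ of \eqref{nonlineareq} remains, and the second and third equations of \eqref{equilibria} immediately yield $I_M^{*}=0$ and $J^{*}=Z_i^{*}=0$, i.e., the DFE.

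As an equally short alternative, one could appeal directly to the threshold \eqref{condEE}: for $l=0$ and $q\in(0,1)$ the admissible window $1+l<\hat R_e<1+l/q$ collapses to the empty interval $(1,1)$, while for $q=0$ the auxiliary quantity $\eta=(\hat R_e-1)/(\hat R_e-1)\cdot 1/(1-q)=1$ violates the requirement $\eta\in(0,1)$ derived in the proof of \Cref{EEteo}. In either case no EE can exist. There is essentially no obstacle here: the corollary is a bookkeeping consequence of the dilution term vanishing, and the main thing to be careful about is simply excluding the degenerate boundary $p(x)=1$ (which is automatically ruled out by $w>0$).
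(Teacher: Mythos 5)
Your proposal is correct and follows essentially the same route as the paper, which simply observes that $l=0$ forces $h\equiv 1$ (since $c(p,q)>0$) so that the non-trivial branch of \eqref{squared:nonlineq} would require $\hat R_e=1$, incompatible with the outbreak condition $\hat R_e h(p(0),q)>1$; equivalently, the window \eqref{condEE} (or the requirement $\eta\in(0,1)$) becomes empty. Your write-up just makes the paper's one-line observation explicit, including the careful exclusion of the boundary case $p(x)=1$ via \Cref{ass:aw}.
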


\subsection{Stability of equilibria}
This section aims to investigate the local stability of the equilibria of \eqref{outbreak}. 
We do this by linearising \eqref{outbreak} around a given equilibrium $(\bar I_H,\ \bar I_M,\ \bar J)$ and analysing the sign of the real parts of the eigenvalues of the corresponding Jacobian. 
First, we show that the DFE is Locally Asymptotically Stable (LAS) when $\hat R_e h(p(0), q)<1$, while it is unstable when $\hat R_e h(p(0), q)<1$, independently of the choice of $K$ in \eqref{infindex} and \eqref{erlang}. Then, we consider the stability of the EE, showing that it depends on the specific form of the memory kernel. 
In particular, for selected choices of $K$, we show that the EE may either be LAS or lose its stability via Hopf, hence ensuring the possible emergence of self-sustained oscillations even in an outbreak scenario.

Let 
\begin{equation}\label{f=h}
f(x)\coloneqq h(p(x), q),\qquad  x\in [0,+\infty),
\end{equation}
so that $f'(x)= \partial_p h(p(x), q)\cdot p'(x)$. 
The linearisation of \eqref{outbreak} around an equilibrium $(\bar I_H,\ \bar I_M,\ \bar J)$ reads
\begin{equation}\label{lin:outbreak}
\left\{\setlength\arraycolsep{0.1em}
\begin{array}{rl} 
\dot{I}_H &= \beta_{H\leftarrow M}\rho S_Hf(\bar J)I_M+\beta_{H\leftarrow M}\rho\bar I_M S_Hf'(\bar J)J - \gamma  I_H  ,\\[3mm]
\dot{I}_M &= \beta_{M\leftarrow H} f(\bar J)I_H+\beta_{M\leftarrow H} f'(\bar J) \bar I_H J-\mu I_M,\\[3mm]
\dot Z_1 &=k I_H -k Z_1 \\[3mm]
\dot Z_i &= k Z_{i-1} - k  Z_i ,\qquad \qquad \qquad i=2,\dots, n-1,\\[3mm]
\dot J &=kZ_{n-1} -k  J.
\end{array} 
\right.
\end{equation}
Then, for $K$ as in \eqref{erlang}, by proceeding as in \Cref{app:char}, one obtains the following characteristic equation for the eigenvalues of the Jacobian relevant to \eqref{lin:outbreak}:
\begin{align}\label{gen_char}
\lambda^2+\lambda(\gamma+\mu)+\gamma\mu\left[1 -\hat R_e^2f^2(\bar J)\right]- \hat K(\lambda)f(\bar J)f'(\bar J)\gamma \hat R_e^2(\lambda + 2\mu) \bar I_H=0,\qquad \Re(\lambda)>-k,
\end{align}
where $\hat K$ denotes the Laplace transform of $K$, i.e.
\begin{equation}\label{laplace_erl}
\hat K(\lambda)\coloneqq\displaystyle\int_{0}^{+\infty}{\e}^{-\lambda\theta} K(\theta)\dd \theta=\left(\frac{k}{\lambda+k}\right)^n,\qquad \Re(\lambda)>-k.
\end{equation}
In the following sections, we specialise \eqref{gen_char} for the case of the DFE and the EE.

\begin{remark}
The derivation of \eqref{gen_char} is provided in \Cref{app:char}, where we apply the relevant theory for delay equations~\citep{diekmann2012delay} with, possibly, infinite delay~\citep{DiekmannGyllenberg2012Blending}. As a result, equation \eqref{gen_char} holds for any memory kernel $K$, not necessarily restricted to Erlang-distributed.
\end{remark}

\subsubsection{Stability of the DFE}
Let us consider the DFE $(\bar I_H, \bar I_M, \bar J)=(0, 0, 0)$. Then, the characteristic equation in \eqref{gen_char} becomes
\begin{equation}\label{chareq_DFE}
\lambda^2+\lambda(\gamma+\mu)+\gamma\mu\left[1 -\hat R_e^2f^2(0)\right]=0,\qquad \Re(\lambda)>-k.    
\end{equation}
Recalling equation~\eqref{f=h}, we have the following result.
\begin{theorem}
The DFE is LAS when $\hat R_eh(p(0), q)<1$ and unstable when $\hat R_eh(p(0), q)>1$, independently of $K$ (i.e. not restricted to the one in \eqref{erlang}). In particular, in the latter case, the characteristic equation \eqref{chareq_DFE} has exactly two real roots, one negative and one positive.
\end{theorem}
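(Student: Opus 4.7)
The plan is to evaluate the general characteristic equation \eqref{gen_char} at the DFE and exploit the fact that the term carrying the dependence on $K$ vanishes there. Setting $(\bar I_H,\bar I_M,\bar J)=(0,0,0)$ in \eqref{gen_char} kills the coefficient $\hat K(\lambda)f(\bar J)f'(\bar J)\gamma\hat R_e^2(\lambda+2\mu)\bar I_H$ because $\bar I_H=0$, so what remains is the quadratic
\begin{equation*}
P(\lambda):=\lambda^2+(\gamma+\mu)\lambda+\gamma\mu\left[1-\hat R_e^2 f(0)^2\right]=0,
\end{equation*}
valid on all of $\C$ (no pole from $\hat K$ to worry about). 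Since $f(0)=h(p(0),q)$, this is the equation whose sign of the constant term is governed exactly by the threshold $\hat R_e h(p(0),q)\lessgtr 1$.

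Next I would apply the Routh--Hurwitz criterion (or just directly inspect sum and product of the roots) to the quadratic $P$. The coefficient of $\lambda$ is $\gamma+\mu>0$, while the constant term $\gamma\mu[1-\hat R_e^2 f(0)^2]$ is positive when $\hat R_e h(p(0),q)<1$ and negative when $\hat R_e h(p(0),q)>1$. In the first case, both roots have negative real part, giving local asymptotic stability of the DFE. In the second case, the product of the roots is negative, so the two roots are real with opposite sign; the discriminant
\begin{equation*}
\Delta=(\gamma+\mu)^2-4\gamma\mu\left[1-\hat R_e^2f(0)^2\right]=(\gamma-\mu)^2+4\gamma\mu\hat R_e^2 f(0)^2>0
\end{equation*}
confirms they are real, yielding one negative and exactly one positive real root, hence instability and the sharper statement in the theorem.

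To conclude, I would note that the full linearised system \eqref{lin:outbreak} has, besides the two eigenvalues coming from $P$, an additional block arising from the Erlang chain $Z_1,\dots,Z_{n-1},J$. At the DFE this block is triangular with identical diagonal entries $-k$, contributing only the eigenvalue $-k$ with multiplicity $n$, all in the open left half-plane. These additional eigenvalues therefore do not affect stability, which justifies reading off LAS/instability directly from $P$. Since $\hat K$ never entered the quadratic, the conclusion is independent of the choice of kernel in \eqref{erlang}, as claimed.

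The argument is essentially a specialisation of \eqref{gen_char} plus Routh--Hurwitz for a quadratic, so there is no real obstacle; the only point requiring a small amount of care is the bookkeeping of the extra $n$ eigenvalues from the memory chain, to make sure they are in the open left half-plane and so do not spoil stability when $\hat R_e h(p(0),q)<1$.
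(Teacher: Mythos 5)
Your proof is correct and follows essentially the same route as the paper: specialise the characteristic equation at the DFE (where the kernel-dependent term drops out because $\bar I_H=0$) and read off the root signs of the resulting quadratic, the only cosmetic difference being that the paper invokes Descartes' rule of signs where you use Routh--Hurwitz together with the discriminant. Your explicit bookkeeping of the $n$ extra eigenvalues at $-k$ from the Erlang chain is a sound addition that the paper leaves implicit in the restriction $\Re(\lambda)>-k$.
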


\begin{proof}
Being $\gamma,\ \mu>0$, Descartes' rule of signs 
ensures that the DFE is stable when $1-\hat R_eh(p(0), q)>0$, while it is unstable when $1-\hat R_eh(p(0), q)<0$. \end{proof}
Comparing this condition with \eqref{cond_syntEE}, we see that, when the DFE is stable, no EE exists. Conversely, when the DFE is unstable, one additional condition is still needed to guarantee the existence of the EE.

\subsubsection{Stability of the EE}\label{sec:stab_EE}
Let us consider the EE defined as in \eqref{EE}.
Using \eqref{squared:nonlineq} and \eqref{laplace_erl}, the characteristic equation \eqref{gen_char} reads
\begin{equation}\label{char_gamma_div}
\lambda^2+\lambda(\gamma+\mu)+\left(\frac{k}{\lambda+k}\right)^n\gamma\delta\left(\lambda   + 2\mu \right)=0, \qquad \Re(\lambda)>-k,
\end{equation}
for $\delta\coloneqq -f'(I_H^*)\hat R_e I_H^*>0$. 
Note that, in this case, the stability of the equilibrium depends on the particular choice of $n$ and $k$. 
In particular, from \eqref{char_gamma_div}  we are led to study the roots of 
\begin{align}\label{CharEqGenErlang}
\lambda^2(\lambda+k)^n+\lambda(\gamma+\mu)(\lambda+k)^n+ \lambda (k^n\gamma \delta)+2 k^n\gamma \delta\mu=0.    
\end{align}
It is worth observing that, if $k\to +\infty$ and $n$ is fixed, one has $\varphi\coloneqq n/k\to0^+$ (that is, the information is instantaneous) and the characteristic equation reduces to 
\begin{equation}\label{char_eq_inst}
\lambda^2+\lambda\left(\gamma+\mu+ \gamma \delta\right)+ 2\mu\gamma \delta=0.
\end{equation}
This case corresponds to the case in which $K$ is a Dirac delta concentrated at $0$, so that $\hat K\equiv 1$. 
On the other hand, if $n,\ k\to +\infty$ in such a way that $\varphi\coloneqq n/k$ is constant, then the kernel concentrates at $t-\varphi$, i.e.\ the memory becomes discrete with delay $\varphi$. In this case, the characteristic equation takes the form
\begin{align*}
\lambda^2+\lambda(\gamma+\mu)+ \gamma\delta\lambda{\e}^{-\lambda\varphi}+\gamma\mu\delta{\e}^{-\lambda\varphi}=0.
\end{align*}
We consider the following cases: instantaneous information, exponentially fading memory, and Erlang-2 distributed memory.\\

\emph{\textbf{Instantaneous information}}: the characteristic equations read as in equation \eqref{char_eq_inst}.
Then, from \eqref{der:h}, the Routh-Hurwitz criterion implies that the EE is LAS whenever it exists.\\

\emph{\textbf{Exponentially fading memory}}: we assume that $K(\theta)=k{\e}^{-k\theta}$, corresponding to $n=1$ in \eqref{erlang}. Then, equation \eqref{CharEqGenErlang} reduces to 
\begin{equation}\label{char_exp}
\lambda^2(\lambda+k)+\lambda(\gamma+\mu)(\lambda+k)+ k\gamma\delta \left(\lambda + 2\mu \right)=0.
\end{equation}
Applying the Routh--Hurwitz criterion for third-order polynomials (see \Cref{comp_EE_exp} for details), we obtain the following result.
\begin{proposition}\label{propexp}
Let the assumptions of \Cref{EEteo} hold, and let $n=1$ in \eqref{erlang}. Then the EE is LAS if and only if
\begin{equation}\label{exp:condLAS}
k>-\frac{\gamma\delta(\gamma-\mu)+(\gamma+\mu)^2}{\gamma+\mu+\gamma\delta}.
\end{equation} 
If equality holds, the EE undergoes a Hopf bifurcation, and the characteristic equation \eqref{char_exp} has one negative real root and a pair of purely imaginary conjugate roots. If the opposite inequality holds, then the EE is unstable, and the characteristic equation \eqref{char_exp} has one negative real root and a pair of complex conjugate roots with positive real parts.
\end{proposition}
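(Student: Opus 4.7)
The approach is to expand the characteristic equation \eqref{char_exp} into a cubic in $\lambda$ and apply the Routh--Hurwitz criterion. Expanding \eqref{char_exp} yields
$$\lambda^3 + A_1 \lambda^2 + A_2 \lambda + A_3 = 0,$$
with
$$A_1 \coloneqq k + \gamma + \mu,\qquad A_2 \coloneqq k\bigl[(\gamma+\mu) + \gamma\delta\bigr],\qquad A_3 \coloneqq 2k\gamma\delta\mu.$$
Since $\gamma,\mu,k>0$ and $\delta>0$ (recall the footnote after \eqref{char_gamma_div} and \eqref{der:h}), all coefficients are strictly positive, so Descartes' rule of signs immediately excludes positive real roots, and since the cubic has odd degree, there is always at least one real root, necessarily negative (it cannot be $0$ since $A_3>0$).

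The first step is then to invoke the classical Routh--Hurwitz condition for a cubic with positive coefficients: the EE is LAS if and only if $A_1A_2>A_3$. A direct algebraic rearrangement (dividing by $k>0$, expanding, and isolating the terms linear in $k$) gives
\begin{equation*}
k\bigl[(\gamma+\mu)+\gamma\delta\bigr] \;>\; -\bigl[\gamma\delta(\gamma-\mu)+(\gamma+\mu)^2\bigr],
\end{equation*}
which, since $(\gamma+\mu)+\gamma\delta>0$, is equivalent to \eqref{exp:condLAS}. This handles the LAS half of the statement.

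For the threshold case $A_1A_2=A_3$, the cubic admits the factorisation
$$\lambda^3 + A_1\lambda^2 + A_2\lambda + A_1A_2 = (\lambda+A_1)(\lambda^2+A_2),$$
which can be verified by direct expansion. This yields exactly one negative real root $\lambda=-A_1<0$ and a conjugate pair of purely imaginary roots $\lambda=\pm i\sqrt{A_2}$, as claimed. To identify this as a genuine Hopf bifurcation (and to deduce the instability case), the plan is to verify transversality by implicit differentiation of the characteristic polynomial with respect to $k$ at the critical value: setting $\lambda(k)=\alpha(k)+i\omega(k)$ for the complex conjugate pair and differentiating in $k$ at criticality, one obtains a linear equation for $\alpha'(k)$ whose solution is nonzero (a brief computation using $\omega^2 = A_2$ suffices).

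Once transversality is established, continuity of the roots in $k$ and the fact that the real root remains bounded away from $0$ (since it is close to $-A_1$, which is continuous and strictly positive in $k$) give the instability case: for $k$ below the threshold in \eqref{exp:condLAS}, the real root stays negative and the complex pair crosses into $\Re(\lambda)>0$; moreover, no further crossings of the real root through zero can occur, since $A_3>0$ precludes $\lambda=0$ being a root. I do not expect any serious obstacle; the only mild care is in the algebraic rearrangement of $A_1A_2>A_3$ into the exact form \eqref{exp:condLAS} and in a clean verification of the transversality condition, both of which are routine.
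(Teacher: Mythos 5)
Your proposal is correct and follows essentially the same route as the paper: expand \eqref{char_exp} into a cubic with coefficients $A_1=k+\gamma+\mu$, $A_2=k(\gamma+\mu+\gamma\delta)$, $A_3=2k\gamma\mu\delta$, all positive by \eqref{der:h}, and apply the Routh--Hurwitz condition $A_1A_2>A_3$, which rearranges exactly to \eqref{exp:condLAS}. Your additional observations --- the factorisation $(\lambda+A_1)(\lambda^2+A_2)$ at equality and the transversality check (which does go through: at criticality $(k+\gamma+\mu)(\gamma+\mu+\gamma\delta)=2\gamma\delta\mu$ and one finds $\Re(\mathrm{d}\lambda/\mathrm{d}k)<0$ for the imaginary pair) --- are details the paper leaves implicit in its appeal to the Routh--Hurwitz trichotomy, and they correctly justify the root-counting and Hopf claims.
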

Since \eqref{der:h} holds and $k>0$, condition \eqref{exp:condLAS} is automatically satisfied whenever $\gamma+k\ge\mu$. In particular, this is always true if $\gamma\ge\mu$. This assumption is biologically reasonable, as the average human infectious period $1/\gamma$ is typically shorter than the average mosquito lifespan $1/\mu$. 
On the other hand, if $\gamma<\mu$, then condition \eqref{exp:condLAS} provides a (possibly positive) lower bound for $k$. This yields the following corollary.
\begin{corollary}
Let the assumptions of \Cref{EEteo} hold, and let $n=1$  in \eqref{erlang}.  If $\gamma\ge \mu$, then the EE is LAS for all $k>0$.
\end{corollary}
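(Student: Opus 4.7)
The plan is to read off the corollary directly from the stability condition \eqref{exp:condLAS} in Proposition \ref{propexp}, which characterizes LAS of the EE in the case $n=1$. Concretely, I would observe that the right-hand side of \eqref{exp:condLAS} can be rewritten as
\begin{equation*}
-\frac{\gamma\delta(\gamma-\mu)+(\gamma+\mu)^2}{\gamma+\mu+\gamma\delta},
\end{equation*}
and it suffices to show that under the assumption $\gamma\ge \mu$ this quantity is non-positive, so that the trivial inequality $k>0$ implies \eqref{exp:condLAS} automatically.

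The verification is elementary and amounts to a sign analysis. By the assumptions of Theorem \ref{EEteo} together with Assumption \ref{ass:aw} (which gives $f'<0$ via \eqref{der:h}) and the expression in the footnote, one has $\delta>0$. Moreover $\gamma,\mu>0$ by construction, hence the denominator $\gamma+\mu+\gamma\delta$ is strictly positive. For the numerator, the term $(\gamma+\mu)^2$ is strictly positive, while the assumption $\gamma\ge \mu$ makes $\gamma\delta(\gamma-\mu)\ge 0$. Therefore the numerator is strictly positive, and the whole fraction with its negative sign is strictly negative.

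Consequently, the threshold in \eqref{exp:condLAS} is negative under the assumption $\gamma\ge \mu$, and since any $k>0$ exceeds a negative number, the EE is LAS for every $k>0$, as claimed. No obstacle is expected here: the argument is a one-line application of Proposition \ref{propexp} together with the sign of $\delta$ provided by the earlier analysis; the only point to keep explicit is the non-negativity of $\delta$, which is inherited from \eqref{der:h} and the form of $\delta$ in the footnote.
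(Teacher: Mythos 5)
Your proof is correct and follows essentially the same route as the paper: both reduce the claim to checking that the Routh--Hurwitz threshold in \eqref{exp:condLAS} is negative when $\gamma\ge\mu$, using $\delta>0$ (from \eqref{der:h} and $I_H^*>0$) to control the signs of the numerator and denominator. The paper phrases the sign check on the unexpanded form $(\gamma+\mu+k)(\gamma+\mu)+\gamma\delta(\gamma+k-\mu)>0$, noting it holds whenever $\gamma+k\ge\mu$, but this is the same elementary observation.
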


\emph{\textbf{Erlang-2 distributed memory}}: we assume that $K$ follows an Erlang distribution of order 2, i.e. we take $n=2$ in~\eqref{erlang}. Thus, 
the equation \eqref{CharEqGenErlang} becomes 
\begin{align}\label{char:erl}
\lambda^2(\lambda+k)^2+\lambda(\gamma+\mu)(\lambda+k)^2+ k^2\gamma \delta\left(\lambda   + 2\mu \right) =0. \end{align}
Then, applying the Routh-Hurwitz criterion to a fourth-order polynomial (see \Cref{comp_EE_Erlang2} for the details), we obtain the following result.
\begin{proposition}\label{properlang}
Let the assumptions of \Cref{EEteo} hold, and let $n=2$ in \eqref{erlang}. Then, there exist $k_+, \,k_-\ge0$ with $k_+>k_-$ such that the EE is LAS for $k>k_+$, undergoes a Hopf bifurcation at $k=k_+$, and is unstable for $k\in (k_-, k_+)$. In particular, $k_+$ is the largest positive real root of
$$p(k)=\tilde A k^3+\tilde Bk^2+\tilde C k+\tilde D,$$
where
\begin{align*}
\tilde A&\coloneqq 2(\gamma+\mu+\gamma \delta)>0,\\
\tilde B&\coloneqq 4(\gamma+\mu)^2+\gamma\delta(3\gamma-5\mu-\gamma\delta),\\
\tilde C&\coloneqq 2(\gamma+\mu)[(\gamma+\mu)^2+\gamma\delta(\gamma-3\mu)],\\
\tilde D&\coloneqq 2\gamma\mu\delta(\gamma+\mu)^2<0,
\end{align*}
while $k_-$ is either $0$ or a positive root of $p(k)=0$ such that $p'(k_-)<0$.
\end{proposition}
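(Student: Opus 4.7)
The approach is to rewrite the characteristic equation \eqref{char:erl} as a fourth-degree polynomial and apply the Routh--Hurwitz criterion. Expanding $\lambda^2(\lambda+k)^2+\lambda(\gamma+\mu)(\lambda+k)^2+k^2\gamma\delta(\lambda+2\mu)$ by powers of $\lambda$ yields $\lambda^4+a_1\lambda^3+a_2\lambda^2+a_3\lambda+a_4=0$, with $a_1=2k+\gamma+\mu$, $a_2=k^2+2k(\gamma+\mu)$, $a_3=k^2(\gamma+\mu+\gamma\delta)$ and $a_4=2k^2\gamma\mu\delta$. Since $\delta>0$ by the footnote after \eqref{char_gamma_div}, all four coefficients are strictly positive for $k>0$, so the necessary Routh--Hurwitz conditions $a_i>0$ are automatic, and local asymptotic stability reduces to the single inequality $a_1a_2a_3-a_3^2-a_1^2a_4>0$.

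The next step is to compute $a_1a_2a_3-a_3^2-a_1^2a_4$ explicitly and observe that every monomial in $k$ that appears is of degree at least two, so the whole expression is divisible by $k^2$. Dividing out this factor yields a cubic in $k$ which, after collecting powers, matches the polynomial $p(k)=\tilde A k^3+\tilde B k^2+\tilde C k+\tilde D$ of the statement. Two features of $p$ are immediate from this computation: the leading coefficient $\tilde A=2(\gamma+\mu+\gamma\delta)$ is positive, while the constant term evaluates to $-2\gamma\mu\delta(\gamma+\mu)^2<0$, so Bolzano's theorem gives at least one positive real root of $p$. We then set $k_+$ equal to the largest such root.

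The stability dichotomy follows from the sign of $p(k)$: for $k>k_+$ we have $p(k)>0$, so all four roots of the characteristic polynomial lie in the open left half-plane and the EE is LAS; for $k$ slightly below $k_+$ one has $p(k)<0$, meaning that exactly one complex-conjugate pair of eigenvalues has crossed into the right half-plane. To upgrade the crossing at $k=k_+$ to a Hopf bifurcation one verifies the transversality condition $\left.\Re(\mathrm{d}\lambda/\mathrm{d}k)\right|_{k=k_+}\ne 0$ by implicitly differentiating the quartic at a purely imaginary root; this reduces to $p'(k_+)\ne 0$, which holds generically since $k_+$ is the largest positive root of a cubic with positive leading coefficient, so $p'(k_+)\geq 0$ with strict inequality at a simple root. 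Finally $k_-$ is defined as the next positive real root of $p$ below $k_+$ at which $p$ changes sign back from positive to negative (equivalently $p(k_-)=0$ with $p'(k_-)<0$), so that $p<0$ throughout $(k_-,k_+)$; if no such root exists in $(0,k_+)$ we set $k_-=0$.

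The main technical obstacle is the algebraic bookkeeping required to reduce $a_1a_2a_3-a_3^2-a_1^2a_4$ exactly to $k^2 p(k)$ with the stated coefficients $\tilde A,\tilde B,\tilde C,\tilde D$: the calculation is elementary but error-prone, which is presumably why the authors defer it to \Cref{comp_EE_Erlang2}. A secondary subtlety is establishing transversality at $k_+$: it is automatic unless $k_+$ happens to be a double root of $p$, a non-generic situation corresponding to a higher-codimension degeneracy that need not be treated here.
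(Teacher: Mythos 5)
Your proposal is correct and follows essentially the same route as the paper: expand \eqref{char:erl} into a quartic with positive coefficients, reduce the Routh--Hurwitz conditions to the single determinant inequality $a_1a_2a_3-a_3^2-a_1^2a_4>0$, factor out $k^2$ to obtain the cubic $p(k)$, and use $p(0)<0$ together with the positive leading coefficient to locate $k_+$ and $k_-$. You are in fact slightly more careful than the paper in spelling out the Hopf transversality condition via $p'(k_+)\ne 0$, and your constant term $-2\gamma\mu\delta(\gamma+\mu)^2<0$ is the correct one (the sign printed in the statement of $\tilde D$ is a typo, as the accompanying inequality $\tilde D<0$ confirms).
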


\begin{remark}
Note that, once the epidemiological parameters $\beta_{H\leftarrow M},\, \beta_{M\leftarrow H},\, \rho,\, \gamma,$ and $\mu$ are fixed, the thresholds $k_-$ and $k_+$ are fully determined by $\delta$, which itself depends on $l$, $q$, $a$, and $w$.
\end{remark}

In this section, we showed that a sufficiently large delay $\varphi$ can destabilise the EE via a Hopf bifurcation. This suggests that self-sustained epidemic oscillations may arise even in the absence of additional mechanisms such as demography or waning immunity. 
Here, we restricted the analysis to Erlang-distributed memory kernels with shape parameter $n=1$ or $n=2$; see \eqref{erlang}.
As noted above, the analysis may be extended to more general kernels, although this would likely result in considerably more involved computations~\citep{ando2025, zhang2023renewal}.

\subsection{Numerical results}

\begin{figure}[hbt!]
\begin{center}
\includegraphics[width=1.\linewidth]{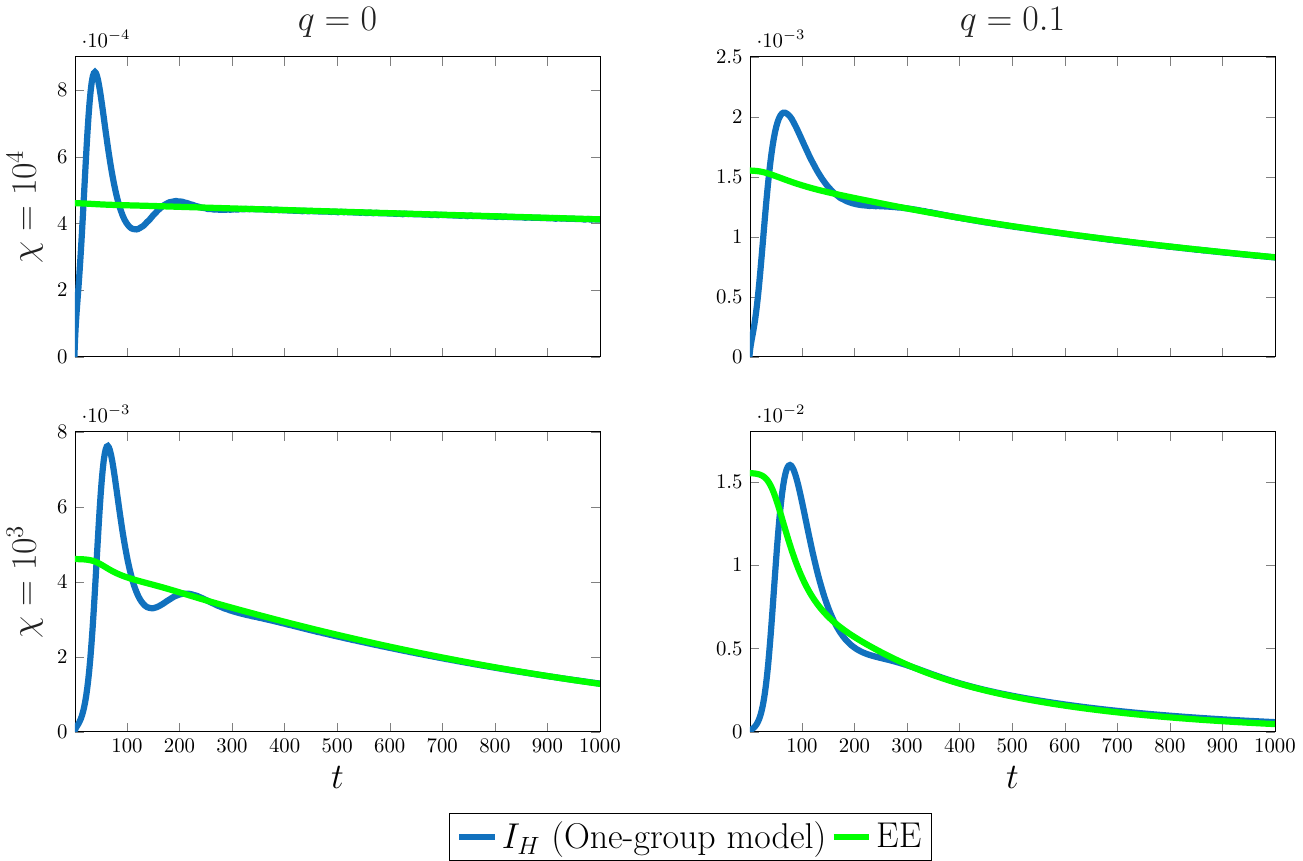}
\caption{Time evolution of $I_H$ (blue) for model \eqref{slowINFsystem} (one-group model) and EE (green) as a function of $S_H$. Model parameters and initial conditions are as in \Cref{sec:numerics}, with $l=0.25$ and $\rho =2$. The upper row corresponds to $\chi =10^4$, the lower row to $\chi =10^3$; the left panels correspond to $q=0$ and the right panels to $q=0.1$. We take $n=1$ and $\varphi=30$ days in \eqref{erlang}.
	\label{fig:one_group_EE_exp_phi_30}}
	\end{center}
\end{figure}

\begin{figure}[hbt!]
\begin{center}
\includegraphics[width=1.\linewidth]{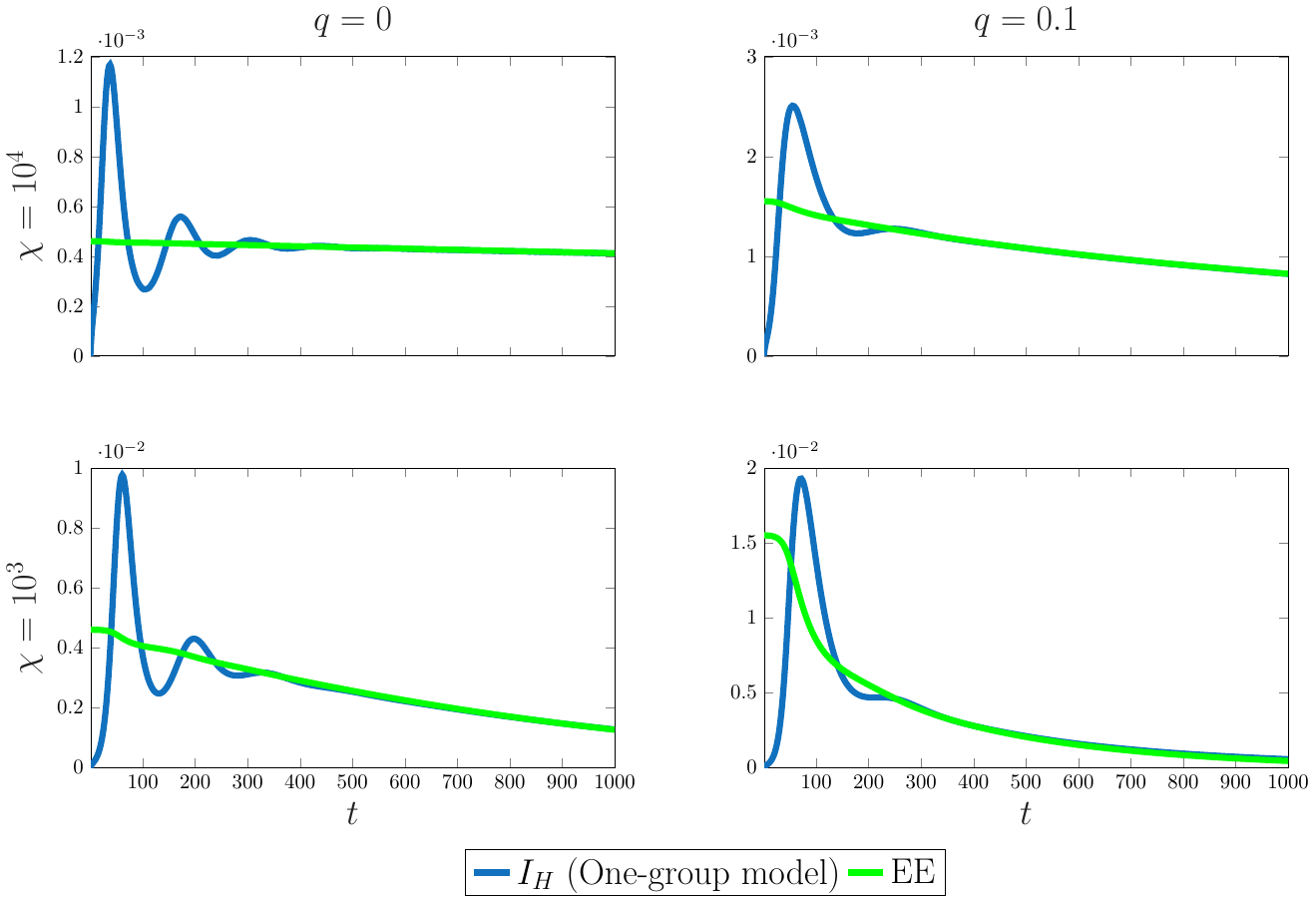}
\caption{Time evolution of $I_H$ (blue) for model \eqref{slowINFsystem} (one-group model) and EE (green) as a function of $S_H$. Model parameters and initial conditions are as in \Cref{sec:numerics}, with $l=0.25$ and $\rho =2$. The upper row corresponds to $\chi =10^4$,  the lower row to  $\chi =10^3$; the left panels correspond to $q=0$ and the right panels to $q=0.1$. We take $n=2$ and $\varphi=30$ days in \eqref{erlang}. 
	\label{fig:one_group_EE_erl2_phi_30}}
	\end{center}
\end{figure}

\begin{figure}[hbt!]
\begin{center}
\includegraphics[width=1.\linewidth]{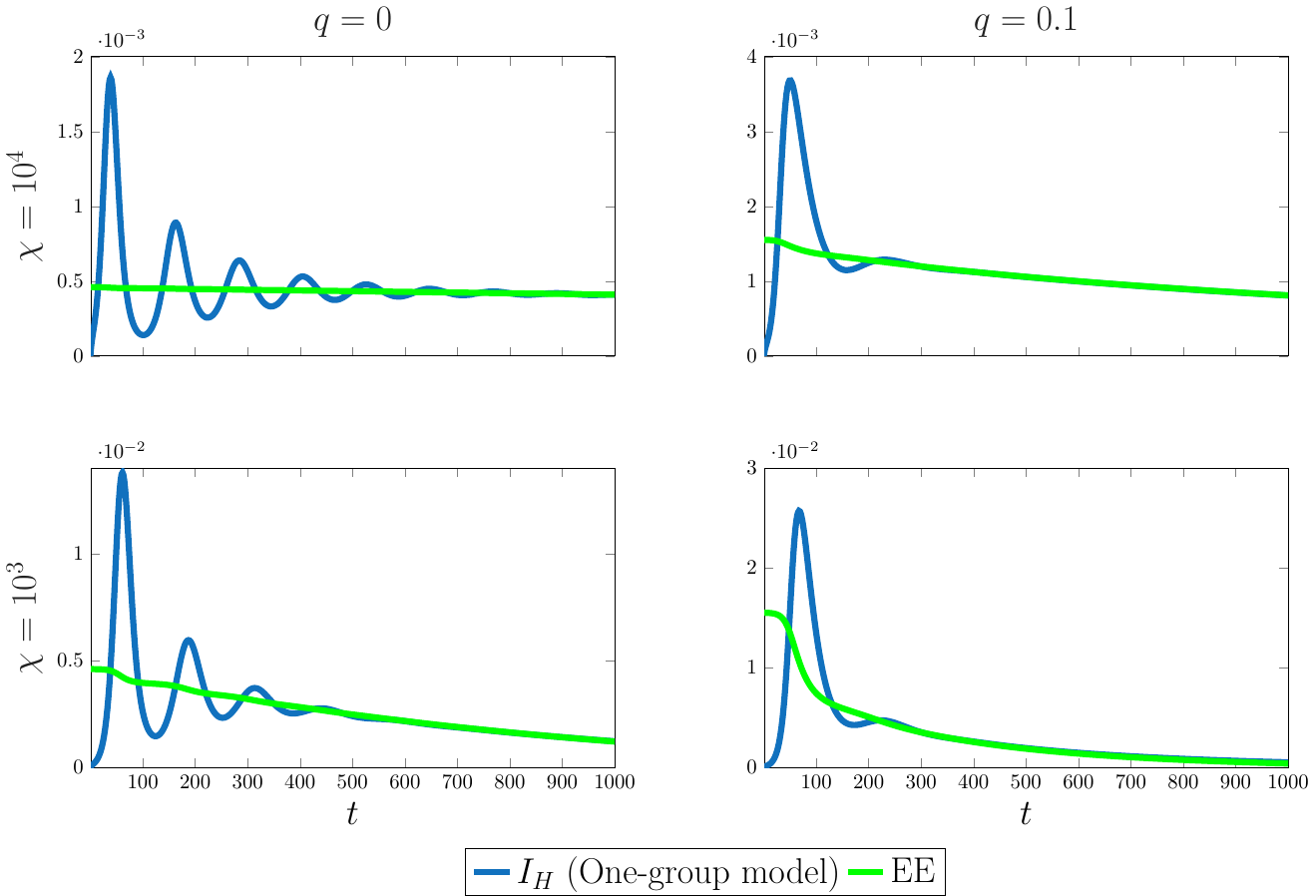}
\caption{Time evolution of $I_H$ (blue) for model \eqref{slowINFsystem} (one-group model) and EE (green) as a function of $S_H$. Model parameters and initial conditions are as in \Cref{sec:numerics}, with $l=0.25$ and $\rho =2$. The upper row corresponds to $\chi =10^4$,  the lower row to  $\chi =10^3$; the left panels correspond to $q=0$ and the right panels to $q=0.1$. We take $n=5$ and $\varphi=30$ days in \eqref{erlang}. 
	\label{fig:one_group_EE_erl5_phi_30}}
	\end{center}
\end{figure}

\begin{figure}[t]
\begin{center}
\includegraphics[width=.9\linewidth]{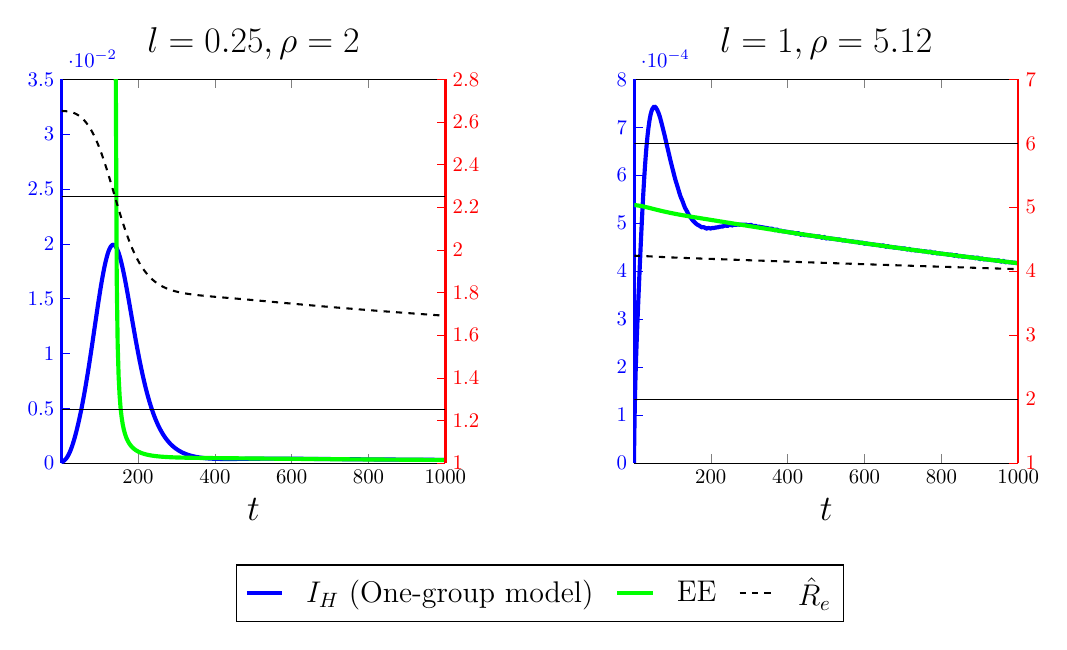}
\caption{Time evolution of $I_H$ (solid blue) for model \eqref{slowINFsystem} (one-group model), EE (solid green) and $\hat R_e$ (dashed black) as functions of $S_H$. Model parameters and initial conditions are as in \Cref{sec:numerics}, with $l=0.25$ and $\rho =2$ (left), $l=1$ and $\rho =5.12$ (right). We take $\chi =10^4$,
	$q=0.2$, with $n=1$ and $\varphi=30$ days in \eqref{erlang}. For each panel, the left vertical axis (blue) corresponds to $I_H$ and EE, while the right vertical axis (red) corresponds to $\hat R_e$. The upper and lower solid black horizontal lines represent the thresholds $1+l$ and $1+l/q$ in \eqref{condEE}, respectively. Note that, in the left panel, the EE only exists once $\hat R_e$ crosses the threshold value identified in Theorem \ref{EEteo}.
	\label{fig:one_group_q=0.2}}
	\end{center}
\end{figure}

In this section, we validate the results of \Cref{Sec:lowattackratio} by comparing the outputs of model \eqref{fastvarreduced} with those of \eqref{outbreak}. We use the same parameter values as in \Cref{sec:numerics}, with  $\varphi =30$ days. In Figures \ref{fig:one_group_EE_exp_phi_30}--
\ref{fig:one_group_EE_erl5_phi_30}, we show the time evolution of $I_H$  (blue) and the EE as a function of $S_H$ (green, with $S_H$ varying in time, see \eqref{EE}) for $n=1,\,2,\,5$, respectively. We fix $l=0.25$ and $\rho =2$, and consider $q=0$ (left panels) and $q=0.1$ (right panels). For each case, we compare the model outputs for $\chi =10^4$ (upper row) and $\chi =10^3$ (lower row).  

In all cases, the solution $I_H$ of model \eqref{fastvarreduced} approaches the EE predicted by model \eqref{outbreak}, which can be shown to be LAS for these parameter values whenever it exists. 
In particular, for $q=0$, increasing $n$ leads to more pronounced oscillations around the EE before the solution ultimately collapses onto this manifold, for both $\chi =10^4$ and $\chi =10^3$. These oscillatory behaviours are instead much less pronounced (and nearly absent) for the case $q=0.1$.

We now recall that, in \Cref{sec:numerics}, we observed that in all experiments in Figures \ref{fig:one_group_EE_exp_phi_30}-- 
\ref{fig:one_group_EE_erl5_phi_30}, the choice $q=0.2$ with $l=0.25$ and $\rho =2$ led to larger outbreaks compared to the other cases (except for $q=0.5$), and no ``quasi steady state'' was observed. To explain this, in \Cref{fig:one_group_q=0.2} we plot  $I_H$ (blue) and the EE (green) as functions of time (with EE depending on $S_H$), together with $\hat R_e$ as a function of $S_H$ (dashed line; see \eqref{Re}), for $\chi =10^4$. We consider both $l=0.25$, $\rho=2$ (left panel) and  $l=1$, $\rho =5.12$ (right panel). 
In the former case, at the beginning of the outbreak, $\hat R_e$ exceeds the upper threshold $1+l/q$ in \eqref{condEE}, which is a necessary condition for the existence of the EE (in the figure, the solid black horizontal lines represent the thresholds $1+l$ and $1+l/q$). As a consequence, a larger outbreak occurs, reaching its peak when $\hat R_e=1+l/q$, after which the epidemic declines and eventually dies out.
In contrast, for $l=1$ and $\rho=5.12$, the condition $\hat R_e<1+l/q$ holds, ensuring the existence of the EE for model \eqref{outbreak}. In this case, the equilibrium is LAS, and consequently, the solution $I_H$ converges to it.

\section{Conclusions and outlook}\label{sec:conclusions}

In this paper, we investigated the impact of protective behaviour on an SIR-SI host--vector compartmental model. We focused on single-outbreak scenarios (thus neglecting host demography and waning immunity) so that the systems under study admit only the DFE as a non-trivial equilibrium. Nevertheless, we were able to characterise the transient dynamics under various modelling assumptions.

In Section \ref{sec:model}, we first analysed a model in which individuals adopt a fixed behaviour (either protecting or not protecting themselves from mosquito bites). Such models have been extensively studied in the literature~\citep{dye1986,dye1988,miller2013effects}, where it has been shown that imperfect or partial protection strategies may increase the basic reproduction number $R_0$, thereby enhancing the risk of an outbreak. In contrast, our analysis (Proposition \ref{propineqRatio}) shows that imperfect protection in a fraction of the population may decrease $R_0$ whenever a function $F(p,q)$, depending on the fraction $p$ of protected individuals and on the protection leakage $q$, exceeds the parameter $l$, representing the fraction of mosquito bites on non-human hosts. To the best of our knowledge, this result is new, although consistent with previous findings. For instance, in~\cite{miller2016risk}, threshold conditions on $p$ were derived to determine whether protective behaviour increases or decreases $R_0$. In particular, $R_0$ increases when protection diverts mosquito bites towards non-protected individuals~\citep{killeen2007exploring, moore2007mosquitoes}, whereas it may decrease if protective measures force mosquitoes to spend additional time attempting to bite protected hosts, indirectly benefiting non-protected individuals. This interpretation is consistent with viewing $l$ as a measure of the spatial separation between hosts (see \Cref{remark_no_handling}).

We then considered a model in which behavioural changes depend on information about the epidemic. In this setting, we proved (\Cref{demers}) that the reproduction number is smaller when individuals switch between protected and unprotected behaviour than in the case of fixed behaviour, for the same values of $p$ and $q$. While this effect had been observed numerically in~\cite{demers2018dynamic}, a rigorous proof of \eqref{RcminR0} was, to our knowledge, not previously available. Biologically, this result shows that the concentration of mosquito bites on non-protected individuals (see~\cite{dye1986, miller2016risk, miller2013effects}) is mitigated when individuals switch between protected and unprotected behaviour, even when such changes occur at low rates~\citep{demers2018dynamic}.

When it is assumed that the rate of behavioural changes is much faster than epidemic dynamics (a common assumption in this class of models; see~\cite{bulai2024geometric, della2024geometric, poletti2009spontaneous}), it is possible to separate the time scales of the two processes by applying methods from geometric singular perturbation theory. This leads to the epidemic model \eqref{slowINFsystem}, which includes the information index while treating the host population as homogeneous.
Note that \cite{buonomo2014modeling} proposed an SIR--SI model similar to \eqref{slowINFsystem} in an endemic setting for a malaria-like infection, accounting for human behavioural responses to public health campaigns promoting the use of bed nets. In that work, the mosquito biting rate (interpreted as the human--mosquito contact rate) is assumed to depend on past prevalence through an information index, as well as on an \emph{effort} function describing the actions of the PHS in the health-promotion campaign. However, the analysis in~\cite{buonomo2014modeling} mainly focuses on numerical results and optimal control aspects for specific choices of the information kernel, with less emphasis on a detailed analytical study of the model dynamics.

Our analysis of model \eqref{slowINFsystem} (in Section \ref{Sec:lowattackratio}) shows that the inclusion of information-induced protective behaviour may lead to prolonged epidemic waves. After the initial peak, solutions rapidly approach a slow manifold along which the infected population decays very slowly. Numerical simulations confirm this behaviour, showing that an outbreak may last for a very long time, even in the absence of host demography and recruitment of new susceptibles. Moreover, numerical simulations indicate that convergence to this slow manifold persists even when the assumption of a clear separation of time scales is relaxed. This suggests that prolonged outbreaks are an intrinsic feature of information-dependent behavioural models. We stress, however, that the model assumes constant parameters; in realistic settings, seasonal effects (e.g., temperature-driven mosquito dynamics) may significantly shorten the duration of an outbreak.

Based on our analytical and numerical results, several extensions of the model can be considered. A natural first step would be to include demography and/or loss of immunity in the human population, which would allow for the existence of an Endemic Equilibrium under standard assumptions on the parameters. Another possible extension is to incorporate seasonality, for instance by accounting for seasonal fluctuations in the mosquito population (e.g., due to temperature and humidity); see, for instance, \cite{rocha2016}.
However, in this setting, even the computation of the BRN and CRN becomes more involved, as they are defined as the spectral radius of an infinite-dimensional next-generation operator rather than of a next-generation matrix (NGM)~\citep{Bacaer2006, inaba2019}, and typically require numerical approximations; see, e.g., \cite{Bacaer2007, bredadereggiripollR0}.
Additionally, it would be interesting to investigate models evolving on three time scales: a fast one associated with the spread of information, an intermediate one corresponding to disease dynamics, and a slow one related to human demography and/or loss of immunity.

Finally, we note that individual behaviour depends on personal opinions about the infection, which may in turn be influenced by information on the epidemic as well as by social interactions. Several authors have recently investigated the interplay between opinion dynamics and epidemic spread through mathematical models~\citep{Albi2025,Chang2025,Tyson2020}.
Without explicitly introducing a variable describing opinion, an extension of the behavioural model \eqref{modelbehH} might be obtained by including imitation-dynamics terms of the form $\theta_P(J)H_{NP}p$ and $\theta_{NP}(J)H_{P}(1-p)$, which describe a contagion of ideas among host individuals~\citep{bauch2005, manfredi2013book, WangStatistical}. Such contagion may be driven by individuals’ knowledge of past epidemics~\citep{della2024geometric}, or by payoff considerations~\citep{poletti2009spontaneous}. Accordingly, the equation for $H_P$ (and correspondingly for $H_{NP}$) takes the form
\begin{equation*}
H_P'=\left[a(J) + \theta_P(J)p\right]H_{NP} - \left[w(J)+\theta_{NP}(J)(1-p)\right]H_{P}.
\end{equation*}
In the context of vector-borne epidemics, similar approaches have been adopted in~\cite{asfaw2018impact} and, more recently, in~\cite{laxmi2022} to model the effect of bed-net usage in malaria-like diseases. The latter, in particular, focuses on both the use and misuse of Insecticide-Treated Nets (ITNs).
Both works deal with endemic scenarios: the former is based on imitation game dynamics with information-free behavioural changes, while the latter derives a related model in which behavioural changes are driven by payoff considerations linked to infection risk, which may depend on information about current prevalence, mosquito density, and seasonal effects (e.g., periodic replacement of bed nets). 
While~\cite{asfaw2018impact} provide analytical results on the model dynamics,~\cite{laxmi2022} mainly focus on reproduction numbers (and related considerations on optimal ITN usage) and numerical simulations, while also showing that information-induced imitation dynamics may generate recurrent epidemic waves. Interestingly,~\cite{laxmi2022} also discuss applications to malaria control in several African countries.
In the present work, we did not consider such mechanisms. Nevertheless, the approach developed here could be extended to models incorporating evolutionary game-like dynamics, although this would inevitably lead to more involved computations (see, e.g.,~\cite{della2024geometric}). This will be the subject of future work, together with the inclusion of additional factors relevant to the dynamics and control of mosquito-borne epidemics, and applications of the proposed behavioural models to data-informed real-world scenarios.

\section*{Acknowledgements}
This work was supported by the project ``One Health Basic and Translational Actions Addressing
Unmet Needs on Emerging Infectious Diseases'' (INF-ACT), BaC ``Behaviour and sentiment monitoring and modelling for outbreak control/BEHAVE-MOD'' (No. PE00000007, CUP I83C22001810007)
funded by the NextGenerationEU.
The authors are members of the \emph{Unione Matematica Italiana} (UMI) group ``\emph{Modellistica Socio-Epidemiologica}'' (UMI-MSE) and of the following groups of the \emph{Istituto Nazionale di Alta Matematica} (INdAM):  ``GNCS -- \emph{Gruppo Nazionale per il Calcolo Scientifico}'' (SDR), ``GNAMPA -- \emph{Gruppo Nazionale per l’Analisi Matematica e le sue Applicazioni}” (AP), and ``GNFM -- \emph{Gruppo Nazionale per la Fisica Matematica}'' (MS, CS).

\section*{Declarations}

\noindent \textbf{Data Availability} The paper does not analyse any data. The simulations are obtained using MATLAB 2025b. Relevant programs can be requested from the
authors.

\begin{appendices}

\section{Computation of \texorpdfstring{$\hat R_c$}{Rc}}
In this section, we provide additional details on the computation of $\hat R_c$ for model~\eqref{VBsystemAP} in \Cref{R0DynamicChanges}.
Observe that the inverse of $\hat \Sigma$ in \eqref{tranMat} is given explicitly by
\begin{equation*}
\hat \Sigma^{-1}\coloneqq 
\begin{pmatrix} 
	\dfrac{\gamma+a_0}{\gamma(a_0+w_0+\gamma)}  & a_0 & 0\\
	w_0 & \dfrac{\gamma + w_0}{\gamma(a_0+w_0+\gamma)}  & 0\\
	0 & 0 & \dfrac{1}{\mu} 
\end{pmatrix}.
\end{equation*}
Hence, the NGM $\hat K\coloneqq B(p,q)\hat \Sigma$, for $B$ defined as in \eqref{infection:matrix}, reads
\begin{equation*}
\hat K\coloneqq 
\begin{pmatrix} 
	0 & 0 & \dfrac{\rho\beta_{H\leftarrow M}}{\mu}\dfrac{qp_0}{c(p_0, q)+l}\\
	0 & 0 & \dfrac{\rho\beta_{H\leftarrow M}}
	{\mu}\dfrac{1-p_0}{c(p_0, q)+l}\\
	\dfrac{\beta_{M\leftarrow H}[q(a_0+\gamma)+w_0]}{\gamma(a_0+w_0+\gamma)[c(p_0, q)+l]} & \dfrac{\beta_{M\leftarrow H}[qa_0+w_0+\gamma]}{\gamma(a_0+w_0+\gamma)[c(p_0, q)+l]} & 0
\end{pmatrix}.
\end{equation*}
The characteristic polynomial of $\hat K$ reads
$$p(\lambda)=\lambda^3-\lambda\left[\frac{qp_0[q(a_0+\gamma)+w_0]+(1-p_0)[qa_0+w_0+\gamma]}{a_0+w_0+\gamma}\right]\frac{\beta_{M\leftarrow H}\beta_{H\leftarrow M}}{\gamma\mu}\frac{\rho }{c(p_0, q)+l},$$ for $\lambda \in \C$. From this expression, the formula in \eqref{hatR0beh} can be readily recovered.

\section{Derivation of the characteristic equation }\label{app:char}
In this section, for the sake of generality, we rewrite \eqref{outbreak} using the integral formulation of $J$ in \eqref{infindex}. This allows us to derive the characteristic equations needed to analyse the stability of the equilibria of \eqref{outbreak} by means of Laplace transform techniques, as in~\cite{ando2025}.
Consider
\begin{equation}\label{pre_linearisation}
\left\{\setlength\arraycolsep{0.1em}
\begin{array}{rl} 
	\dot{I}_H &= \beta_{H\leftarrow M} \rho I_M S_Hh(p(J), q) - \gamma  I_H  ,\\[3mm]
	\dot{I}_M &= \beta_{M\leftarrow H} h(p(J),q) I_H -\mu I_M,\\[1mm]
	J(t) &=\displaystyle\int_0^{+\infty} I_H(t-\theta) K(\theta)\dd \theta,
\end{array} 
\right.
\end{equation}
with $K$ as in \eqref{erlang}.
Note that the above system couples two ODEs for $I_H $ and $I_M$ with a delay equation with infinite delay for $J(t)$, since it depends on the history $(I_H)_t(\theta):=I_H(t+\theta)$, for $\theta \in \mathbb{R}_{\leq 0}$.
For this model, the natural choice of the history space is $C_\omega(\R_{\le 0}, \R)\times \R\times \R$, where $C_\omega(\R_{\le 0}, \R)$ denotes the space of continuous functions $\psi\colon \R_{\le0}\to \R$ such that $\lim_{\theta\to-\infty} \omega(\theta)\psi(\theta)\to 0$ with $\omega(\theta):= {\e}^{\nu \theta}$ for some $\nu>0$ chosen a priori~\citep{DiekmannGyllenberg2012Blending}. In this setting, the existence and uniqueness of solutions of \eqref{pre_linearisation}  follow from standard results; see~\cite{DiekmannGyllenberg2012Blending}.
Observe that \eqref{outbreak} and \eqref{pre_linearisation} share the same equilibria and are equivalent from the point of view of the stability analysis, although they differ in the underlying choice of the state space.

Now, we investigate the stability of equilibria of \eqref{pre_linearisation} by applying the principle of linearised stability for equations with infinite delay~\citep{DiekmannGyllenberg2012Blending}. For $f$ defined as in \eqref{f=h}, the linearisation of \eqref{pre_linearisation}  reads
\begin{equation}\label{lin:outbreak2}
\left\{\setlength\arraycolsep{0.1em}
\begin{array}{rl} 
	\dot{I}_H &= \beta_{H\leftarrow M} \rho S_Hf(\bar J)I_M+\beta_{H\leftarrow M}\rho \bar I_M S_Hf'(\bar J)J - \gamma  I_H  ,\\[5mm]
	\dot{I}_M &= \beta_{M\leftarrow H} f(\bar J)I_H+\beta_{M\leftarrow H} f'(\bar J) \bar I_H J-\mu I_M,\\[3mm]
	J(t) &=\displaystyle\int_0^{+\infty}I_H(t-\theta) K(\theta)\dd \theta.
\end{array} 
\right.
\end{equation}
To derive a characteristic equation, we look for solutions of the form 
\begin{equation}\label{expsol}
\left(I_H(t), I_M(t), J(t)\right)=\left(v_H, v_M, v_J \right){\e}^{\lambda t},\qquad  v_H, v_M, v_J,\lambda \in \C,
\end{equation}
with $(v_H, v_M, v_J)\ne (0,0,0)$ and $\Re(\lambda)>-\nu$. Substituting \eqref{expsol} into \eqref{lin:outbreak2}, we obtain
\begin{equation*}
\left\{\setlength\arraycolsep{0.1em}
\begin{array}{rl} 
	\lambda v_H  &= \left[\beta_{H\leftarrow M} \rho S_H f(\bar J)\right] v_M +\left[\beta_{H\leftarrow M}\bar I_M \rho S_Hf'(\bar J)\right] v_J - \gamma   v_H  ,\\[5mm]
	\lambda v_M &= \left[\beta_{M\leftarrow H}\rho f(\bar J)\right]v_H+\left[\beta_{M\leftarrow H}\rho f'(\bar J) \bar I_H\right] v_J-\mu v_M,\\[3mm]
	v_J &=\displaystyle\int_0^{+\infty}  {\e}^{-\lambda\theta} K(\theta)v_H\dd \theta,
\end{array} 
\right.
\end{equation*}
from which we derive the characteristic equation
\begin{equation}\label{char:eq}
\det\left(\Delta(\lambda)\right)=0   
\end{equation}
for the characteristic matrix
\begin{equation*}
\Delta(\lambda)=    
\begin{pmatrix}
	\lambda + \gamma   &-\beta_{H\leftarrow M}\rho S_Hf(\bar J)  & -\beta_{H\leftarrow M}\bar I_M \rho S_Hf'(\bar J)\\[3mm]
	-\beta_{M\leftarrow H} f(\bar J) & 
	\lambda + \mu & -\beta_{M\leftarrow H}\bar I_H f'(\bar J)\\[3mm]
	-\hat K(\lambda) & 0 & 1
\end{pmatrix},
\end{equation*}
where $\hat K$ is defined as in \eqref{laplace_erl}.
Then, the characteristic equation \eqref{char:eq} takes the form
\begin{align}
\notag\ (\lambda+\gamma)(\lambda+\mu)-& \beta_{H\leftarrow M}\beta_{M\leftarrow H}\rho S_H f^2(\bar J)\\\label{step_char_f} -&\ \hat K(\lambda)f'(\bar J)\left[\beta_{H\leftarrow M}\beta_{M\leftarrow H}\rho S_H\bar I_H f(\bar J)+(\lambda+\mu)\beta_{H\leftarrow M}\rho \bar I_M S_H\right]=0.
\end{align}
Using $\hat R_e$ as defined in \eqref{Re}, we can rewrite  \eqref{step_char_f} as
\begin{align*}
\lambda^2+\lambda(\gamma+\mu)+\gamma\mu\left[1 -\hat R_e^2f^2(\bar J)\right] -\hat K(\lambda)f'(\bar J)\left[\gamma\mu\hat R_e^2 \bar I_Hf(\bar J)+(\lambda+\mu)\beta_{H\leftarrow M}\rho\bar I_MS_H \right]=0.
\end{align*}
Moreover, from \eqref{equilibria} we have
$\bar I_M = \cfrac{\beta_{M\leftarrow H}}{\mu} f(\bar J)\bar I_H.$
Hence, we finally obtain
\begin{align*}
\lambda^2+\lambda(\gamma+\mu)+\gamma\mu\left[1 -\hat R_e^2f^2(\bar J)\right]- \hat K(\lambda)f(\bar J)f'(\bar J)\gamma \hat R_e^2(\lambda + 2\mu) \bar I_H=0,\qquad \Re(\lambda)>-\nu.
\end{align*}
Note that, since the ODE model \eqref{outbreak} does not depend on $\nu$, the condition $\Re(\lambda)>-\nu$ can be replaced by $\Re(\lambda)>-k$ in the case of \eqref{outbreak}.

\section{Computations for the stability analysis of the EE}\label{comp_EE_exp}
In this section, we provide further details on the computations involved in the linear stability analysis of the EE in \Cref{sec:stab_EE}.

\subsection{Exponentially fading memory}\label{subsec:comp_EE_exp}
Assuming $n=1$ in \eqref{erlang}, we are led to study the solutions of the equation \eqref{char_exp}.
Expanding the terms, we obtain the cubic equation
$\lambda^3+A\lambda^2+B\lambda+C=0$, where
\begin{align*}
A&\coloneqq \gamma+\mu+k,\\
B&\coloneqq k(\gamma+\mu+\gamma\delta),\\
C&\coloneqq 2\mu k\gamma\delta.
\end{align*}
Note that $A>0$ and, from \eqref{der:h}, we have $B,\ C>0$. Hence, by the Routh--Hurwitz criterion, the EE is locally asymptotically stable if and only if $AB-C>0$.
This condition reads 
$$k(\gamma+\mu+k)(\gamma+\mu)+k(\gamma+\mu+k)\gamma \delta-2\mu k\delta\gamma >0,$$
which, since $k>0$, is equivalent to 
$(\gamma+\mu+k)(\gamma+\mu)+\gamma \delta(\gamma+k-\mu)>0$.
This inequality can be rewritten in terms of $k$ as in \eqref{exp:condLAS}.
Hence, in the case of an exponentially distributed kernel, the Routh--Hurwitz criterion shows that only three scenarios are possible, as summarised in \Cref{propexp}.

\subsection{Erlang-2 distributed memory}\label{comp_EE_Erlang2}
Assume that $n=2$ in \eqref{char_exp}.
Then, the characteristic equation takes the form \eqref{char:erl}. 
which can be rewritten as 
\begin{equation*}
\lambda^4+A\lambda^3+B\lambda^2+C\lambda+D=0
\end{equation*}
where
\begin{align*}
A&\coloneqq \gamma+\mu+2k,\\
B&\coloneqq k(2\gamma+2\mu+k),\\
C&\coloneqq k^2\left(\gamma+\mu+\gamma\delta\right),\\
D&\coloneqq 2\gamma\mu\delta k^2  .
\end{align*}
Note that $A,\ B>0$ and, by \eqref{der:h}, we have $C,\ D>0$. To apply the Routh--Hurwitz criterion to a fourth-order polynomial, it is necessary to verify the conditions
\begin{equation}\label{1:hurwitz}
BC-AD>0
\end{equation}
and
\begin{equation}\label{2:hurwitz}
ABC-A^2D-C^2>0.    
\end{equation}
Since $A>0$, condition in \eqref{2:hurwitz} is equivalent to
\begin{equation}\label{3:hurwitz}
BC-AD>\cfrac{C^2}{A},  
\end{equation}
which in particular implies \eqref{1:hurwitz}.
Hence, it suffices to verify \eqref{3:hurwitz}.

We compute
\begin{align*}
ABC&=k^3(\gamma+\mu+2k)(2\gamma+2\mu+k)(\gamma+\mu-\gamma\delta)\\[2mm]
&=2(\gamma+\mu-\gamma\delta)k^5+5(\gamma+\mu)(\gamma+\mu-\gamma\delta)k^4+2(\gamma+\mu)^2(\gamma+\mu-\gamma\delta)k^3, 
\end{align*}
while $C^2=(\gamma+\mu-\gamma\delta)^2k^4$ and $A^2D=-8\gamma\mu\delta k^4-8\gamma\mu\delta(\gamma+\mu) k^3-2\gamma\mu\delta(\gamma+\mu)^2 k^2$.
Therefore, condition \eqref{3:hurwitz} can be rewritten as
$$k^2(\tilde A k^3+\tilde Bk^2+\tilde C k+\tilde D)>0,$$
where $\tilde A,\ \tilde B,\ \tilde C,\ \tilde D$ are defined as in \Cref{properlang}.
From \eqref{der:h}, it follows that $\tilde A>0$ and $\tilde D<0$, while the signs of $\tilde  B$ and $\tilde  C$ depend on the choice of the model parameters.
Since $k^2>0$ for $k\in \R\setminus\{0\}$, the problem reduces to determining the values $k$ such as $p(k)>0$, where $p$ is defined as in \Cref{properlang}.
Observe that \begin{equation*}
p(0)=D<0\quad\text{and}\quad \lim_{k\to+\infty}p(k)=+\infty.
\end{equation*}
Hence, there exists $k_+,\ k_->0$ such that $p(k)>0$ for $k>k_+$ and $p(k)<0$ for $k\in (k_-, k_+)$. The uniqueness of $k_+$ in $(0, +\infty)$ depends on the sign of $\tilde  B,\ \tilde C$. Applying the Routh--Hurwitz criterion then yields the result stated in \Cref{properlang}.
\end{appendices}

\bibliography{myreferences}

\end{document}